\let\doendproof\endproof
\renewcommand\endproof{~\hfill$\qed$\doendproof}
\newenvironment{sketch}{\noindent{\itshape Sketch of proof.}}{~\hfill$\qed$\doendproof\smallskip}
\spnewtheorem{clm}{Claim}{\bfseries}{\rmfamily}
\newcommand{\myparagraph}[1]{\smallskip\noindent\textbf{\boldmath #1}}
\definecolor{lightcyan}{rgb}{0.88,1,1}
\definecolor{antiquewhite}{rgb}{0.98, 0.92, 0.84}
\DeclareMathOperator\indeg{indeg}
\DeclareMathOperator\outdeg{outdeg}
\newcounter{casecounter}
\newcounter{subcasecounter}
\newcounter{subsubcasecounter}
\newcommand{\ccase}[2][]{%
	\stepcounter{casecounter}%
	\setcounter{subcasecounter}{0}%
	\protected@write \@auxout {}{\string \newlabel {#2}{{#1\thecasecounter}{\thepage}{#1\thecasecounter}{#2}{}} }%
	\hypertarget{#2}{\noindent\textbf{Case #1\thecasecounter.}}
}
\newcommand{\subcase}[2][]{%
	\stepcounter{subcasecounter}%
	\setcounter{subsubcasecounter}{0}%
	\protected@write \@auxout {}{\string \newlabel {#2}{{#1\thecasecounter.\thesubcasecounter}{\thepage}{#1\thecasecounter.\thesubcasecounter}{#2}{}} }%
	\hypertarget{#2}{\noindent\textbf{Case #1\thecasecounter.\thesubcasecounter.}}
}
\newcommand{\subsubcase}[2][]{%
	\stepcounter{subsubcasecounter}%
	\protected@write \@auxout {}{\string \newlabel {#2}{{#1\thecasecounter.\thesubcasecounter.\thesubsubcasecounter}{\thepage}{#1\thecasecounter.\thesubcasecounter.\thesubsubcasecounter}{#2}{}} }%
	\hypertarget{#2}{\noindent\textbf{Case #1\thecasecounter.\thesubcasecounter.\thesubsubcasecounter.}}
}
\newcommand{\Pio}[2]{\textcolor{blue}{$\mathrm{I}_{#1}\mathrm{O}_{#2}$}}
\newcommand{\Pin}[1]{\textcolor{blue}{$\mathrm{I}_{#1}$}}
\begin{document}
	\title{Rectilinear Planarity Testing of Plane Series-Parallel Graphs in Linear Time \thanks{Work partially supported by: $(i)$ MIUR, grant 20174LF3T8 ``AHeAD: efficient Algorithms for HArnessing networked Data'', $(ii)$ Engineering Dep., Univ.~Perugia, grant RICBA19FM: ``Modelli, algoritmi e sistemi per la visualizzazione di grafi e reti''.}}
	\author{Walter Didimo\inst{1},
		Michael Kaufmann\inst{2},
		Giuseppe Liotta\inst{1},
		Giacomo Ortali\inst{1}$^{\textrm{\Letter}}$
	}

	\date{}
	
	\institute{
		Universit\`a degli Studi di Perugia, Italy\\
		\email {\{walter.didimo,giuseppe.liotta\}@unipg.it, giacomo.ortali@studenti.unipg.it}
		\and
		University of T\"ubingen, Germany\\
		\email {mk@informatik.uni-tuebingen.de}
	}

	\maketitle
	
	%
\begin{abstract}
A plane graph is \emph{rectilinear planar} if it admits an embedding-preserving straight-line drawing where each edge is either horizontal or vertical. We prove that rectilinear planarity testing can be solved in optimal $O(n)$ time for any plane series-parallel graph $G$ with $n$ vertices. If $G$ is rectilinear planar, an embedding-preserving rectilinear planar drawing of $G$ can be constructed in $O(n)$ time. 
Our result is based on a characterization of rectilinear planar series-parallel graphs in terms of intervals of orthogonal spirality that their components can have, and it leads to an algorithm that can be easily implemented.

\keywords{Orthogonal drawings  \and Rectilinear planarity testing \and Series-parallel graphs.}
\end{abstract}

\section{Introduction}\label{se:intro}
A \emph{planar orthogonal drawing} $\Gamma$ of a planar graph $G$ is a crossing-free drawing of $G$ that maps each vertex to a distinct point of the plane and each edge to a sequence of horizontal and vertical segments between its end-points~\cite{DBLP:books/ph/BattistaETT99,DBLP:reference/crc/DuncanG13,DBLP:books/ws/NishizekiR04}. A graph is \emph{rectilinear planar} if it admits a planar orthogonal drawing~without~bends.

Testing whether a graph is rectilinear planar is a fundamental question in graph drawing.  The problem can be either studied for \emph{plane} graphs, that is graphs that come with a fixed embedding, or in the variable embedding setting, where the algorithm  can choose one of the planar embeddings of the input graph. Besides being an interesting topic on its own right, rectilinear planarity testing is at the core of efficient algorithms that compute orthogonal drawings with minimum number of bends. For example, Rahman et al.~\cite{DBLP:journals/jgaa/RahmanNN03} characterize the rectilinear plane 3-graphs (i.e., graphs with vertex degree at most three) and then use this characterization to design linear time bend-minimization algorithms for these graphs in the fixed embedding setting~\cite{DBLP:journals/jgaa/RahmanNN99,DBLP:conf/wg/RahmanN02}. On the other hand, Garg and Tamassia~\cite{DBLP:conf/gd/GargL99} prove that rectilinear planarity testing is NP-complete for planar 4-graphs in the variable embedding setting.  Remarkably, the study of rectilinear plane 3-graphs has turned out to be an essential tool to design linear-time rectilinear planarity testing and bend-minimization algorithms for planar 3-graphs in the variable embedding setting~\cite{DBLP:conf/soda/DidimoLOP20,DBLP:conf/cocoon/Hasan019}.

\smallskip In this paper we study rectilinear planarity testing in the fixed embedding setting. A seminal paper of Tamassia~\cite{DBLP:journals/siamcomp/Tamassia87} implies that in this setting the problem can be solved in $O(n^2 \log n)$, where $n$ is the number of vertices of the input graph; its approach is based on solving a min-cost flow network problem to compute a bend-minimum orthogonal drawing of the input graph. Since its time of publication, establishing a lower bound on the time complexity of computing bend-minimum orthogonal drawings of plane graphs has remained a fascinating open problem (see, e.g, ~\cite{DBLP:conf/gd/BrandenburgEGKLM03,DBLP:books/ph/BattistaETT99,dlt-gd-17}). Garg and Tamassia~\cite{DBLP:conf/gd/GargT96a} improve the complexity to $O(n^{\frac{7}{4}} \sqrt{\log n})$ and then Cornelsen and Karrenbauer~\cite{DBLP:journals/jgaa/CornelsenK12} further improve the upper bound to $O(n^{1.5})$. For rectilinear planarity testing, the approach in~\cite{DBLP:journals/siamcomp/Tamassia87} reduces to compute a maximum flow in an $n$-vertex planar network with multiple sources and sinks; Borradaile et al.~\cite{DBLP:journals/siamcomp/BorradaileKMNW17} prove that this problem can be solved in $O(n \log^3 n)$ time.    
Since, as already mentioned, an $O(n)$-time algorithm for rectilinear planarity testing is known when the input is a plane 3-graph, the challenge is to understand whether an $O(n)$-time bound exists for plane 4-graphs. 

This paper sheds some light on this question by answering it for series-parallel graphs. An essential aspect of our approach is to tackle the problem without using any network-flow computation.
%
%
Our results are as follows:

\smallskip\noindent $(i)$ We give a characterization of those plane series-parallel graphs (with two terminals $s$ and $t$) that are rectilinear planar. This characterization is expressed in terms of values of \emph{spirality} that each series or parallel component can have in a rectilinear drawing. Intuitively, the spirality of a component measures how much it can be ``rolled-up'' in a rectilinear drawing of the graph.

\smallskip\noindent $(ii)$ While the possible values of spirality for each component may be linear, we can encode them in constant space. This makes it possible to design a linear-time rectilinear planarity testing algorithm for a two-terminal series-parallel graph $G$ based on a bottom-up visit of its decomposition tree $T$. If the test is positive, we compute in linear time a rectilinear drawing of $G$ through a top-down visit~of~$T$. The algorithm is easy to implement.

\smallskip The paper is organized as follows. Section~\ref{se:preli} recalls basic concepts. Section~\ref{se:characterization} gives our characterization of rectilinear planar series-parallel graphs in terms of their orthogonal spirality. Section~\ref{se:rect-alg-sp} describes the linear-time testing and drawing algorithm. Section~\ref{se:open} lists some open problems. In the main text of the paper some proofs are sketched or omitted, and can be found in the Appendix.

Together with our submission to GD 2020, another paper by Frati~\cite{frati-2020} was accepted to the same conference. The work of Frati is based on a different technique and it presents an $O(n)$-time algorithm for rectilinear planarity testing of outerplanar graphs. While the result of~\cite{frati-2020} does not apply to the family of graphs that are studied in this paper, it covers the variable embedding setting and the case of 1-connected outerplanar graphs.

\section{Preliminaries}\label{se:preli}

\myparagraph{Orthogonal Representations.} We focus on \emph{orthogonal representations} rather than orthogonal drawings. An orthogonal representation $H$ describes the shape of a class of orthogonal drawings in terms of sequences of bends along the edges and angles at the vertices. An (orthogonal) drawing $\Gamma$ of $H$ can be computed in linear time~\cite{DBLP:journals/siamcomp/Tamassia87}. If $H$ has no bend, it is a \emph{rectilinear representation} (see Fig.~\ref{fi:ortho}). The \emph{degree} $\deg (v)$ of a vertex $v$ denotes the number of edges incident to $v$.

\myparagraph{Series-Parallel graphs and Decomposition Trees.} A \textit{two-terminal series-parallel} graph, also called \emph{series-parallel graph} in the rest of the paper, has two distinct vertices $s$ and $t$, called its \textit{source} and its \textit{sink}, respectively, and it is inductively defined as follows: $(i)$~A single edge $(s,t)$ is a series-parallel graph with source~$s$~and~sink~$t$.
$(ii)$~Given $p \geq 2$ series-parallel graphs $G_1, \dots, G_p$, each $G_i$ with source $s_i$ and sink $t_i$ ($i=1, \dots, p$), a new series-parallel graph $G$ can be obtained with any of these two operations: {\em Series composition} -- It identifies $t_i$ with $s_{i+1}$ ($i=1,\dots,p-1$); $G$ has source $s=s_1$ and sink $t=t_p$. {\em Parallel composition} -- It identifies all sources $s_i$ together and all sinks $t_i$ together; $G$ has source~$s=s_i$~and~$t=t_i$~($i=1, \dots, p$).

\begin{figure}[tb]
	\centering
	\subfigure[$G$]{%
		\centering
		\includegraphics[height=0.42\columnwidth]{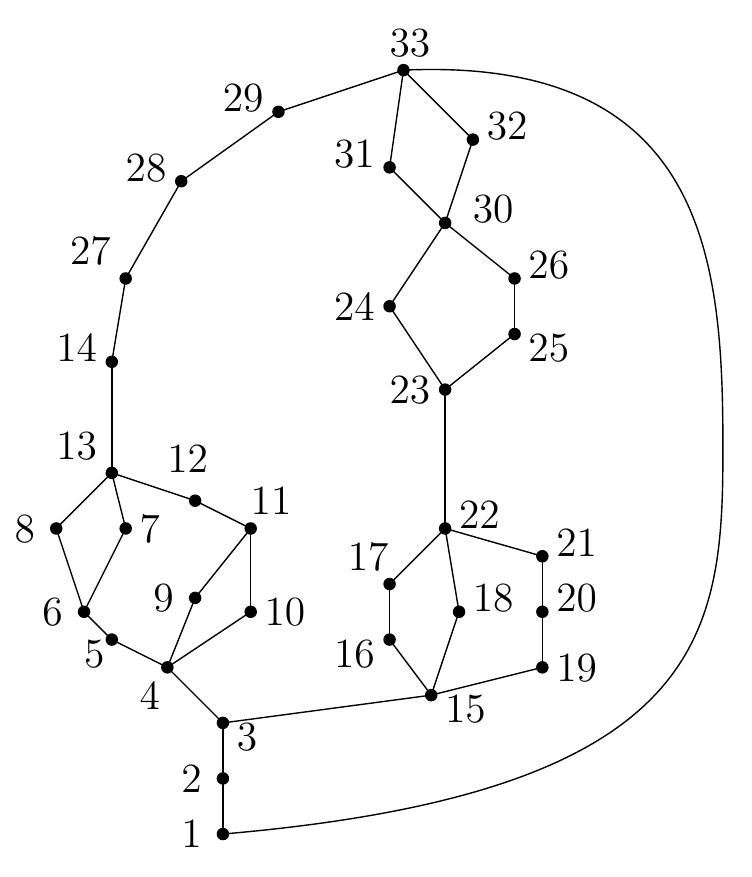}
		\label{fi:graph}
	}\hfil
	\subfigure[$H$]{%
		\centering
		\includegraphics[height=0.42\columnwidth]{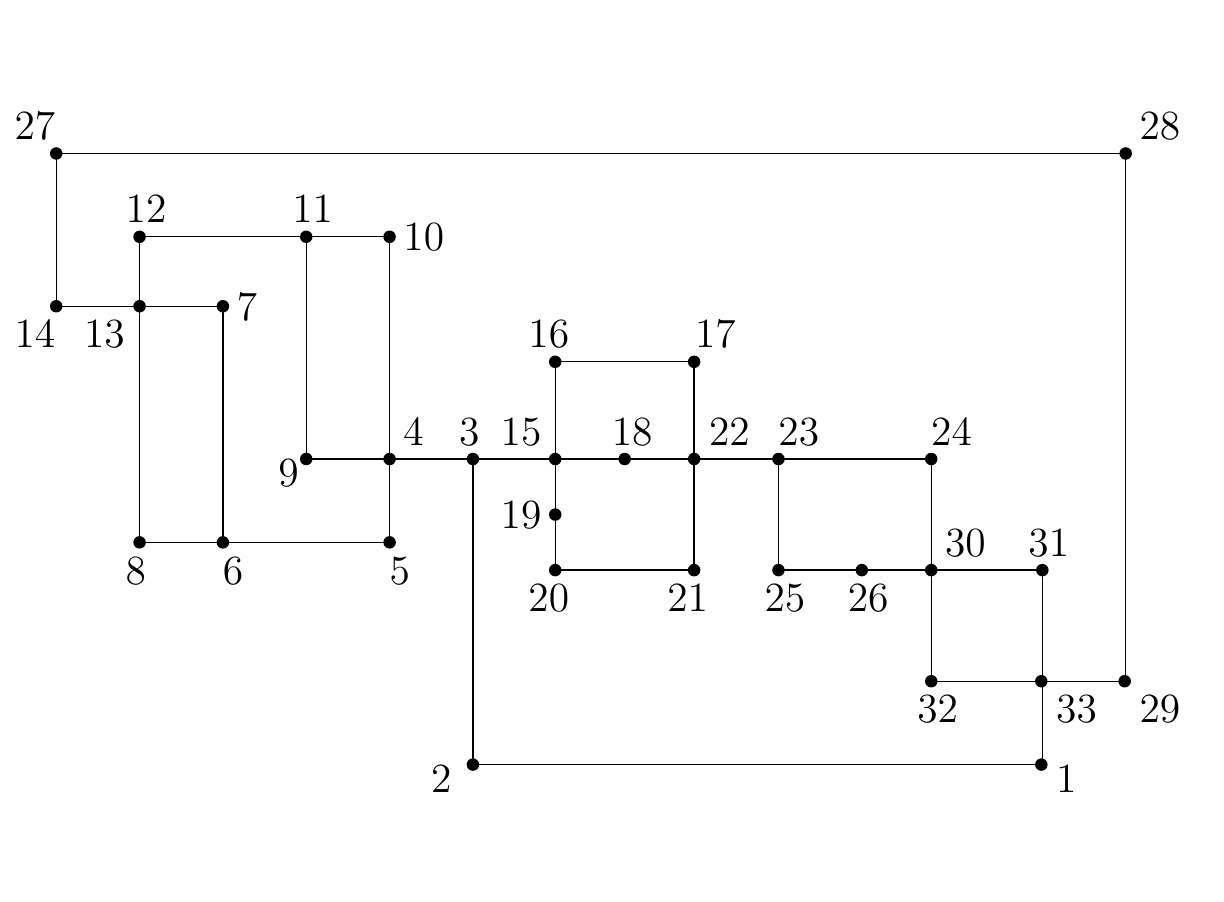}
		\label{fi:ortho}
	}\hfil
	\subfigure[$T$]{%
		\centering
		\includegraphics[width=0.97\columnwidth]{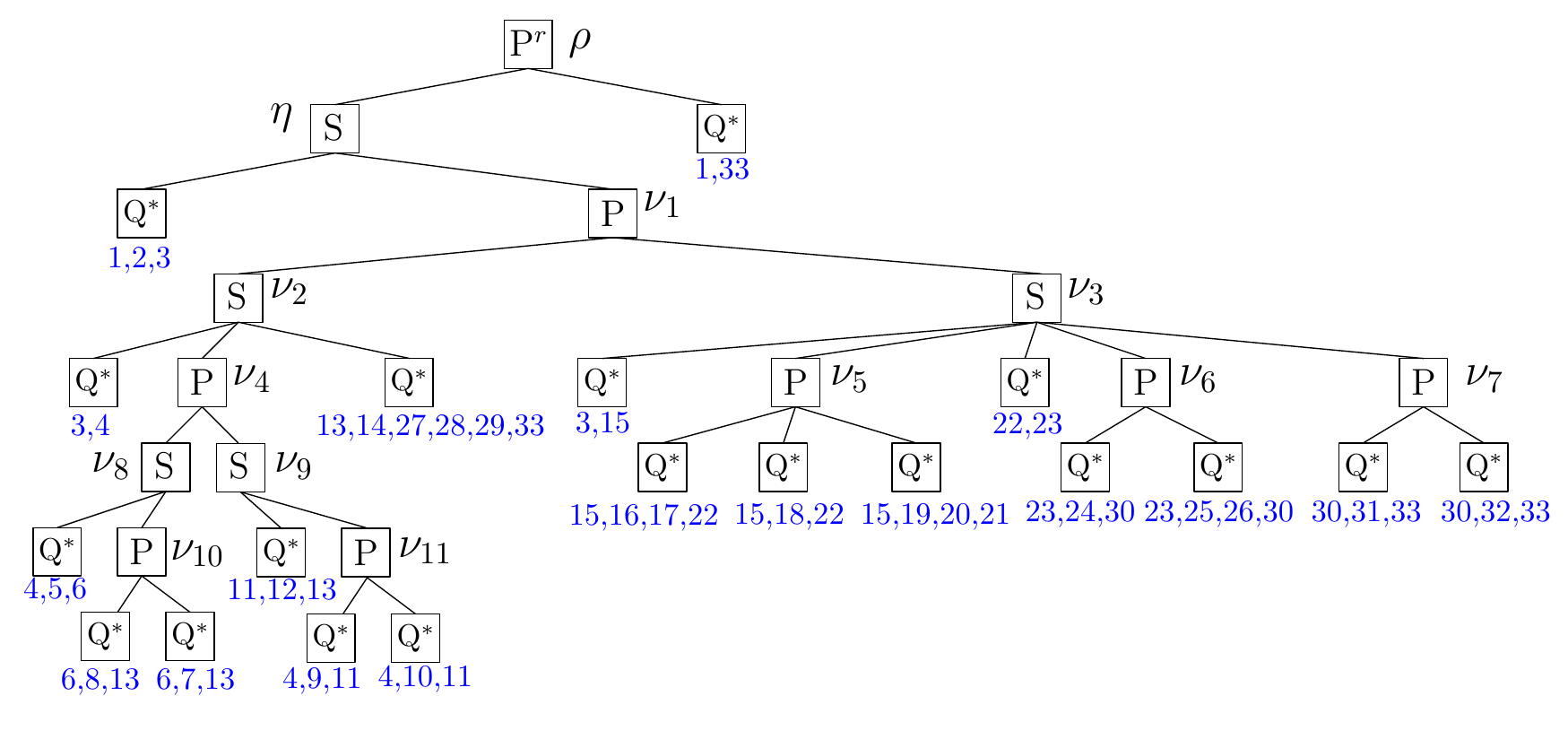}
		\label{fi:spq-tree}
	}
	\caption{(a) A biconnected series-parallel graph $G$. (b) A rectilinear planar representation $H$ of $G$. (c) The SPQ$^*$-tree $T$ of $G$ with reference edge $(1,33)$.}\label{fi:prel}
\end{figure}
	
A series-parallel graph $G$ is naturally associated with a \emph{decomposition tree}~$T$, which describes the series and parallel compositions that build $G$. Tree $T$ has three types of nodes:
S-, P-, and Q$^*$-nodes. If $G$ is the series composition of $p \geq 2$ graphs $G_i$ that are not all single edges, the root of $T$ is an S-node whose subtrees are the decomposition trees $T_i$ of $G_i$. If $G$ is the parallel composition of $p \geq 2$ graphs $G_i$, the root of $T$ is a P-node whose subtrees are the decomposition trees $T_i$ of $G_i$. If $G$ is a series composition of $\ell \geq 1$ edges, its decomposition tree is a single Q$^*$-node and for brevity we say that $\ell$ is the \emph{length} of this node.

For a node $\nu$ of $T$, the \emph{pertinent graph} $G_\nu$ of $\nu$ is the series-parallel subgraph of $G$ formed by all edges associated with the Q$^*$-nodes in the subtree rooted at~$\nu$. We also call $G_\nu$ a \emph{component} of $G$. If $u$ and $v$ are the source and the sink of $G_\nu$, respectively, we say that $\{u,v\}$ are the \emph{poles} of $G_\nu$ and of $\nu$: $u$ is the \emph{source pole} and $v$ is the \emph{sink pole}. If $G$ is a biconnected plane series-parallel graph, for any edge $e=(s,t)$ on the external face of $G$, we can associate with $G$ a decomposition tree $T$ where the root is a P-node representing the parallel composition between $e$ and the rest of the graph. Thus, the root of $T$ is always a P-node with two children, one of which is a Q$^*$-node corresponding to $e$. It will be called the (unique) P$^r$-node of $T$, to distinguish it by the other P-nodes. Edge $e$ is the \emph{reference edge} of $T$ and $T$ is the SPQ$^*$-tree of $G$ \emph{with respect to} $e$.
Also, it is always possible to make $T$ such that each P-node (distinct from the root) has no P-node child and each S-node has no S-node child. Since we only deal with graphs of vertex-degree at most four, a P-node has either two or three children. From now on we assume that $T$ always satisfies the properties above for a biconnected series-parallel graph. Observe that the number of nodes of $T$ is $O(n)$, where $n$ is the number of vertices of $G$. Figure~\ref{fi:prel} shows a biconnected series-parallel graph $G$, a rectilinear planar representation~$H$~of~$G$, and the SPQ$^*$-tree~$T$~of~$G$~with~respect~to~the~reference~edge~$(1,33)$.





\section{Characterizing Rectilinear Plane Series-Parallel Graphs}\label{se:characterization}

Let $G$ be a plane series-parallel graph. If $G$ is biconnected let $e=(s,t)$ be any edge on the external face of~$G$; otherwise, by definition of two-terminal series-parallel graph, we can add a dummy edge $e$ on the external face of $G$ to make it biconnected. We assume that the external face of $G$ is to the right of $e$ while moving from $s$ to $t$ (as in Fig.~\ref{fi:graph}). Let $T$ be an SPQ$^*$-tree of $G$ with respect to~$e$. An overview of our algorithm is as follows. It visits $T$ in post-order (a node is visited after its children). When the algorithm visits a node $\nu$, it tests whether $G_\nu$ admits a planar rectilinear representation by checking whether a certain condition, which we call \emph{representability condition}, is verified: In the negative case, the algorithm halts and rejects the instance; else it stores in $\nu$ its \emph{representability interval} $I_\nu$. Such an interval is a compact representation of the possible values of \emph{orthogonal spirality} that the pertinent graph $G_{\nu}$ of $\nu$ may have in a rectilinear representation of $G$. Informally speaking, the orthogonal spirality is a measure of how much a rectilinear representation of pertinent graph $G_{\nu}$ is ``rolled-up'' in a rectilinear planar representation of $G$. 
As we shall see, the representability interval is such that for every value $k \in I_\nu$ graph $G_\nu$ admits a planar rectilinear representation with spirality $k$, while it does not for any value outside $I_\nu$. If the testing algorithm does not halt and it reaches the root, two cases are considered: If $e$ is a real edge of $G$, then the algorithm executes a final test to check whether a rectilinear planar representation of $G$ can be obtained by merging a straight-line representation of $e$ with a rectilinear representation of the child component of the root other than $e$. If $e$ is a dummy edge added to make $G$ biconnected this check is not required, because~$e$~is~not~present~in~the~final~representation~and~can~arbitrarily~bend.

We now present the characterization of the rectilinear planar components in terms of representability conditions and intervals that is at the base of the~testing algorithm. We start in Section~\ref{sse:spirality} with a formal definition of spirality. We characterize Q$^*$-, S-, and P-components with three children in Section~\ref{sse:Q-S-P3-representability}, and P-components with two children in Section~\ref{sse:P2-representability}. We summarize in Section~\ref{sse:characterization}.


\subsection{Spirality of Series-Parallel Graphs}\label{sse:spirality}

Let $T$ be an SPQ$^*$-tree of a biconnected plane series-parallel graph $G$ for a given reference edge $e=(s,t)$. Let $H$ be an embedding-preserving orthogonal representation of $G$. Also, let $\nu$ be a node of $T$ with poles $\{u,v\}$, and let $H_\nu$ be the restriction of $H$ to the pertinent graph $G_\nu$ of $\nu$. We also say that $H_\nu$ is a \emph{component} of $H$. For each pole $w \in \{u,v\}$, let  $\indeg_\nu(w)$ and $\outdeg_\nu(w)$ be the degree of $w$ inside and outside $H_\nu$, respectively. Define two (possibly coincident) \emph{alias vertices} of $w$, denoted by $w'$ and $w''$, as follows:
$(i)$ if $\indeg_\nu(w)=1$, then $w'=w''=w$;
$(ii)$ if $\indeg_\nu(w)=\outdeg_\nu(w)=2$, then $w'$ and $w''$ are dummy vertices, each splitting one of the two distinct edge segments incident to $w$ outside $H_\nu$;
$(iii)$ if $\indeg_\nu(w)>1$ and $\outdeg_\nu(w)=1$, then $w'=w''$ is a dummy vertex that splits the edge segment incident to $w$ outside $H_\nu$.

Let $A^w$ be the set of distinct alias vertices of a pole $w$. Let $P^{uv}$ be any simple path from $u$ to $v$ inside $H_\nu$ and let $u' \in A^u$ and $v' \in A^v$. The path $S^{u'v'}$ obtained concatenating $(u',u)$, $P^{uv}$, and $(v,v')$ is called a \emph{spine} of $H_\nu$. Denote by $n(S^{u'v'})$ the number of right turns minus the number of left turns encountered along $S^{u'v'}$ while moving from $u'$ to $v'$.
%
The \emph{spirality} $\sigma(H_\nu)$ of $H_\nu$ is defined based on the following cases:
	$(a)$ $A^u=\{u'\}$ and $A^v=\{v'\}$. Then $\sigma(H_\nu) = n(S^{u'v'})$.
	$(b)$ $A^u=\{u'\}$ and $A^v=\{v',v''\}$. Then $\sigma(H_\nu) = \frac{n(S^{u'v'}) + n(S^{u'v''})}{2}$.
	$(c)$ $A^u=\{u',u''\}$ and $A^v=\{v'\}$. Then $\sigma(H_\nu) = \frac{n(S^{u'v'}) + n(S^{u''v'})}{2}$.
	$(d)$ $A^u=\{u',u''\}$ and $A^v=\{v',v''\}$. Without loss of generality, assume that $(u,u')$ precedes $(u,u'')$ counterclockwise around $u$ and that $(v,v')$ precedes $(v,v'')$ clockwise around $v$. Then $\sigma(H_\nu) = \frac{n(S^{u'v'}) + n(S^{u''v''})}{2}$.
%
Notice that, by definition, the spirality of $H_\nu$ also depends on the angles at the poles of $H_\nu$, not only on the shape of $H_\nu$. 

\smallskip Di Battista et al.~\cite{DBLP:journals/siamcomp/BattistaLV98} showed that the spirality of $H_\nu$ does not vary with the choice of path $P^{uv}$ and that two distinct representations of $G_\nu$ with the same spirality are interchangeable. Fig.~\ref{fi:spiralities} reports the spiralities of some P- and S-components in the representation $H$ of Fig.~\ref{fi:ortho}. For brevity, we shall denote by $\sigma_\nu$ the spirality of an orthogonal representation~of~$G_\nu$. 

\begin{figure}[!h]
	\centering
	\includegraphics[width=1\columnwidth]{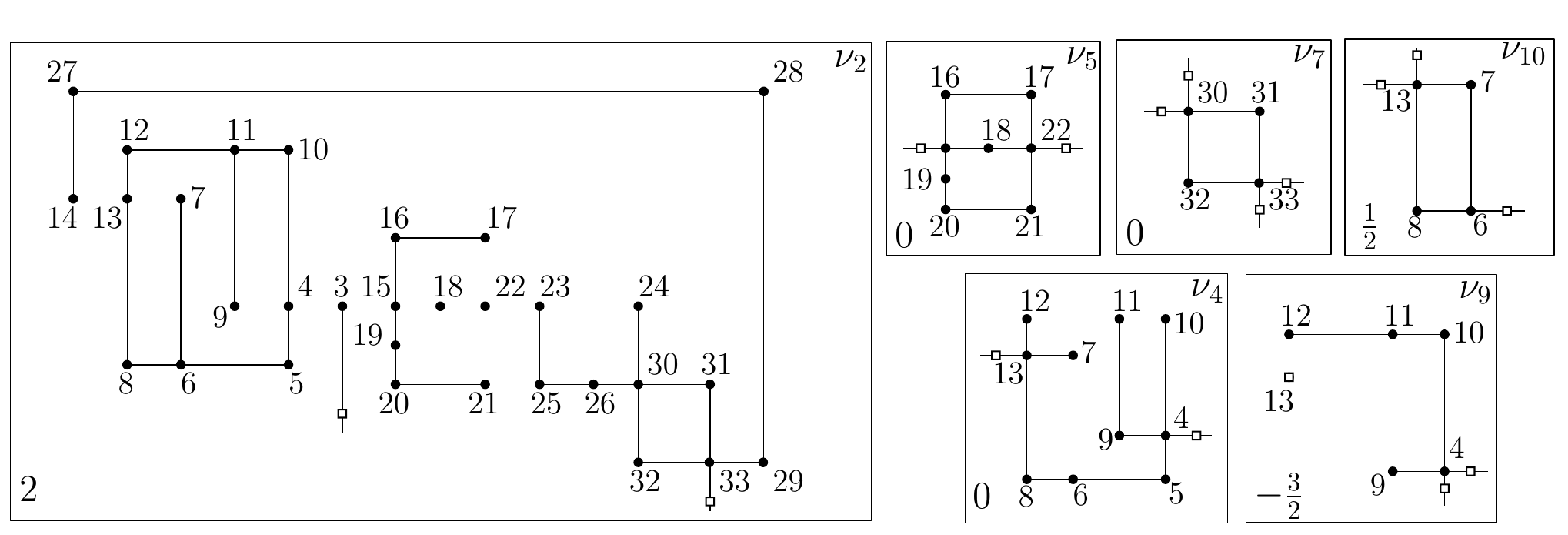}
	\caption{Spiralities (left-bottom corners) of some components in the representation $H$ of Fig.~\ref{fi:ortho}. Small squares indicate alias vertices.}\label{fi:spiralities}
\end{figure}

%

\begin{lemma}[\cite{DBLP:journals/siamcomp/BattistaLV98}]\label{le:spirality-S-node}
	Let $\nu$ be an S-node of $T$ with children $\mu_1, \dots, \mu_h$. The following relationship holds: $\sigma_\nu = \sum_{i=1}^{h}\sigma_{\mu_i}.$
\end{lemma}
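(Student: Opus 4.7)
The plan is to proceed by induction on $h$, exploiting the fact that spirality is computed along a spine path and that, for an S-composition, a spine of $H_\nu$ can be obtained by concatenating spines of its children at the intermediate shared poles. The base case is $h=2$, and the inductive step is obtained by grouping $\mu_1,\dots,\mu_{h-1}$ into a single S-composition $\nu'$ and writing $\sigma_\nu=\sigma_{\nu'}+\sigma_{\mu_h}=\sum_{i=1}^{h-1}\sigma_{\mu_i}+\sigma_{\mu_h}$.

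For the base case, let $u=u_1$ be the source of $\nu$, $v=v_2$ its sink, and $w=v_1=u_2$ the shared pole. I would pick simple paths $P^{uw}\subseteq H_{\mu_1}$ and $P^{wv}\subseteq H_{\mu_2}$ and concatenate them into a simple path $P^{uv}\subseteq H_\nu$. Using the result of~\cite{DBLP:journals/siamcomp/BattistaLV98} that the spirality is independent of the choice of spine path, it is enough to count right-minus-left turns along this particular spine. A first useful observation is that, in an S-composition, $\outdeg_\nu(u)=\outdeg_{\mu_1}(u)$ and $\outdeg_\nu(v)=\outdeg_{\mu_2}(v)$, so the alias vertices of $u$ (resp.\ $v$) coincide whether $u$ (resp.\ $v$) is viewed as a pole of $\nu$ or of $\mu_1$ (resp.\ $\mu_2$); hence the ``outer'' contributions to the three spiralities match by construction.

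The delicate point is the intermediate pole $w$: its alias vertices as a pole of $\mu_1$ are dummy vertices on edges lying inside $H_{\mu_2}$, and symmetrically for $w$ as a pole of $\mu_2$. I would then go through the four cases of the spirality definition (determined by $|A^u|$ and $|A^v|$) and, within each, through the subcases for $(\indeg_{\mu_1}(w),\outdeg_{\mu_1}(w))$ and $(\indeg_{\mu_2}(w),\outdeg_{\mu_2}(w))$, showing in every instance that the turns accumulated along the $\mu_1$-portion of the chosen spine (between an alias of $u$ and an alias of $w$ in $\mu_1$) plus those accumulated along the $\mu_2$-portion sum to the turns along the whole spine of $H_\nu$. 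The crucial identity is that any right/left turn counted at an alias of $w$ on the $\mu_2$-side while computing $\sigma_{\mu_1}$ is an actual turn along $P^{wv}$ inside $H_\nu$, and symmetrically on the other side; so the alias-induced turns of one child are precisely the ``missing'' initial/final turns of the spine of the other child, avoiding double counting or omission.

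I expect the bookkeeping at the intermediate pole $w$ to be the main obstacle: verifying that the alias vertices introduced on either side of $w$ contribute turns that exactly reconstruct the corresponding portion of the global spine, across all degree configurations, is the heart of the argument. Once the base case is established, the inductive step is immediate by applying the case $h=2$ to the S-composition of $\nu'$ and $\mu_h$ together with the inductive hypothesis on $\nu'$.
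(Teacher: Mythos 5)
The paper does not prove this lemma at all: it is imported verbatim from Di Battista, Liotta and Vismara~\cite{DBLP:journals/siamcomp/BattistaLV98}, so there is no in-paper argument to compare yours against, and your sketch has to stand on its own as a re-proof of the cited result. Judged that way, the overall plan (induction on $h$, concatenating a $u$--$w$ path of $H_{\mu_1}$ with a $w$--$v$ path of $H_{\mu_2}$, and noting that the alias vertices of $u$ and of $v$ are the same for $\nu$ as for $\mu_1$ and $\mu_2$) is reasonable, but the step you yourself identify as the heart of the argument is left unproven, and the ``crucial identity'' is not correct as you state it. When $\indeg_{\mu_1}(w)=\indeg_{\mu_2}(w)=2$ (so $w$ has degree four), \emph{each} child has two alias vertices at $w$, placed on the two edges of the \emph{other} child, and $\sigma_{\mu_1}$, $\sigma_{\mu_2}$ are averages over spines ending/starting at those dummies, whereas the global spine of $H_\nu$ uses only one edge of $\mu_1$ and one edge of $\mu_2$ at $w$. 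So it is not true that every alias-induced turn of one child ``is an actual turn along'' the other child's portion of the global spine, nor that one child's alias turns are exactly the other child's missing turns. What actually holds, and what requires the case analysis you defer, is that the \emph{average} turn contribution of $\mu_1$ at $w$ plus the \emph{average} turn contribution of $\mu_2$ at $w$ equals the single turn of the global spine at $w$ (e.g.\ with all four $90^\circ$ angles at $w$ one gets $\tfrac12+\tfrac12=1$ or $\tfrac12-\tfrac12=0$ depending on which edges the path uses); one must also note that the pairing rule in case $(d)$ of the spirality definition is harmless here, by linearity of the average.

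There is a second, unflagged issue: if the shared pole $w$ had $\indeg_{\mu_1}(w)=\indeg_{\mu_2}(w)=1$ (a degree-two vertex of $G$), then the alias of $w$ is $w$ itself on both sides, neither $\sigma_{\mu_1}$ nor $\sigma_{\mu_2}$ counts any turn at $w$, yet the spine of $H_\nu$ may turn there, and the identity $\sigma_\nu=\sigma_{\mu_1}+\sigma_{\mu_2}$ would fail. The statement survives only because of the structure of the trees used here: the children of an S-node are Q$^*$- or P-nodes, Q$^*$-nodes represent maximal chains, and a P-node has indegree at least two at each pole, so at every pole shared by consecutive children at least one side has indegree two. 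A complete proof must invoke this fact explicitly (and check that it is preserved when you group $\mu_1,\dots,\mu_{h-1}$ into $\nu'$ for the inductive step, which it is, since $\indeg_{\nu'}(w)=\indeg_{\mu_{h-1}}(w)$). As it stands, the proposal is a plausible roadmap rather than a proof.
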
	

If $\nu$ is a P-node with two children, we denote by $\mu_l$ and $\mu_r$ the left child and the right child of $\nu$, respectively. If $\nu$ is a P-node with three children, we denote by $\mu_l$, $\mu_c$, and $\mu_r$, the three children of $\nu$ from left to right. Also, for each pole $w \in \{u,v\}$ of $\nu$, the \emph{leftmost angle} at $w$ in $H$ is the angle formed by the leftmost external edge and the leftmost internal edge of $H_\nu$ incident to $w$. The \emph{rightmost angle} at $w$ in $H$ is defined symmetrically.
We define two binary variables $\alpha_w^l$ and $\alpha_w^r$ as follows: $\alpha_w^l = 0$ ($\alpha_w^r = 0$) if the leftmost (rightmost) angle at $w$ in $H$ is of $180^\circ$, while $\alpha_w^l = 1$ ($\alpha_w^r = 1$) if this angle is of $90^\circ$.
Observe that if $\deg(w)=4$, then $\alpha_w^l = \alpha_w^r = 1$. Also, if $\nu$ has two children, define two additional variables $k_w^l$ and $k_w^r$ as follows: $k_w^d = 1$ if $\indeg_{\mu_d}(w)=\outdeg_{\nu}(w)=1$,
while $k_{w}^d=1/2$ otherwise, for $d \in \{l,r\}$. 

For example, in Fig.~\ref{fi:spiralities} the P-component of $\nu_4$  has poles $u=4$ and $v=13$, and we have $k_u^l =k_v^r=1$, $k_u^r =k_v^l=\frac{1}{2}$, and $\alpha_u^l=\alpha_u^r=\alpha_v^l=\alpha_v^r=1$. The P-component of $\nu_{10}$ has poles $u=6$ and $v=13$, and we have $k_u^l=k_u^r=1$, $k_v^l=k_v^r=\frac{1}{2}$, $\alpha_u^l=0$, and $\alpha_u^r=\alpha_v^l=\alpha_v^r=1$. Fig.~\ref{fi:P-node-types} reports all the values of $k_{w}^d$ for the possible types of P-nodes with two children.

\begin{lemma}[\cite{DBLP:journals/siamcomp/BattistaLV98}]\label{le:spirality-P-node-2-children}
	Let $\nu$ be a P-node of $T$ with two children $\mu_l$ and $\mu_r$. The following relationships hold:
	$\sigma_\nu = \sigma_{\mu_l} - k_{u}^l \alpha_{u}^l - k_{v}^l \alpha_{v}^l = \sigma_{\mu_r} + k_{u}^r \alpha_{u}^r + k_{v}^r\alpha_{v}^r$.
\end{lemma}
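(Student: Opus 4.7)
The plan is to follow the same general strategy as Di Battista, Liotta and Vargiu~\cite{DBLP:journals/siamcomp/BattistaLV98} when they handle P-components: compare two spines that share their internal path and differ only near the poles, then track how the replacement of alias edges changes the turn count. I would fix an embedding-preserving orthogonal representation $H$ of $G$ and look at $H_\nu$, $H_{\mu_l}$, $H_{\mu_r}$. The key setup move is to choose the simple internal path $P^{uv}$ of a spine so that it lies entirely inside $H_{\mu_l}$; such a path exists because $G_{\mu_l}$ is a series-parallel graph with poles $u,v$. By the invariance of spirality on the choice of $P^{uv}$ (stated in \cite{DBLP:journals/siamcomp/BattistaLV98}), I can use the same $P^{uv}$ to compute both $\sigma_\nu$ from a spine of $H_\nu$ and $\sigma_{\mu_l}$ from a spine of $H_{\mu_l}$. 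The interior turns along $P^{uv}$ then cancel, so $\sigma_{\mu_l}-\sigma_\nu$ reduces to the difference of the turn contributions at the two endpoint segments $(u',u)$ and $(v,v')$ in the two contexts.

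Next I would analyse each pole $w\in\{u,v\}$ separately. The value $\outdeg_\nu(w)\in\{1,2\}$ determines whether the $\nu$-alias of $w$ is $w$ itself or a dummy splitting the external edge, and similarly $\indeg_{\mu_l}(w)\in\{1,2\}$ together with $\outdeg_\nu(w)$ determines the $\mu_l$-alias. The geometric observation driving every case is that the angle between the leftmost external edge of $H_\nu$ at $w$ and the leftmost internal edge of $H_{\mu_l}$ at $w$ is precisely the angle captured by $\alpha_w^l$, equal to $180^\circ$ when $\alpha_w^l=0$ and to $90^\circ$ when $\alpha_w^l=1$. The factor $k_w^l$ distinguishes the subcase where no dummy alias is needed on either side (the full angle contributes a turn and $k_w^l=1$) from the subcase where a dummy alias splits an edge on at least one side (only half of the angle contributes and $k_w^l=1/2$); this matches exactly the definition $k_w^l=1$ iff $\indeg_{\mu_l}(w)=\outdeg_\nu(w)=1$.

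Verifying in each of the four pairs $(\indeg_{\mu_l}(w),\outdeg_\nu(w))\in\{1,2\}^2$ that the turn contribution at $w$ in the $\mu_l$-spine exceeds the one in the $\nu$-spine by exactly $k_w^l\alpha_w^l$ is then a local computation: the $\mu_l$-spine performs an extra right turn around $w$ (relative to the $\nu$-spine) precisely when the leftmost angle at $w$ is $90^\circ$, because $H_{\mu_l}$ lies on the left side of $H_\nu$ at $w$. Summing over $w\in\{u,v\}$ gives $\sigma_{\mu_l}-\sigma_\nu=k_u^l\alpha_u^l+k_v^l\alpha_v^l$, which is the first equality. The second equality follows by a completely symmetric argument with $\mu_r$ in place of $\mu_l$; the sign flips because $\mu_r$ lies on the right side of $H_\nu$, so a spine through $\mu_r$ undergoes additional left turns at the poles exactly when the rightmost angles $\alpha_w^r$ equal~$1$.

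The hard part will be the bookkeeping in the pole-by-pole case analysis: one has to keep straight whether each of the two contexts requires no dummy, a single dummy, or a pair of dummies, and then verify in every subcase that the correct fraction of $\alpha_w^l$ (and $\alpha_w^r$) is charged and with the correct orientation. Once the local turn balance at each pole is established, the global relation is just the telescoping of turns along the shared internal path $P^{uv}$, so the lemma follows.
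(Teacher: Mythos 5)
First, a point of comparison: the paper does not prove this lemma at all — it is imported verbatim from Di Battista, Liotta and Vargiu~\cite{DBLP:journals/siamcomp/BattistaLV98} — so there is no in-paper argument to measure your sketch against; what you wrote is a reconstruction of the external proof. Your overall strategy is the right one and is essentially the one used in that reference: choose the internal path $P^{uv}$ inside $G_{\mu_l}$, use the invariance of spirality under the choice of $P^{uv}$ so that $\sigma_\nu$ and $\sigma_{\mu_l}$ are computed on spines sharing that path, cancel the interior turns, and reduce the identity to a per-pole comparison of the alias edges. One detail worth making explicit: the specific pairing of alias vertices in case $(d)$ of the spirality definition does not obstruct this reduction, because the turn contributed at each pole depends only on that pole's own alias, so the average over the two spines still splits pole by pole.

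The one place where your stated mechanism is wrong is the explanation of the factor $k_w^l$. You claim $k_w^l=1$ corresponds to ``no dummy alias needed on either side.'' That cannot be the right dichotomy: at a pole $w$ of a P-node one has $\indeg_\nu(w)\ge 2$, so $H_\nu$ \emph{always} uses at least one dummy alias at $w$. The factor $\frac{1}{2}$ is not about dummy versus non-dummy aliases but about averaging over two spines: it appears exactly when one of the two representations being compared has \emph{two} alias vertices at $w$, so that the $90^\circ$/$180^\circ$ information recorded by $\alpha_w^l$ enters its spirality only with weight $\frac{1}{2}$. Concretely, $H_\nu$ has two aliases at $w$ iff $\outdeg_\nu(w)=2$, while $H_{\mu_l}$ has two aliases at $w$ iff $\indeg_{\mu_l}(w)=2$ (since then $\outdeg_{\mu_l}(w)=\indeg_{\mu_r}(w)+\outdeg_\nu(w)=2$); the degree-$4$ bound makes these two events mutually exclusive (so your ``four pairs'' include an impossible one), and their union is precisely the case $k_w^l=\frac{1}{2}$ of the definition, whereas $k_w^l=1$, i.e.\ $\indeg_{\mu_l}(w)=\outdeg_\nu(w)=1$, is the case where both representations have a single alias at $w$ — for $H_{\mu_l}$ it is $w$ itself, for $H_\nu$ it is a dummy on the unique external edge. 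With this corrected accounting, the pole-by-pole verification you outline (enumerating the degree patterns $(\indeg_{\mu_l}(w),\indeg_{\mu_r}(w),\outdeg_\nu(w))$ compatible with $\deg(w)\le 4$) does yield a discrepancy of exactly $k_w^l\alpha_w^l$ at each pole, and the symmetric argument through $\mu_r$, with the sign reversed because the external angles now lie on the other side of the spine, gives the second equality.
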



\begin{lemma}[\cite{DBLP:journals/siamcomp/BattistaLV98}]\label{le:spirality-P-node-3-children}
	Let $\nu$ be a P-node of $T$ with three children $\mu_l$, $\mu_c$, and $\mu_r$. The following relationships hold:
	$\sigma_\nu = \sigma_{\mu_l} - 2 = \sigma_{\mu_c} = \sigma_{\mu_r} + 2$.
\end{lemma}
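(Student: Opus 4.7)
The plan is to derive the three identities by choosing, in turn, the internal path $P^{uv}$ that defines the spine of $H_\nu$ to lie entirely inside each of the three child components. First, I would observe that because $\nu$ is a P-node with three children and $G$ has maximum degree four, at each pole $w\in\{u,v\}$ we must have $\indeg_\nu(w)=3$ and $\outdeg_\nu(w)=1$; hence all four angles at $w$ inside $H$ are $90^\circ$, and by case $(iii)$ of the definition of alias, $w'=w''$ is the single dummy vertex that splits the external edge at $w$. Moreover, each child $\mu\in\{\mu_l,\mu_c,\mu_r\}$ is joined to each pole by a single edge and the pole already has total degree four, so $\indeg_\mu(u)=\indeg_\mu(v)=1$; thus by case $(i)$ the spine of $\mu$ starts at $u$ and ends at $v$, with no dummy alias.

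Next, for each $i\in\{l,c,r\}$ I would pick $P^{uv}$ to lie inside $H_{\mu_i}$, which is legitimate because~\cite{DBLP:journals/siamcomp/BattistaLV98} guarantees that $\sigma_\nu$ does not depend on the choice of $P^{uv}$. The spine of $H_\nu$ then decomposes as $(u',u)\cdot P^{uv}\cdot (v,v')$, and therefore
\[
\sigma_\nu \;=\; \tau_u^{(i)} \;+\; \sigma_{\mu_i} \;+\; \tau_v^{(i)},
\]
where $\tau_u^{(i)}$ and $\tau_v^{(i)}$ are the turns introduced at the poles when the external-edge segments $(u,u')$ and $(v,v')$ are prepended and appended to the internal path, and the internal turn count coincides with $\sigma_{\mu_i}$ precisely because the aliases of $\mu_i$ are $u$ and $v$ themselves.

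To compute the six values $\tau_u^{(i)},\tau_v^{(i)}$ I would use that the four edges at $u$ are forced into the four cardinal directions by the $90^\circ$ angles. By the left-to-right naming of the children, the edge to $\mu_c$ is opposite the external edge at $u$, while the edges to $\mu_l$ and $\mu_r$ are the two $90^\circ$-neighbours of the external edge, one on each side. This gives $\tau_u^{(l)}=-1$, $\tau_u^{(c)}=0$, and $\tau_u^{(r)}=+1$. An identical argument at $v$ yields $\tau_v^{(l)}=-1$, $\tau_v^{(c)}=0$, and $\tau_v^{(r)}=+1$, and substitution produces the three claimed equalities $\sigma_\nu=\sigma_{\mu_l}-2=\sigma_{\mu_c}=\sigma_{\mu_r}+2$.

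The hard part will be making sure the signs at $u$ and $v$ agree, i.e., that routing the spine through $\mu_l$ contributes a \emph{left} turn both at $u$ and at $v$ (and symmetrically a right turn for $\mu_r$). This is where planarity has to be used in earnest: the external edges at $u$ and at $v$ bound the same outer face of $H_\nu$, and the cyclic order of the four edges at $v$ is the mirror image of the one at $u$ when read along the direction of the spine, so the child that sits to the left of the external edge at $u$ still sits to the left of the external edge at $v$. Once this mirror-coherence of the two rotation systems is spelled out, the lemma follows by direct substitution into the spine decomposition above.
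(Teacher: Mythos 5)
Your proposal is correct, but note that the paper does not prove this statement at all: Lemma~\ref{le:spirality-P-node-3-children} is imported verbatim from Di Battista et al.~\cite{DBLP:journals/siamcomp/BattistaLV98}, so what you have written is a self-contained first-principles derivation of a cited result rather than an alternative to an in-paper argument. Your derivation is sound: with three children each pole has $\indeg_\nu(w)=3$, hence all angles at $w$ are $90^\circ$ and there is a single alias vertex on the unique external segment (the fact that $\outdeg_\nu(w)=1$ rather than $0$ uses, strictly speaking, that $\nu$ is a non-root node of a biconnected $G$, not just the degree bound --- worth one extra clause); each child meets each pole in exactly one edge, so its own spirality is a plain $u$--$v$ turn count; path-independence of spirality lets you route the spine of $H_\nu$ through any chosen child; and since the three internal edges are consecutive in the rotation at each pole, the middle child is collinear with the external segment while the outer children force a $\pm 1$ turn at each pole, giving the offsets $-2,0,+2$. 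The only delicate point is the one you flag yourself: that the child which is ``left'' at $u$ is also ``left'' at $v$, i.e.\ that the pole rotations are mirror images of each other along the spine, which is exactly the planarity/nesting property of a two-terminal parallel composition, and that the resulting sign matches the paper's left-to-right naming convention (it does, consistently with Lemma~\ref{le:spirality-P-node-2-children}, where the left child always has spirality at least that of the parent). An alternative, slightly shorter route to the same identities is to specialize the two-child bookkeeping of Lemma~\ref{le:spirality-P-node-2-children}: for degree-4 poles every leftmost/rightmost angle is $90^\circ$, so all $\alpha$-variables are forced to $1$ and the $\pm 2$ offsets drop out without re-deriving the turn counts at the poles; your approach buys independence from that lemma at the cost of handling the rotation-coherence argument explicitly.
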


About the values of spirality $\sigma_\nu$ that a component $H_\nu$ can take, if $\nu$ is a Q$^*$-node or a P-node with three children, $\sigma_\nu$ is always an integer. If $\nu$ is an S-node or a P-node with two children, $\sigma_\nu$ is either integer or semi-integer depending on whether the total number of alias vertices for the poles of $\nu$ is even or odd.



\subsection{Q$^*$-nodes, S-nodes, and P-nodes with three children}\label{sse:Q-S-P3-representability}

From now on, when we say that the spirality $\sigma_\nu$ of an orthogonal planar representation of $G_\nu$ can take \emph{all} values in an interval $[a,b]$, we mean that such values are either all the integer numbers or all the semi-integer numbers in $[a,b]$, depending on the cases described above for $\nu$.

\begin{lemma}\label{le:Q*-representability}
	Let $\nu$ be a Q$^*$-node of length $\ell$. Graph $G_\nu$ is always rectilinear planar (i.e., its representability condition is always true) and its representability interval is $I_\nu = [-\ell+1, \ell-1]$.
\end{lemma}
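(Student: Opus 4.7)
The plan is to decompose the claim into three parts: identifying the structure of the spine for a Q$^*$-component, deriving the range of possible spirality values combinatorially, and finally verifying that every such value is realized by an actual planar rectilinear representation.

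First I would unpack the definition of spirality in the Q$^*$ case. Since $G_\nu$ is a path of $\ell$ edges with endpoints $u$ and $v$, we have $\indeg_\nu(u) = \indeg_\nu(v) = 1$, so case $(i)$ of the alias-vertex definition gives $A^u = \{u\}$ and $A^v = \{v\}$. Therefore, the unique spine $S^{uv}$ coincides with the path $P^{uv}$ itself, and $\sigma_\nu = n(S^{uv})$ is simply the number of right turns minus the number of left turns taken at the $\ell - 1$ internal vertices of the path.

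Next I would derive the interval. At each of the $\ell - 1$ internal vertices the incident angle can be $90^\circ$ (right turn, contributing $+1$), $270^\circ$ (left turn, contributing $-1$), or $180^\circ$ (contributing $0$). Writing $r$, $\ell'$, $s$ for the number of right turns, left turns, and straight angles, we have $r + \ell' + s = \ell - 1$ and $\sigma_\nu = r - \ell'$, so $\sigma_\nu$ is an integer with $|\sigma_\nu| \le \ell - 1$. Conversely, for any integer $\sigma$ in $[-\ell+1, \ell-1]$ we can choose $r = \max(\sigma,0)$, $\ell' = \max(-\sigma,0)$, and $s = \ell - 1 - |\sigma| \ge 0$, realizing the target spirality combinatorially as a sequence of angles along the path. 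In particular, the all-straight choice ($\sigma = 0$) gives a valid rectilinear representation, so $G_\nu$ is always rectilinear planar.

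The step that requires the most care is checking that each such angle assignment actually gives rise to a bend-free planar representation, i.e., admits a crossing-free drawing consistent with the embedding of $G$ (in which the path $G_\nu$ has all its edges on the same side of itself, trivially). I would argue this by an explicit geometric construction: given the chosen sequence of turns, one can draw the path as a monotone spiral or staircase by choosing the lengths of the successive edges to be strictly increasing (e.g.\ doubling at each step), which prevents the path from ever re-entering a previously occupied region and thus avoids self-intersections. Hence every angle assignment with $|\sigma| \le \ell-1$ is realized by an embedding-preserving rectilinear planar representation, yielding $I_\nu = [-\ell+1, \ell-1]$.
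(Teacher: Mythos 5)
Your proof is correct and follows essentially the same route as the paper: realize each spirality $k \in [-\ell+1,\ell-1]$ by assigning turns at the $\ell-1$ internal degree-2 vertices of the path, and note that no value outside this interval is achievable. Your additional step establishing a crossing-free drawing via geometrically increasing segment lengths is sound but not needed in the paper's framework, which works with orthogonal representations (whose drawability follows from Tamassia's cited result) and defers all planarity constraints involving cycles to the P-node and root conditions.
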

\begin{proof}
	$G_\nu$ is a path with $\ell-1$ degree-2 vertices. For any integer $k \in [-\ell+1,0]$, a rectilinear planar representation $H_\nu$ of $G_\nu$ with spirality $k$ is obtained by making a left turn at $k$ degree-2 vertices of $G_\nu$ (going from the source to the sink pole), and no turn at any remaining vertex of $G_\nu$. Symmetrically, for any $k \in (0,\ell-1]$, we realize $H_\nu$ with spirality $k$ by making a right turn at exactly $k$ degree-2 vertices of $G_\nu$. It is clear that no values of spirality out of $I_\nu$ can be achieved.
\end{proof}	

\begin{lemma}\label{le:S-representability}
	Let $\nu$ be an S-node with $h$ children $\mu_1, \dots, \mu_h$. Suppose that, for every $i \in [1,h]$, the representability interval of $G_{\mu_i}$ is $I_{\mu_i} = [m_i,M_i]$. Graph $G_\nu$ is always rectilinear planar (i.e., its representability condition is always true) and its representability interval is $I_\nu = [\sum_{i=1}^hm_i,\sum_{i=1}^hM_i]$.
\end{lemma}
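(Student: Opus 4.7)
The plan is to prove both inclusions of $I_\nu = [\sum_i m_i, \sum_i M_i]$ using Lemma~\ref{le:spirality-S-node}, which gives $\sigma_\nu = \sum_{i=1}^{h} \sigma_{\mu_i}$ for every orthogonal representation of $G_\nu$.

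For the containment $I_\nu \subseteq [\sum_i m_i, \sum_i M_i]$, I would take any rectilinear planar representation $H_\nu$ of $G_\nu$ and restrict it to each pertinent graph $G_{\mu_i}$, obtaining a rectilinear planar representation $H_{\mu_i}$. By definition of $I_{\mu_i}$, $\sigma_{\mu_i}\in[m_i,M_i]$, and summing via Lemma~\ref{le:spirality-S-node} yields $\sigma_\nu\in[\sum_i m_i,\sum_i M_i]$. This also shows that the representability condition of $\nu$ is true whenever all the children's conditions are true.

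For the reverse containment, the argument is constructive. Given any admissible target $k\in[\sum_i m_i,\sum_i M_i]$, the first step is to decompose $k=\sum_i k_i$ with $k_i\in I_{\mu_i}$ by a greedy ``fill-up'': initialise $k_i=m_i$ so the current sum equals $\sum_i m_i$, then while the current sum is strictly less than $k$ pick any index $i$ with $k_i<M_i$ and raise $k_i$ by one admissible step ($1$ or $1/2$ according to the parity of $I_{\mu_i}$). Since each $I_{\mu_i}$ is a contiguous interval of (semi-)integers of consistent parity, every value of the appropriate parity in $[\sum_i m_i,\sum_i M_i]$ is reached this way. The second step is to pick, for each $i$, a rectilinear planar representation $H_{\mu_i}$ of $G_{\mu_i}$ with spirality $k_i$ (which exists by the meaning of $I_{\mu_i}$) and to series-compose the $H_{\mu_i}$'s into a representation $H_\nu$. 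Lemma~\ref{le:spirality-S-node} then gives $\sigma(H_\nu)=\sum_i k_i=k$.

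The main obstacle I expect is verifying that this series composition is genuinely an embedding-preserving rectilinear planar representation: at every shared pole $w=t_i=s_{i+1}$, the angles inherited from $H_{\mu_i}$ (as its sink pole) and from $H_{\mu_{i+1}}$ (as its source pole) must combine into valid orthogonal angles summing to $360^\circ$. This is however forced by the fact that each $H_{\mu_i}$ was built in a context whose outside degree at $w$ already accounts for the edges contributed by the adjacent component, so the local picture around $w$ is determined and consistent by construction. A secondary subtlety is the parity alignment: the alias vertices of $\nu$ at its source and sink coincide with those of $\mu_1$ at its source and of $\mu_h$ at its sink, while intermediate shared vertices are internal to $G_\nu$ and contribute no alias to $\nu$; hence the parity (integer vs.\ semi-integer) of admissible $\sigma_\nu$ matches the parity of $\sum_i m_i$ and $\sum_i M_i$, so no value in the stated interval is missed or spuriously added.
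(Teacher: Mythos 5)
Your proof is correct and takes essentially the paper's approach: both directions rest on the additivity of spirality (Lemma~\ref{le:spirality-S-node}), and your greedy fill-up is just a direct, non-inductive form of the paper's argument, which establishes the same sumset fact by induction on $h$ with the base case $h=2$ enumerated explicitly. The one slip is the parenthetical step size: within a single $I_{\mu_i}$ the admissible increment is always $1$ (a ``semi-integer'' interval consists of half-odd values spaced one apart, never values $1/2$ apart), so literal $1/2$-steps could leave an intermediate $k_i$ outside $I_{\mu_i}$; with unit steps, which suffice because $k-\sum_i m_i$ is a non-negative integer, your decomposition goes through and the rest matches the paper.
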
	
\begin{proof}
	We use induction on the number of children of $\nu$. In the base case $h=2$. By hypothesis $I_{\mu_1}=[m_1, M_1]$ and $I_{\mu_2}=[m_2,M_2]$. By Lemma~\ref{le:spirality-S-node}, a series composition of a rectilinear representation of $G_{\mu_1}$ with spirality $\sigma_{\mu_1}$ and of a rectilinear representation of $G_{\mu_2}$ with spirality $\sigma_{\mu_2}$ results in a rectilinear representation of $G_\nu$ with spirality $\sigma_\nu = \sigma_{\mu_1} + \sigma_{\mu_2}$. Hence, if $M_1 = m_1+r_1$ and $M_2 = m_2+r_2$, for two non-negative integers $r_1$ and $r_2$, then the possible values for $\sigma_\nu$ are exactly $m_1+m_2, m_1+1+m_2, \dots, m_1+r_1+m_2, \dots, m_1+r_1+m_2+1, \dots, m_1+r_1+m_2+r_2$, i.e., all values in the interval $[m_1+m_2, M_1+M_2]$. In the inductive case $h \geq 3$; consider the series composition $G'_1$ of $G_{\mu_1}, \dots, G_{\mu_{h-1}}$. Graph $G_\nu$ is the series composition of $G'_1$ and $G_{\mu_2}$. By inductive hypothesis the representability interval of $G'_1$ is $[\sum_{i=1}^{h-1}m_i,\sum_{i=1}^{h-1}M_i]$ and by Lemma~\ref{le:spirality-S-node} applied to $G'_1$ and $G_{\mu_2}$ we have $I_\nu = [\sum_{i=1}^hm_i,\sum_{i=1}^hM_i]$, using the same reasoning as for the base case.
\end{proof}

\begin{lemma}\label{le:P-3-children-representability}
	Let $\nu$ be a P-node with three children $\mu_l$, $\mu_c$, and $\mu_r$. Suppose that $G_{\mu_l}, G_{\mu_c}$, and $G_{\mu_r}$ are rectilinear planar and that their representability intervals are $I_{\mu_l}=[m_l, M_l]$, $I_{\mu_c}=[m_c, M_c]$, and $I_{\mu_r} = [m_r, M_r]$, respectively. Graph $G_\nu$ is rectilinear planar if and only if $[m_l-2,M_l-2] \cap [m_c,M_c] \cap [m_r+2,M_r+2] \neq \emptyset$. Also, if this representability condition holds then the representability interval of $G_\nu$ is $I_\nu = [\max\{m_l-2,m_c,m_r+2\},\min\{M_l-2,M_c,M_r+2\}]$.
\end{lemma}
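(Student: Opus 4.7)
The plan is to reduce the representability of $G_\nu$ at a given spirality $k$ to the representability of each of its three children at forced spiralities, using Lemma~\ref{le:spirality-P-node-3-children} as the pivot.

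\textbf{Forward direction.} First I would argue that if $H_\nu$ is any rectilinear planar representation of $G_\nu$ with $\sigma_\nu = k$, then the restrictions $H_{\mu_l}, H_{\mu_c}, H_{\mu_r}$ to the three child components are themselves rectilinear planar representations of $G_{\mu_l}, G_{\mu_c}, G_{\mu_r}$. By Lemma~\ref{le:spirality-P-node-3-children} their spiralities are forced to be $\sigma_{\mu_l}=k+2$, $\sigma_{\mu_c}=k$, and $\sigma_{\mu_r}=k-2$. Hence $k+2\in[m_l,M_l]$, $k\in[m_c,M_c]$, $k-2\in[m_r,M_r]$, i.e.\ $k\in[m_l-2,M_l-2]\cap[m_c,M_c]\cap[m_r+2,M_r+2]$. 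This already shows that if $G_\nu$ is rectilinear planar then the intersection is nonempty, and that $I_\nu$ is contained in it.

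\textbf{Converse direction.} Next I would show that every $k$ in the intersection is actually achievable. Given such a $k$, by the representability hypothesis on the children I can pick rectilinear planar representations $H_{\mu_l}, H_{\mu_c}, H_{\mu_r}$ with spiralities $k+2, k, k-2$, respectively. I would then glue them by a parallel composition at the two poles $u$ and $v$ of $\nu$, placing $\mu_l, \mu_c, \mu_r$ from left to right in the prescribed embedding of $\nu$. Invoking Lemma~\ref{le:spirality-P-node-3-children} once more gives $\sigma_\nu = k$ in the resulting representation, so $k\in I_\nu$. Combined with the forward direction this yields the equality $I_\nu = [\max\{m_l-2,m_c,m_r+2\},\min\{M_l-2,M_c,M_r+2\}]$, and hence the stated representability condition.

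\textbf{Main obstacle.} The delicate point is to check that the parallel composition in the converse direction actually produces a valid rectilinear planar representation, not merely one whose spiralities happen to fit. Since $G$ has maximum degree $4$ and $\nu$ has three children, each pole has degree exactly $3$ inside $G_\nu$, so the three angles around each pole must sum to a legal amount and the middle child $\mu_c$ must receive a $90^\circ$ angle on each side at each pole. These angle constraints are, however, precisely what the $\pm 2$ shifts in Lemma~\ref{le:spirality-P-node-3-children} encode: once the children's spiralities satisfy $\sigma_{\mu_l}-2=\sigma_{\mu_c}=\sigma_{\mu_r}+2$, the leftmost/rightmost angles at the poles force exactly the required combinatorial interface, so the three representations compose consistently. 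Thus the obstacle dissolves by appeal to Lemma~\ref{le:spirality-P-node-3-children}, and no further case analysis is needed.
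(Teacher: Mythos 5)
Your proposal is correct and follows essentially the same route as the paper: both directions pivot on Lemma~\ref{le:spirality-P-node-3-children}, restricting a representation of $G_\nu$ to force the children's spiralities to $k+2,k,k-2$ for the forward direction, and composing child representations with exactly those spiralities for the converse. Your extra paragraph on why the parallel composition is legal only makes explicit the angle bookkeeping that the paper leaves implicit in its appeal to the same lemma, so no substantive difference remains.
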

\begin{proof}
	\noindent\textsf{Representability condition.} Suppose first that $G_\nu$ is rectilinear planar and let $H_\nu$ be a rectilinear planar representation of $G_\nu$ with spirality $\sigma_\nu$. By Lemma~\ref{le:spirality-P-node-3-children}, the spiralities $\sigma_{\mu_l}$, $\sigma_{\mu_c}$, and $\sigma_{\mu_r}$ for the representations of $G_{\mu_l}, G_{\mu_c}$, and $G_{\mu_r}$ in $H_\nu$ are such that $\sigma_{\mu_l}=\sigma_\nu+2$, $\sigma_{\mu_c}=\sigma_\nu$, and $\sigma_{\mu_r}=\sigma_\nu-2$. Since $\sigma_{\mu_l} \in [m_l,M_l]$, $\sigma_{\mu_c} \in [m_c,M_c]$, $\sigma_{\mu_r} \in [m_r,M_r]$, we have $\sigma_\nu \in [m_l-2,M_l-2] \cap [m_c,M_c] \cap [m_r+2,M_r+2]$.
	Suppose vice versa that $[m_l-2,M_l-2] \cap [m_c,M_c] \cap [m_r+2,M_r+2] \neq \emptyset$, and let $k$ be any value in such intersection. Setting $\sigma_{\mu_l} = k + 2$, $\sigma_{\mu_c} = k$, and $\sigma_{\mu_r} = k - 2$ we have $\sigma_{\mu_l} \in [m_l,M_l]$, $\sigma_{\mu_c} \in [m_c,M_c]$, and $\sigma_{\mu_r} \in [m_r,M_r]$. By Lemma~\ref{le:spirality-P-node-3-children}, $G_\nu$ is rectilinear planar for a value of spirality~$\sigma_\nu = k$.
	
	\smallskip\noindent\textsf{Representability interval.} Assume that $G_\nu$ is rectilinear planar. Clearly $[\max\{m_l-2,m_c,m_r+2\},\min\{M_l-2,M_c,M_r+2\}] = [m_l-2,M_l-2] \cap [m_c,M_c] \cap [m_r+2,M_r+2]$, and by the truth of the feasiblity condition we have $[\max\{m_l-2,m_c,m_r+2\},\min\{M_l-2,M_c,M_r+2\}] \neq \emptyset$.
	Similarly to the first part of the proof of the representability condition, any rectilinear planar representation of $G_\nu$ has a value of spirality in the interaval $[\max\{m_l-2,m_c,m_r+2\},\min\{M_l-2,M_c,M_r+2\}]$.
	On the other hand, let $k \in [\max\{m_l-2,m_c,m_r+2\},\min\{M_l-2,M_c,M_r+2\}]$. Analogously to the second part of the proof of the representability condition, we can construct a rectilinear planar representation of $G_\nu$ with spirality $\sigma_\nu=k$, by combining in parallel rectilinear planar representations of $G_{\mu_l}$, $G_{\mu_c}$, and $G_{\mu_r}$ with spiralities $\sigma_{\mu_l} = \sigma_\nu + 2$, $\sigma_{\mu_c} = \sigma_\nu$, and $\sigma_{\mu_r} = \sigma_\nu - 2$, respectively.
\end{proof}

\subsection{P-nodes with two children}\label{sse:P2-representability}
For a P-node $\nu$ with two children $\mu_l$ and $\mu_r$, the representability condition and interval depend on the indegree and outdegree of the poles of $\nu$ in $G_\nu$, $G_{\mu_l}$, and $G_{\mu_r}$. We define the \emph{type} of $\nu$ and of $G_\nu$ as follows (refer to Fig.~\ref{fi:P-node-types}):

\smallskip\noindent -- \Pio{2}{\alpha\beta}: Both poles of $\nu$ have indegree two in $G_\nu$; also one pole has outdegree $\alpha$ in $G_\nu$ and the other pole has outdegree $\beta$ in $G_\nu$, for $1 \leq \alpha \leq \beta \leq 2$. This gives rise to the specific types \Pio{2}{11}, \Pio{2}{12}, and \Pio{2}{22}.

\smallskip\noindent -- \Pio{3d}{\alpha\beta}: One pole of $\nu$ has indegree two in $G_\nu$, while the other pole has indegree three in $G_\nu$ and indegree two in $G_{\mu_d}$ for $d \in \{l,r\}$; also one pole has outdegree $\alpha$ in $G_\nu$ and the other has outdegree $\beta$ in $G_\nu$, for $1 \leq \alpha \leq \beta \leq 2$, where $\alpha=\beta=2$ is not possible. This gives rise to the specific types \Pio{3l}{11}, \Pio{3r}{11}, \Pio{3l}{12}, \Pio{3r}{12}.
	
\smallskip\noindent -- \Pin{3dd'}: Both poles of $\nu$ have indegree three in $G_\nu$; one of the two poles has indegree two in $G_{\mu_d}$ and the other has indegree two in $G_{\mu_{d'}}$, for  $dd' \in \{ll,lr,rr\}$ (both poles have outdegree one in $G_\nu$). Hence, the specific types are~\Pin{3ll},~\Pin{3lr},~\Pin{3rr}.

\begin{figure}[!h]
	\centering
	\includegraphics[width=1\columnwidth]{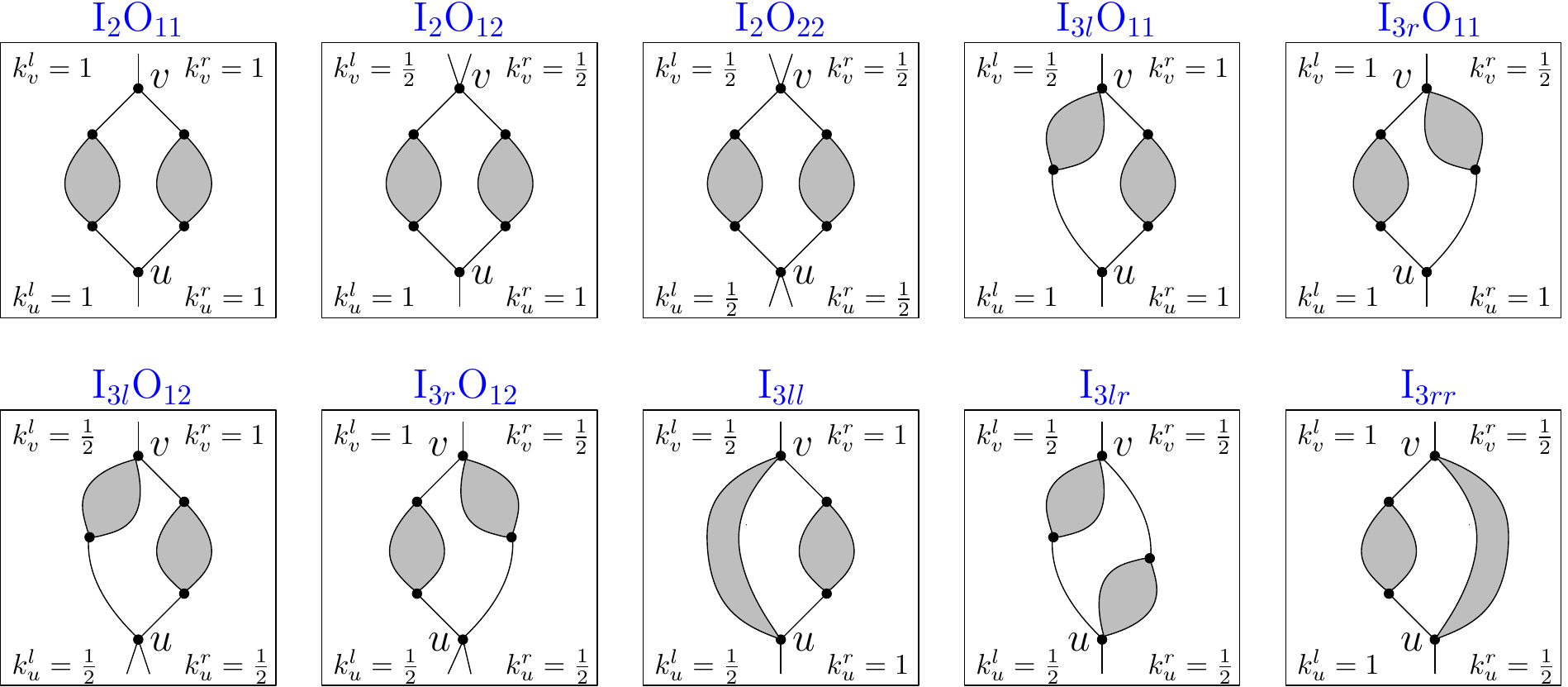}
	\caption{Schematic illustration of the different types of P-nodes with two children.}\label{fi:P-node-types}
\end{figure}



\noindent To characterize P-nodes of type \Pio{2}{\alpha\beta} we start with the following result.


\begin{restatable}{lemma}{itwotypeonesupport}\label{le:P-2-children-support-type1}
	Let $G_\nu$ be a P-node of type \Pio{2}{\alpha\beta} with children $\mu_l$ and $\mu_r$. $G_\nu$ is rectilinear planar if and only if $G_{\mu_l}$ and $G_{\mu_r}$ are rectilinear planar for values of spiralities $\sigma_{\mu_l}$ and $\sigma_{\mu_r}$ such that $\sigma_{\mu_l}-\sigma_{\mu_r}\in[2,4-\gamma]$, where $\gamma = \alpha + \beta - 2$.
\end{restatable}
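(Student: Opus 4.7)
The plan is to apply Lemma~\ref{le:spirality-P-node-2-children} to rewrite
\[
\sigma_{\mu_l}-\sigma_{\mu_r}=k_u^l\alpha_u^l+k_v^l\alpha_v^l+k_u^r\alpha_u^r+k_v^r\alpha_v^r
\]
for any embedding-preserving orthogonal representation of $G_\nu$, and then to identify the set of values this right-hand side can attain when $G_\nu$ has type \Pio{2}{\alpha\beta}. I would first read off the $k$-coefficients from the definitions: since $\indeg_{\mu_d}(w)=1$ at both poles and for $d\in\{l,r\}$ in this type, we have $k_w^l=k_w^r=1$ when $\outdeg_\nu(w)=1$ and $k_w^l=k_w^r=\tfrac12$ when $\outdeg_\nu(w)=2$.

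For the ``only if'' direction, I would analyze the face-angles at each pole $w$ of $G_\nu$, which must be multiples of $90^\circ$, each at least $90^\circ$, and sum to $360^\circ$. When $\outdeg_\nu(w)=1$ (degree three), the ``middle'' face-angle between $\mu_l$ and $\mu_r$ inside $G_\nu$ equals $90(\alpha_w^l+\alpha_w^r)^\circ$, so admissibility forces $\alpha_w^l+\alpha_w^r\in\{1,2\}$, realized by the pairs $(0,1),(1,0),(1,1)$; when $\outdeg_\nu(w)=2$ (degree four) all four angles at $w$ are $90^\circ$, hence $\alpha_w^l=\alpha_w^r=1$. Substituting in the identity, each pole of outdegree $1$ contributes $1$ or $2$ to $\sigma_{\mu_l}-\sigma_{\mu_r}$, and each pole of outdegree $2$ contributes exactly $1$. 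Summing over $\{u,v\}$ yields $\sigma_{\mu_l}-\sigma_{\mu_r}\in\{2,3,4\}$ if $\gamma=0$, $\{2,3\}$ if $\gamma=1$, and $\{2\}$ if $\gamma=2$, which in all three cases is precisely the set of integers in $[2,4-\gamma]$.

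For the converse, starting from rectilinear representations $H_{\mu_l},H_{\mu_r}$ with $\sigma_{\mu_l}-\sigma_{\mu_r}=\delta\in[2,4-\gamma]$, I would choose the pole angles in reverse: take $(\alpha_u^l,\alpha_u^r,\alpha_v^l,\alpha_v^r)=(1,1,1,1)$ if $\gamma=2$; fix $(1,1)$ at the outdegree-$2$ pole and pick any admissible pair of sum $\delta-1\in\{1,2\}$ at the outdegree-$1$ pole if $\gamma=1$; split $\delta\in\{2,3,4\}$ into two summands in $\{1,2\}$ and realize them at $u$ and $v$ if $\gamma=0$. The parallel composition of $H_{\mu_l}$ and $H_{\mu_r}$ through these pole angles is a valid embedding-preserving orthogonal representation (face-angles at both $u$ and $v$ sum to $360^\circ$), has no bends, and is consistent with Lemma~\ref{le:spirality-P-node-2-children}. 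The main obstacle is purely organizational bookkeeping across the three subcases $\gamma\in\{0,1,2\}$; there is no genuine geometric obstruction, because the middle face-angle between $\mu_l$ and $\mu_r$ at each pole can be chosen independently at $u$ and $v$ within its admissible range.
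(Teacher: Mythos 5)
Your overall route is the same as the paper's: read off the coefficients $k_w^d$ from the outdegrees, use the identity of Lemma~\ref{le:spirality-P-node-2-children}, and argue over the three cases $\gamma\in\{0,1,2\}$. The ``only if'' direction is correct and essentially matches the paper (your derivation that the middle angle at an outdegree-one pole equals $90(\alpha_w^l+\alpha_w^r)^\circ$, forcing $\alpha_w^l+\alpha_w^r\in\{1,2\}$, is the same constraint the paper states, made slightly more explicit), and your angle choices in the converse are exactly the ones the paper makes.

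The gap is in the last step of the converse. You justify that the parallel composition of $H_{\mu_l}$ and $H_{\mu_r}$ with the chosen pole angles ``is a valid embedding-preserving orthogonal representation'' only by noting that the face-angles at $u$ and $v$ sum to $360^\circ$, and you declare that ``there is no genuine geometric obstruction.'' But vertex-local angle feasibility is necessary, not sufficient: the composed angle assignment is realizable only if, in addition, every internal face is an orthogonal polygon, and the faces inherited from $H_{\mu_l}$ and $H_{\mu_r}$ take care of themselves while the new face bounded jointly by the two children does not. Indeed, if $\sigma_{\mu_l}-\sigma_{\mu_r}$ were, say, $5$ for type \Pio{2}{11}, you could still pick locally admissible pole angles with all angle sums at $u$ and $v$ equal to $360^\circ$, yet no rectilinear representation of $G_\nu$ exists; so your stated criterion certifies infeasible configurations and cannot be the reason validity holds. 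What is missing is precisely the paper's core computation: for every cycle $C$ through $u$ and $v$ (equivalently, for the new middle face), the number of $90^\circ$ angles minus $270^\circ$ angles inside $C$ equals $\sigma_{\mu_l}-\sigma_{\mu_r}$ plus the contributions of the chosen pole angles, and the hypothesis $\sigma_{\mu_l}-\sigma_{\mu_r}\in[2,4-\gamma]$ together with your angle assignment is exactly what makes this quantity equal to $4$ (cycles avoiding the poles being orthogonal polygons already in $H_{\mu_l}$ or $H_{\mu_r}$). Adding this turning-number verification, case by case, closes the argument; without it the sufficiency direction is asserted rather than proved.
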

\begin{sketch}
	We only give the proof for $\alpha=\beta=2$. The other cases are treated similarly (see the appendix). In this case $G_\nu$ is of type \Pio{2}{22} and we prove that $G_\nu$ is rectilinear planar if and only if $G_{\mu_l}$ and $G_{\mu_r}$ are rectilinear planar for values of spiralities $\sigma_{\mu_l}$ and $\sigma_{\mu_r}$ such that $\sigma_{\mu_l}-\sigma_{\mu_r}=2$. We have $k_u^l=k_u^r=\frac{1}{2}$.
	If $G_\nu$ is rectilinear planar, we have that $\alpha_{u}^l+\alpha_{u}^r = \alpha_{v}^l+\alpha_{v}^r = 2$. By Lemma~\ref{le:spirality-P-node-2-children}, $\sigma_{\mu_l} = \sigma_{\nu} + 1$ and $\sigma_{\mu_r} = \sigma_{\nu} - 1$; hence $\sigma_{\mu_l}-\sigma_{\mu_r}=2$.
	
	Suppose vice versa that $\sigma_{\mu_l}-\sigma_{\mu_r}=2$. We show that $G_\nu$ admits a rectilinear planar representation $H_\nu$. We obtain $H_\nu$ by combining in parallel the two rectilinear planar representations of $G_{\mu_l}$ and $G_{\mu_r}$ and by suitably setting $\alpha_u^d$ and $\alpha_v^d$ ($d \in \{l,r\}$). For any cycle $C$ through $u$ and $v$, the number of $90^\circ$ angles minus the number of $270^\circ$ angles in the interior of $C$ can be expressed by $a_c = \sigma_{\mu_l}-\sigma_{\mu_r} + 1 + 1$ (both the angles at $u$ and $v$ inside $C$ is always of 90$^\circ$ degrees). We then set $\alpha_u^l=\alpha_v^l=\alpha_u^r=\alpha_v^r=1$, which guarantees $a_c = 4$.
	Also, any other cycle not passing through $u$ and $v$ is an orthogonal polygon because it belongs to a rectilinear planar representation of either $G_{\mu_l}$ or $G_{\mu_r}$.
\end{sketch}

\begin{restatable}{lemma}{itwotypeone}\label{le:P-2-children-representability-type1}
	Let $\nu$ be a P-node of type \Pio{2}{\alpha\beta} with children $\mu_l$ and $\mu_r$. Suppose that $G_{\mu_l}$ and $G_{\mu_r}$ are rectilinear planar with representability intervals $I_{\mu_l}=[m_l, M_l]$ and $I_{\mu_r} = [m_r, M_r]$, respectively. Graph $G_\nu$ is rectilinear planar if and only if $[m_l-M_r,M_l-m_r] \cap [2,4-\gamma] \neq \emptyset$, where $\gamma = \alpha + \beta -2$. Also, if this representability condition holds then the representability interval of $G_\nu$ is $I_\nu = [\max\{m_l-2,m_r\}+\frac{\gamma}{2}, \min\{M_l, M_r+2\}-\frac{\gamma}{2}]$.
\end{restatable}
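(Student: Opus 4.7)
The plan is to derive both claims from Lemma~\ref{le:P-2-children-support-type1}, which reduces rectilinear planarity of $G_\nu$ to the existence of spiralities $\sigma_{\mu_l} \in I_{\mu_l}$ and $\sigma_{\mu_r} \in I_{\mu_r}$ whose difference falls in $[2, 4-\gamma]$, together with Lemma~\ref{le:spirality-P-node-2-children}, which relates $\sigma_\nu$ to $\sigma_{\mu_l}$, $\sigma_{\mu_r}$, and the angle variables at the poles.

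For the representability condition, as the pair $(\sigma_{\mu_l}, \sigma_{\mu_r})$ varies over the rectangle $[m_l, M_l] \times [m_r, M_r]$, their difference attains every admissible value in $[m_l - M_r, M_l - m_r]$. Hence, by the support lemma, $G_\nu$ is rectilinear planar if and only if this interval intersects $[2, 4-\gamma]$, which is exactly the claimed condition.

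For the representability interval, set $A_l = k_u^l \alpha_u^l + k_v^l \alpha_v^l$ and $A_r = k_u^r \alpha_u^r + k_v^r \alpha_v^r$. By Lemma~\ref{le:spirality-P-node-2-children}, every rectilinear planar representation of $G_\nu$ satisfies $\sigma_\nu = \sigma_{\mu_l} - A_l = \sigma_{\mu_r} + A_r$ and hence $A_l + A_r = \sigma_{\mu_l} - \sigma_{\mu_r} \in [2, 4-\gamma]$ by the support lemma. I would then split into three subcases according to $\gamma \in \{0, 1, 2\}$. For type \Pio{2}{22} ($\gamma = 2$) all $k$-coefficients equal $\frac{1}{2}$, and $A_l + A_r = 2$ forces $A_l = A_r = 1$. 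For type \Pio{2}{11} ($\gamma = 0$) all $k$-coefficients equal $1$, and $A_l, A_r$ range independently over $\{0, 1, 2\}$ subject to $A_l + A_r \geq 2$. For type \Pio{2}{12} ($\gamma = 1$) the degree-$4$ pole forces its two $\alpha$-values to $1$, each contributing a fixed $\frac{1}{2}$ to both $A_l$ and $A_r$, while the two free $\alpha$'s at the other pole yield $A_l, A_r \in \{\frac{1}{2}, \frac{3}{2}\}$. In each case, the set of achievable values of $\sigma_\nu$ is $[m_l - A_l^{\max}, M_l - A_l^{\min}] \cap [m_r + A_r^{\min}, M_r + A_r^{\max}]$, which simplifies to the claimed interval $[\max\{m_l - 2, m_r\} + \gamma/2, \min\{M_l, M_r + 2\} - \gamma/2]$.

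The hard part is proving realizability: for every $\sigma_\nu$ in the claimed interval an admissible quadruple $(\sigma_{\mu_l}, \sigma_{\mu_r}, A_l, A_r)$ must exist, so that Lemma~\ref{le:P-2-children-support-type1} produces a rectilinear planar representation of spirality $\sigma_\nu$. For $\gamma = 2$ this is immediate since $\sigma_{\mu_l}, \sigma_{\mu_r}$ are uniquely determined and lie in their intervals precisely when $\sigma_\nu$ is in the intersection. For $\gamma = 0$ one can choose $A_l = \min\{2, M_l - \sigma_\nu\}$ and $A_r = \min\{2, \sigma_\nu - m_r\}$; the feasibility condition $M_l - m_r \geq 2$ guarantees $A_l + A_r \geq 2$. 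The most delicate subcase is $\gamma = 1$: among the three admissible pairs $(A_l, A_r) \in \{(\frac{1}{2}, \frac{3}{2}), (\frac{3}{2}, \frac{1}{2}), (\frac{3}{2}, \frac{3}{2})\}$, one must show that at least one places both $\sigma_\nu + A_l$ in $[m_l, M_l]$ and $\sigma_\nu - A_r$ in $[m_r, M_r]$. This is handled by a short case analysis driven by which of the bounds $m_l - 2, m_r$ (resp.\ $M_l, M_r + 2$) is tight at the current endpoint of the claimed interval.
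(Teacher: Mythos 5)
Your proposal is correct and follows essentially the same route as the paper: the representability condition via Lemma~\ref{le:P-2-children-support-type1} together with the range of achievable differences $\sigma_{\mu_l}-\sigma_{\mu_r}$, and the interval via a case analysis on $\gamma\in\{0,1,2\}$ using Lemma~\ref{le:spirality-P-node-2-children}, with the delicate $\gamma=1$ realizability resolved exactly as in the paper's appendix (the only obstructing configuration forces $M_l-m_r=1$, contradicting the representability condition). Your $A_l,A_r$ bookkeeping and the uniform choice $A_l=\min\{2,M_l-\sigma_\nu\}$, $A_r=\min\{2,\sigma_\nu-m_r\}$ for $\gamma=0$ are just a compact repackaging of the paper's subcase analysis.
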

\begin{sketch}
	We consider the case $\alpha=\beta=2$. The other cases are treated similarly (see the appendix).
	In this case $G_\nu$ is of type \Pio{2}{22} and we prove that $I_\nu = [\max\{m_l-2,m_r\}+1, \min\{M_l, M_r+2\}-1]$.
	
	Assume first that $G_\nu$ is rectilinear planar and let $H_\nu$ be a rectilinear planar representation of $G_\nu$ with spirality $\sigma_\nu$. Let $H_{\mu_l}$ and $H_{\mu_r}$ be the rectilinear planar representations of $G_{\mu_l}$ and $G_{\mu_r}$ contained in $H_\nu$, and let  $\sigma_{\mu_l}$ and $\sigma_{\mu_r}$ their spiralities. Since both $u$ and $v$ have outdegree two in $G_\nu$ we have that $\alpha_{u}^l+\alpha_{u}^r = \alpha_{v}^l+\alpha_{v}^r = 2$. By Lemma~\ref{le:spirality-P-node-2-children},  $\sigma_{\mu_l} = \sigma_{\nu} + 1$ and $\sigma_{\mu_r} = \sigma_{\nu} - 1$. By the representability condition $\sigma_{\mu_r}=\sigma_{\mu_l}-2$. Hence $\sigma_{\mu_r}\ge m_l-2$ and $\sigma_{\mu_r}\ge \max\{m_l-2,m_r\}$.
	Also by $\sigma_{\nu}=\sigma_{\mu_r}+1$, $\sigma_\nu\ge \max\{m_l-2,m_r\}+1$. Similarly, by the representability condition $\sigma_{\mu_l}=\sigma_{\mu_r}+2$. Hence $\sigma_{\mu_l}\le M_r+2$ and $\sigma_{\mu_l}\le \max\{M_l,M_r+2\}$. Since $\sigma_{\mu_l}=\sigma_\nu+1$ we have $\sigma_\nu\le \max\{M_l,M_r+2\}-1$.
	
	Assume vice versa that $k$ is an integer in the interval $I_\nu = [\max\{m_l-2,m_r\}+1,\min\{M_l,M_r+2\}-1]$. We show that there exists a rectilinear planar representation of $G_\nu$ with spirality $\sigma_\nu=k$. We have  $k +1 \in [\max\{m_l,m_r+2\},\min\{M_l,M_r+2\}]$ and therefore $k +1 \in [m_l,M_l]$. Hence there is a rectilinear planar representation $H_{\mu_l}$ of $G_{\mu_l}$ with spirality $\sigma_{\mu_l}=k+1$. Similarly, $k -1 \in [\max\{m_l-2,m_r\},\min\{M_l-2,M_r\}]$ and therefore $k-1 \in [m_r,M_r]$. Hence there is a rectilinear planar representation $H_{\mu_r}$ of $G_{\mu_r}$ with spirality $\sigma_{\mu_r} = k-1$. By the representability condition, $G_\nu$ has a rectilinear planar representation $H_\nu$; with the same construction as in Lemma~\ref{le:P-2-children-support-type1},~the~spirality~of~$H_\nu$~is~$\sigma_\nu=k$.
\end{sketch}

\medskip The proofs of the next lemmas are similar to Lemma~\ref{le:P-2-children-representability-type1} (see the appendix).

\begin{restatable}{lemma}{itwotypetwo}\label{le:P-2-children-representability-type2}
	Let $\nu$ be a P-node of type \Pio{3d}{\alpha\beta} with children $\mu_l$ and $\mu_r$.
	Suppose that $G_{\mu_l}$ and $G_{\mu_r}$ are rectilinear planar with representability intervals $I_{\mu_l}=[m_l, M_l]$ and $I_{\mu_r} = [m_r, M_r]$, respectively. Graph $G_\nu$ is rectilinear planar if and only if $[m_l-M_r,M_l-m_r] \cap [\frac{5}{2},\frac{7}{2}-\gamma] \neq \emptyset$, where $\gamma = \alpha + \beta -2$. Also, if this representability condition holds then the representability interval of $G_\nu$ is $I_\nu = [\max\{m_l-\frac{3}{2},m_r+1\}+\frac{\gamma-\rho(d)}{2},\min\{M_l-\frac{1}{2}, M_r+2\}-\frac{\gamma+\rho(d)}{2}]$, where $\rho(\cdot)$ is a function such that $\rho(r)=1$ and $\rho(l)=0$.
\end{restatable}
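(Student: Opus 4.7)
The plan is to mirror the structure of the proof of Lemma~\ref{le:P-2-children-representability-type1}, handling the additional asymmetries introduced by the fact that one pole of $\nu$ has indegree three and that the label $d$ distinguishes the two children.

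First, I would establish a support lemma analogous to Lemma~\ref{le:P-2-children-support-type1}: $G_\nu$ is rectilinear planar if and only if its children admit rectilinear planar representations with spiralities $\sigma_{\mu_l}$ and $\sigma_{\mu_r}$ satisfying $\sigma_{\mu_l} - \sigma_{\mu_r} \in [\frac{5}{2}, \frac{7}{2} - \gamma]$. Let $u$ be the pole of indegree three in $G_\nu$ and $v$ the pole of indegree two. Since $u$ has degree four, $\alpha_u^l = \alpha_u^r = 1$ are forced; the coefficients $k_u^l, k_u^r$ (which simply swap between the cases $d=l$ and $d=r$) always satisfy $k_u^l \alpha_u^l + k_u^r \alpha_u^r = \frac{3}{2}$. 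At $v$: if $\beta=2$ all angles are forced and $k_v^l \alpha_v^l + k_v^r \alpha_v^r = 1$; if $\beta=1$ then $v$ has degree three and admits $\alpha_v^l + \alpha_v^r \in \{1,2\}$ (the value $0$ being forbidden since three $180^\circ$ angles cannot fit around $v$), yielding $k_v^l \alpha_v^l + k_v^r \alpha_v^r \in \{1,2\}$. Plugging these into Lemma~\ref{le:spirality-P-node-2-children} produces the range $[\frac{5}{2}, \frac{7}{2} - \gamma]$; the reverse direction is the standard parallel-composition construction used in Lemma~\ref{le:P-2-children-support-type1}, checking that every internal face is an orthogonal polygon.

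The representability condition then follows by intersecting this support range with the achievable differences $\sigma_{\mu_l} - \sigma_{\mu_r} \in [m_l - M_r, M_l - m_r]$. For the representability interval I would use the two identities of Lemma~\ref{le:spirality-P-node-2-children}, namely $\sigma_\nu = \sigma_{\mu_l} - k_u^l - k_v^l \alpha_v^l$ and $\sigma_\nu = \sigma_{\mu_r} + k_u^r + k_v^r \alpha_v^r$. Writing $k_u^l = \frac{1+\rho(d)}{2}$ and $k_u^r = \frac{2-\rho(d)}{2}$ uniformly captures the dependence on $d$. The lower endpoint of $I_\nu$ is attained by taking $\sigma_{\mu_l} = m_l$ with $\alpha_v^l$ maximal (from the first identity) or $\sigma_{\mu_r} = m_r$ with $\alpha_v^r$ minimal (from the second); elementary simplification yields $\max\{m_l - \frac{3}{2},\, m_r + 1\} + \frac{\gamma - \rho(d)}{2}$, and the upper endpoint is obtained symmetrically.

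The main obstacle is the combined case analysis on $d \in \{l,r\}$ and $\beta \in \{1,2\}$, especially the subcase $\beta = 1$ in which $\alpha_v^l$ and $\alpha_v^r$ are not individually fixed but are jointly constrained by the requirement that the three angles around $v$ sum to $360^\circ$. Verifying that every pair $(\sigma_{\mu_l}, \sigma_{\mu_r})$ in the admissible rectangle can be realized simultaneously with a consistent angle assignment at $v$, so that no value in the claimed interval is missed, is the subtle point; this is precisely what the forward construction in the support lemma ensures.
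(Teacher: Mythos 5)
Your plan follows the paper's own route (a support lemma pinning $\sigma_{\mu_l}-\sigma_{\mu_r}$ to $[\frac{5}{2},\frac{7}{2}-\gamma]$, then the two identities of Lemma~\ref{le:spirality-P-node-2-children} with the uniform coefficients $k_u^l=\frac{1+\rho(d)}{2}$, $k_u^r=\frac{2-\rho(d)}{2}$), and your angle analysis at the two poles and the endpoint algebra do reproduce the stated $I_\nu$. The genuine gap is in the half you dismiss at the end: proving that \emph{every} value $k\in I_\nu$ is achieved. This is \emph{not} ``precisely what the forward construction in the support lemma ensures.'' That construction only says: once you have child representations whose spiralities differ by an admissible amount, some rectilinear $H_\nu$ exists. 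To realize a prescribed $k$ you must additionally exhibit $\sigma_{\mu_l}\in[m_l,M_l]$ and $\sigma_{\mu_r}\in[m_r,M_r]$ compatible with a concrete angle assignment; e.g.\ for type \Pio{3l}{11} the identities force $\sigma_{\mu_l}\in\{k+\frac{1}{2},k+\frac{3}{2}\}$ and $\sigma_{\mu_r}\in\{k-2,k-1\}$, and the pairing $\sigma_{\mu_l}=k+\frac{1}{2}$, $\sigma_{\mu_r}=k-1$ is forbidden (difference $\frac{3}{2}$, inadmissible since $\alpha_u^l+\alpha_u^r\ge 1$ at the degree-3 pole). Nothing in your sketch rules out the situation where these forbidden choices are the only ones available inside $[m_l,M_l]$ and $[m_r,M_r]$, so as written you only get the ``only if'' half of the interval claim (an upper bound on the achievable spiralities), not tightness.

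The paper closes exactly this hole with two ingredients absent from your proposal: a parity argument (in the $\gamma=0$ cases the representable spiralities of $G_{\mu_l}$ and the target $k$ lie in opposite parity classes, so $[k+\frac{1}{2},k+\frac{3}{2}]\cap[m_l,M_l]\neq\emptyset$ forces one of the two endpoints $k+\frac{1}{2}$ or $k+\frac{3}{2}$ to lie in $[m_l,M_l]$, and symmetrically $k-2$ or $k-1$ in $[m_r,M_r]$), and a use of the representability condition to exclude the bad degenerate combination: if only $\sigma_{\mu_l}=k+\frac{1}{2}$ and only $\sigma_{\mu_r}=k-1$ were available, then $M_l=k+\frac{1}{2}$ and $m_r=k-1$, giving $M_l-m_r=\frac{3}{2}<\frac{5}{2}$ and contradicting $[m_l-M_r,M_l-m_r]\cap[\frac{5}{2},\frac{7}{2}-\gamma]\neq\emptyset$; this is the content of Claim~\ref{cl:either3/2and2-L} (``either $k+\frac{3}{2}\in[m_l,M_l]$ or $k-2\in[m_r,M_r]$''), after which a three-way sub-case choice of $(\sigma_{\mu_l},\sigma_{\mu_r})$ together with the matching choice of $\alpha_u^l\in\{0,1\}$ yields spirality exactly $k$. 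The $\gamma=1$ cases are easier since all angles are forced and $\sigma_{\mu_l}=k+1$, $\sigma_{\mu_r}=k-\frac{3}{2}$ (for $d=l$) are determined, but even there you still owe the one-line check that these determined values lie in the children's intervals. Your proof needs this achievability argument spelled out; as it stands, attributing it to the support lemma is a misplacement of the real work.
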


\begin{restatable}{lemma}{itwotypethree}\label{le:P-2-children-representability-type3}
	Let $\nu$ be a P-node of type \Pin{3dd'} with children $\mu_l$ and $\mu_r$.
	Suppose that $G_{\mu_l}$ and $G_{\mu_r}$ are rectilinear planar with representability intervals $I_{\mu_l}=[m_l, M_l]$ and $I_{\mu_r} = [m_r, M_r]$, respectively. Graph $G_\nu$ is rectilinear planar if and only if $3 \in [m_l-M_r,M_l-m_r]$. Also, if this representability condition holds then the representability interval of $G_\nu$ is $I_\nu = [\max\{m_l-1,m_r+2\}-\frac{\rho(d)+\rho(d')}{2},\min\{M_l-1, M_r+2\}-\frac{\rho(d)+\rho(d')}{2}]$, where $\rho(\cdot)$ is a function such that $\rho(r)=1$ and $\rho(l)=0$.
\end{restatable}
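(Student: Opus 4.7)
The plan is to adapt the template of Lemma~\ref{le:P-2-children-representability-type1}, handling the three subtypes \Pin{3ll}, \Pin{3lr}, and \Pin{3rr} in parallel and then unifying the bounds through the $\rho$ notation. The starting observation is that both poles of $\nu$ have total degree $3+1=4$ in $G_\nu$, so in any orthogonal representation all four incident angles at each pole are $90^\circ$; hence $\alpha_u^l=\alpha_u^r=\alpha_v^l=\alpha_v^r=1$ is forced. From the definition of $k_w^d$, for the pole $w$ whose indegree-two portion is contributed by $G_{\mu_d}$ one has $k_w^d=\tfrac{1}{2}$ and $k_w^{d''}=1$ for $d''\in\{l,r\}\setminus\{d\}$.

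Applying Lemma~\ref{le:spirality-P-node-2-children} in each subtype then yields a relation of the form $\sigma_\nu=\sigma_{\mu_l}-c_l=\sigma_{\mu_r}+c_r$, with $(c_l,c_r)$ equal to $(1,2)$, $(\tfrac{3}{2},\tfrac{3}{2})$, $(2,1)$ for \Pin{3ll}, \Pin{3lr}, \Pin{3rr}, respectively; in every subtype $c_l+c_r=3$. Necessity of the representability condition follows by subtraction: $\sigma_{\mu_l}-\sigma_{\mu_r}=3$, i.e., $3\in[m_l-M_r,M_l-m_r]$. For sufficiency, pick any $k$ in the interval claimed by the lemma, set $\sigma_{\mu_l}=k+c_l$ and $\sigma_{\mu_r}=k-c_r$ (both lie in the children's representability intervals by construction), combine the two rectilinear planar representations in parallel with all pole angles set to $90^\circ$ as forced by the degree-four condition, and invoke the same angle-sum verification as in Lemma~\ref{le:P-2-children-support-type1} to confirm that the bounded face shared by $H_{\mu_l}$ and $H_{\mu_r}$ is an orthogonal polygon; the remaining faces inherit the orthogonal-polygon property from $H_{\mu_l}$ and $H_{\mu_r}$.

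To derive the explicit representability interval, translate the two relations into $\sigma_\nu\in[m_l-c_l,M_l-c_l]\cap[m_r+c_r,M_r+c_r]$. A short per-subtype check gives $c_l-1=2-c_r=\tfrac{\rho(d)+\rho(d')}{2}$, which lets one rewrite the intersection uniformly as $[\max\{m_l-1,m_r+2\}-\tfrac{\rho(d)+\rho(d')}{2},\ \min\{M_l-1,M_r+2\}-\tfrac{\rho(d)+\rho(d')}{2}]$, matching the statement. The main obstacle is not conceptual but purely bookkeeping: assigning $k_w^d$ correctly from the indegree data of each subtype and verifying that the three shift pairs really collapse into the single $\rho$-expression; the combinatorial core is identical to that of Lemmas~\ref{le:P-2-children-support-type1} and~\ref{le:P-2-children-representability-type1}.
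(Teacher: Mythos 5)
Your proposal is correct and takes essentially the same route as the paper: the paper also forces all pole angles to $90^\circ$ (both poles have degree four), uses Lemma~\ref{le:spirality-P-node-2-children} to obtain $\sigma_{\mu_l}-\sigma_{\mu_r}=3$ (its support Lemma~\ref{le:P-2-children-support-type3}) and then computes the interval separately for \Pin{3ll}, \Pin{3lr}, \Pin{3rr}, which is exactly what your uniform shifts $(c_l,c_r)$ with $c_l+c_r=3$ and the identity $c_l-1=2-c_r=\frac{\rho(d)+\rho(d')}{2}$ encode more compactly. The one point to tighten is that the concluding angle-sum check is not literally that of Lemma~\ref{le:P-2-children-support-type1}: since the poles here have alias vertices, the turn counts along the inner cycle are $\sigma_{\mu_l}+1$, $\sigma_{\mu_l}+\tfrac{1}{2}$, etc., and the paper's support lemma redoes this computation with those corrections (which indeed sum to $4$ in every subtype) --- bookkeeping you already flag as the only delicate part.
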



%

\subsection{Characterization}\label{sse:characterization}

Lemmas~\ref{le:Q*-representability},~\ref{le:S-representability},~\ref{le:P-3-children-representability},~\ref{le:P-2-children-representability-type1},~\ref{le:P-2-children-representability-type2}, and~\ref{le:P-2-children-representability-type3} give rise to the following characterization.  


\begin{theorem}\label{th:representability-conditions-intervals}
	Let $G$ be a plane series-parallel graph and let $T$ be an SP$Q^*$-tree of $G$. Let $\nu$ be any non-root node of $T$. The plane graph $G_\nu$ is rectilinear planar if and only if it satisfies the representability condition given in Table~\ref{ta:representability}. Also, if such condition is satisfied, $G_\nu$ admits a rectilinear planar representation for all and only the values of spirality in the representability interval given in Table~\ref{ta:representability}.
\end{theorem}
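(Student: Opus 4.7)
The plan is to prove the statement by bottom-up induction on the subtree $T_\nu$ of $T$ rooted at $\nu$; at each step the appropriate lemma from Sections~\ref{sse:Q-S-P3-representability}--\ref{sse:P2-representability} supplies exactly the row of Table~\ref{ta:representability} corresponding to the type of $\nu$. Since $\nu$ is a non-root node of $T$, by the assumptions on $T$ stated in Section~\ref{se:preli}, $\nu$ is either a Q$^*$-node, an S-node (whose children are all Q$^*$- or P-nodes), a P-node with three children, or a P-node with two children of one of the types listed in Section~\ref{sse:P2-representability} (namely \Pio{2}{\alpha\beta}, \Pio{3d}{\alpha\beta}, or \Pin{3dd'}).

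First I would handle the base case, where $\nu$ is a Q$^*$-node: Lemma~\ref{le:Q*-representability} states that $G_\nu$ is always rectilinear planar and gives $I_\nu=[-\ell+1,\ell-1]$, which is precisely the entry for Q$^*$-nodes in Table~\ref{ta:representability}. For the inductive step I would assume that the theorem holds for every proper descendant of $\nu$ in $T$, so that in particular each child $\mu$ of $\nu$ either fails its representability condition (in which case $G_\mu$ is not rectilinear planar, and we argue that $G_\nu$ is not either since any rectilinear planar representation of $G_\nu$ restricts to one of $G_\mu$) or is rectilinear planar with representability interval $I_\mu$ exactly equal to the one given by Table~\ref{ta:representability}.

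Assuming every child of $\nu$ is rectilinear planar with the interval specified by induction, I would then split into cases according to the type of $\nu$ and invoke the matching lemma. If $\nu$ is an S-node I apply Lemma~\ref{le:S-representability} (repeatedly extended to $h$ children as in its proof); if $\nu$ is a P-node with three children I apply Lemma~\ref{le:P-3-children-representability}; if $\nu$ is a P-node with two children I apply Lemma~\ref{le:P-2-children-representability-type1}, \ref{le:P-2-children-representability-type2}, or \ref{le:P-2-children-representability-type3} according to whether its type is \Pio{2}{\alpha\beta}, \Pio{3d}{\alpha\beta}, or \Pin{3dd'}. In each case, the lemma provides both the representability condition and an expression for $I_\nu$ matching the corresponding row of Table~\ref{ta:representability}, which completes the inductive step.

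The main subtlety, rather than any mathematical obstacle, will be the bookkeeping to make sure the case distinction for P-nodes with two children is exhaustive and that each sub-case of \Pio{2}{\alpha\beta}, \Pio{3d}{\alpha\beta}, and \Pin{3dd'} is mapped to the correct entry of Table~\ref{ta:representability}; this is purely a matter of consulting Figure~\ref{fi:P-node-types} to read off $k_w^d$ and the admissible values of $\alpha_w^d$ in each case, and then matching the formulas to the ones produced by Lemmas~\ref{le:P-2-children-representability-type1}--\ref{le:P-2-children-representability-type3}. No new argument is required beyond what the lemmas already deliver, so the theorem follows directly once the case analysis is laid out.
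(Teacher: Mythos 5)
Your proposal is correct and matches the paper's approach: the paper treats the theorem as an immediate consequence of Lemmas~\ref{le:Q*-representability}--\ref{le:P-3-children-representability} and \ref{le:P-2-children-representability-type1}--\ref{le:P-2-children-representability-type3}, and your bottom-up induction on $T_\nu$ simply makes explicit how those lemmas' hypotheses on the children's intervals are discharged. No substantive difference.
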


To finally achieve a characterization of rectilinear series-parallel graphs we need to consider the representability condition that must be verified at the level of the root, when the reference edge is not a dummy edge. Denote by $e=(u,v)$ be the reference edge of $G$ and let $\rho$ be the root of $T$ with respect to $e$. Let $\eta$ be the child of $\rho$ that does not correspond to $e$, and let $u'$ and $v'$ be the alias vertices associated with the poles $u$ and $v$ of $G_\eta$. Suppose that $G_\eta$ is rectilinear planar with representability interval $I_\eta$. 

We say that $G$ satisfies the \emph{root condition} if $I_\eta \cap \Delta_\rho \neq \emptyset$, where $\Delta_\rho$ is defined as follows: $(i)$ $\Delta_\rho=[2,6]$ if $u'$ coincides with $u$ and $v'$ coincides with $v$; $(ii)$ $\Delta_\rho=[3,5]$ if exactly one of $u'$ and $v'$ coincides with $u$ and $v$, respectively; $(iii)$ $\Delta_\rho=4$ if none of $u'$ and $v'$ coincides with $u$ and $v$.

\begin{lemma}\label{le:root-condition}
	Let $e=(u,v)$ be the reference edge of $G$ and let $\rho$ be the root of $T$ with respect to $e$.
	Let $\eta$ be the child of $\rho$ that does not correspond to $e$. Suppose that $G_\eta$ is rectilinear planar with representability interval $I_\eta$. $G$ is rectilinear planar if and only if it satisfies the root condition. Also, if $G$ satisfies the root condition, it admits a rectilinear planar representation $H$ for any value of spirality $\sigma_\eta$ of $H_\eta$ such that $\sigma_\eta \in I_\eta \cap \Delta_\rho$, where $H_\eta$ is the restriction of $H$ to $G_\eta$.     
\end{lemma}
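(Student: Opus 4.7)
The plan is to view $\rho$ as a P-node whose two children are the Q$^*$-node corresponding to $e$ (which, as a single edge, can only be drawn as a straight segment and hence has spirality $0$) and the node $\eta$. Any rectilinear planar representation $H$ of $G$ must therefore be the parallel composition of the straight edge $e$ with a rectilinear planar representation $H_\eta$ of $G_\eta$ having some spirality $\sigma_\eta\in I_\eta$, and I claim that this composition succeeds precisely when $\sigma_\eta\in\Delta_\rho$.

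For the necessity direction I would look at the internal face $f$ of $H$ bounded by $e$ together with the adjacent path $P$ of $H_\eta$ from $u$ to $v$. Being rectilinear, $f$ is an orthogonal polygon, so, traversing it clockwise with interior on the right, the signed sum of corner turns (right minus left) equals $+4$. This sum decomposes as $\sigma_\eta$ (applying the definition of spirality to $H_\eta$ with $P$ as spine) plus the two corner turns $\tau_u,\tau_v$ at the poles between $P$ and $e$ (along $e$ itself there are no turns). Whether each $\tau_w$ is a free degree of freedom or is already absorbed into $\sigma_\eta$ is determined by the alias-vertex configuration: when $w'=w$ (equivalently $\deg_G(w)=2$) the angle at $w$ between $P$ and $e$ is not captured by the spine, so $\tau_w\in\{-1,0,+1\}$; when $w'\neq w$ (i.e., the alias vertex lies on $e$, equivalently $\deg_G(w)\ge 3$) the turn at $w$ is already packaged into $\sigma_\eta$ via the alias edge on $e$, and no further freedom is available. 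A short case analysis on the three alternatives in the definition of $\Delta_\rho$ then converts the face equation $\sigma_\eta+\tau_u+\tau_v=4$ into $\sigma_\eta\in[2,6]$ in case $(i)$, $\sigma_\eta\in[3,5]$ in case $(ii)$, and $\sigma_\eta=4$ in case $(iii)$.

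For the sufficiency direction, given $k\in I_\eta\cap\Delta_\rho$, I would invoke Theorem~\ref{th:representability-conditions-intervals} to obtain a rectilinear planar representation $H_\eta$ with $\sigma_\eta=k$, draw $e$ as a straight segment on the external face, and choose the angles at $u$ and $v$ so that $\tau_u+\tau_v=4-k$; membership of $k$ in $\Delta_\rho$ guarantees that such a choice exists and is compatible with the degrees of $u$ and $v$ in $G$. Every interior face of $H_\eta$ remains an orthogonal polygon, the new face between $e$ and $P$ is orthogonal by construction, and the residual external face is forced to be orthogonal by the global angle-sum consistency at the vertices of an orthogonal representation. The main technical hurdle I anticipate is the signed-turn bookkeeping at the poles across the three alias configurations — in particular, pinning down the sign conventions so that the contributions packed inside $\sigma_\eta$ through the alias edges are correctly identified with the corresponding corner turns of the face cycle — after which the three intervals $[2,6]$, $[3,5]$, and $\{4\}$ drop out directly from the equation $\sigma_\eta+\tau_u+\tau_v=4$.
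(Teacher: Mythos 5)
Your proposal is correct and follows essentially the same route as the paper's proof: both reduce the root check to the requirement that the cycle through $e$ have turning number $4$, write this as $\sigma_\eta$ plus pole/alias-angle contributions whose admissible range ($[-2,2]$, $[-1,1]$, or $\{0\}$) depends on whether the alias vertices coincide with the poles, and obtain sufficiency by choosing those angles so the sum equals $4-\sigma_\eta$. The bookkeeping you flag as the main hurdle is exactly what the paper's $\alpha_{u'},\alpha_{v'}$ notation (angles at the alias vertices in the internal face, forced to $180^\circ$ when the alias lies on the straight edge $e$) resolves.
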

\begin{proof}
	Let $f_{int}$ be the internal face of $G$ incident to $e$. Observe that $u$ and $v$ are the poles of $G_\eta$. Let $u'$ be the alias vertex associated with $u$ and let $v'$ be the alias vertex associated with $v$. $H$ is a rectilinear planar representation of $G$ if and only if the following two conditions hold: The restriction $H_\eta$ of $H$ to $G_\eta$ is a rectilinear planar representation; the number $A$ of right turns minus left turns of any simple cycle of $G$ in $H$ containing $e$ and traversed clockwise in $H$ is equal to $4$. 
	We have $A=\sigma_\eta+\alpha_{u'}+\alpha_{v'}$, where: $\sigma_\eta$ is the spirality of $H_\eta$; for $w\in \{u',v'\}$, $\alpha_w=1$, $\alpha_w=0$, and $\alpha_w=-1$ if the angle formed by $w$ in $f_{int}$ is equal to $90^o$, $180^o$, or $270^o$, respectively. 
	
	According to the definition of root condition, there are three cases to consider: 
	$(i)$ $\Delta_\rho=[2,6]$, $(ii)$ $\Delta_\rho=[3,5]$, and $(iii)$ $\Delta_\rho=4$. 
	Consider Case $(i)$.  Since in this case the alias vertices coincide with the poles, we have $\alpha_{u'} \in [-1,1]$, $\alpha_{v'} \in [-1,1]$, and hence $\alpha_{u'}+\alpha_{v'} \in [-2,2]$.  
	If $G$ is rectilinear planar, we have that $A = \sigma_\eta+\alpha_{u'}+\alpha_{v'}=4$ for some $\sigma_\eta \in I_\eta$ and for $\alpha_{u'}+\alpha_{v'} \in [-2,2]$. Hence, $\sigma_\eta=4-\alpha_{u'}-\alpha_{v'} \in [2,6]$, i.e., the root condition $I_\eta \cap \Delta_\rho \neq \emptyset$ holds.    
	
	Suppose vice versa that the root condition $I_\eta \cap \Delta_\rho \neq \emptyset$ holds. For any value $\sigma_\eta \in I_\eta \cap \Delta_\rho$ there exists a rectilinear planar representation of $H_\eta$ of $G_\eta$ with spirality $\sigma_\eta$. Also, since $\Delta_\rho=[2,6]$, we have that  $4-\sigma_\eta\in [-2,2]$, and therefore, for any possible choice of $\sigma_\eta \in I_\eta \cap \Delta_\rho$, we can suitably choose $\alpha_{u'}$ and $\alpha_{v'}$ such that $\alpha_{u'}+\alpha_{v'}=4-\sigma_\eta$, i.e., $A = \sigma_\eta+\alpha_{u'}+\alpha_{v'}=4$. It follows that $G$ is rectilinear planar and it admits a rectilinear planar representation for any value $\sigma_\eta \in I_\eta \cap \Delta_\rho$. 
	
	Cases~$(ii)$ and~$(iii)$ can be proved analogously, observing that in Case~$(ii)$  $\alpha_{u'}+\alpha_{v'}\in [-1,1]$ and in Case~$(iii)$ $\alpha_{u'}+\alpha_{v'}=0$.  
\end{proof}

The next theorem is an immediate consequence of Lemma~\ref{le:root-condition} and Theorem~\ref{th:representability-conditions-intervals}, and it provides a characterization of rectilinear plane series-parallel graphs.

\begin{theorem}\label{th:rect-characterization}
	Let $G$ be a plane series-parallel graph and let $T$ be an SP$Q^*$-tree of $G$. Let $\eta$ be the root child of $T$. Graph $G$ is rectilinear planar if and only if: (i) $G_\eta$ is rectilinear planar; (ii) $G$ satisfies the root condition.
\end{theorem}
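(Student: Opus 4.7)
The plan is to verify that the theorem is a routine combination of Lemma~\ref{le:root-condition} and Theorem~\ref{th:representability-conditions-intervals}, with the only work being a careful unpacking of the definitions along both directions of the equivalence. Since the statement is presented as an immediate consequence, the proof should be short, and the main task is to make the logical chain explicit while keeping track of which result supplies each implication.

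For the forward direction I would assume $G$ admits a rectilinear planar representation $H$. Restricting $H$ to $G_\eta$ yields a rectilinear planar representation $H_\eta$ of $G_\eta$, establishing condition~(i). Since $\eta$ is a non-root node of $T$ (it is the unique non-edge child of the root P-node $\rho$), Theorem~\ref{th:representability-conditions-intervals} applies and gives a well-defined representability interval $I_\eta$ containing the spirality of $H_\eta$. The hypothesis of Lemma~\ref{le:root-condition} is then satisfied and its statement forces $I_\eta \cap \Delta_\rho \neq \emptyset$, which is condition~(ii).

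For the reverse direction I would assume (i) and (ii) hold. By (i) together with Theorem~\ref{th:representability-conditions-intervals}, $G_\eta$ has a non-empty representability interval $I_\eta$ and admits a rectilinear planar representation with any spirality in $I_\eta$. Condition~(ii) says that $I_\eta \cap \Delta_\rho$ is non-empty, so one may pick any value $\sigma_\eta$ in this intersection; Lemma~\ref{le:root-condition} then constructs a rectilinear planar representation $H$ of $G$ whose restriction to $G_\eta$ has spirality $\sigma_\eta$, by suitably choosing the angles at the alias vertices of the poles of $G_\eta$ on the face bounded by the reference edge.

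There is no real obstacle here, as the argument is essentially a two-line corollary; the main points to be careful about are that Theorem~\ref{th:representability-conditions-intervals} is being applied to $\eta$ rather than to the root $\rho$ (and thus is applicable), and that the representability interval provided by Theorem~\ref{th:representability-conditions-intervals} is exactly the set of spiralities over which Lemma~\ref{le:root-condition} quantifies when checking the root condition. Once these identifications are made, both directions reduce to a direct invocation of the two cited results.
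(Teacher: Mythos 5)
Your proposal is correct and matches the paper, which states the theorem as an immediate consequence of Lemma~\ref{le:root-condition} and Theorem~\ref{th:representability-conditions-intervals}: both directions follow by applying Theorem~\ref{th:representability-conditions-intervals} to the non-root node $\eta$ to obtain $I_\eta$ and then invoking the two directions of Lemma~\ref{le:root-condition} at the root. Your explicit unpacking of this chain is exactly the intended argument.
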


\renewcommand{\arraystretch}{1.5}
\begin{table}[tb]
	\centering
	\scriptsize
	\caption{Representability conditions and intervals for the different types of nodes. In the formulas $\gamma=\alpha+\beta-2$ and $\rho(\cdot)$ is such that $\rho(r)=1$ and $\rho(l)=0$.}\label{ta:representability}
	\begin{tabular}{|l |c |}
		\hline
		\rowcolor{antiquewhite}\multicolumn{2}{|c|}{\bf Q$^*$-node of length $\ell$ -- Lemma~\ref{le:Q*-representability}}\\
		\hline
		{Representability Condition} & true \\
		{Representability Interval} & $[-\ell+1, \ell-1]$\\
		
		\hline
		\rowcolor{antiquewhite}\multicolumn{2}{|c|}{\bf S-node with $h$ children -- Lemma~\ref{le:S-representability}}\\
		\hline
		{Representability Condition} & true \\
		{Representability Interval} & $[\sum_{i=1}^hm_i,\sum_{i=1}^hM_i]$\\
		
		\hline
		\rowcolor{antiquewhite}\multicolumn{2}{|c|}{\bf P-node with three children --  Lemma~\ref{le:P-3-children-representability}}\\
		\hline
		{Representability Condition} & $[m_l-2,M_l-2] \cap [m_c,M_c] \cap [m_r+2,M_r+2] \neq \emptyset$\\
		{Representability Interval} & $[\max\{m_l-2,m_c,m_r+2\},\min\{M_l-2,M_c,M_r+2\}]$\\
		
		\hline
		\rowcolor{antiquewhite}\multicolumn{2}{|c|}{\bf P-node with two children $-$ \Pio{2}{\alpha\beta} --  Lemma~\ref{le:P-2-children-representability-type1}}\\
		\hline
		{Representability Condition} & $[m_l-M_r, M_l-m_r] \cap [2, 4-\gamma] \neq \emptyset$ \\
		{Representability Interval} & $[\max\{m_l-2,m_r\}+\frac{\gamma}{2}, \min\{M_l, M_r+2\}-\frac{\gamma}{2}]$\\
		
		\hline
		\rowcolor{antiquewhite}\multicolumn{2}{|c|}{\bf P-node with two children $-$ \Pio{3d}{\alpha\beta} --  Lemma~\ref{le:P-2-children-representability-type2}}\\
		\hline
		{Representability Condition} & $[m_l-M_r, M_l-m_r] \cap [\frac{5}{2},\frac{7}{2}-\gamma] \neq \emptyset$ \\
		{Representability Interval} & $[\max\{m_l-\frac{3}{2},m_r+1\}+\frac{\gamma-\rho(d)}{2},\min\{M_l-\frac{1}{2}, M_r+2\}-\frac{\gamma+\rho(d)}{2}]$\\
		
		\hline
		\rowcolor{antiquewhite}\multicolumn{2}{|c|}{\bf P-node with two children $-$ \Pin{3dd'} --  Lemma~\ref{le:P-2-children-representability-type3}}\\
		\hline
		{Representability Condition} & $3 \in [m_l-M_r,M_l-m_r]$ \\
		{Representability Interval} & $[\max\{m_l-1,m_r+2\}-\frac{\rho(d)+\rho(d')}{2},\min\{M_l-1, M_r+2\}-\frac{\rho(d)+\rho(d')}{2}]$\\
		\hline
	\end{tabular}
\end{table}


\section{Rectilinear Planarity Testing Algorithm}\label{se:rect-alg-sp}

\begin{theorem}\label{th:testing}
	Let $G$ be an $n$-vertex plane series-parallel graph.
	There exists an $O(n)$-time algorithm that tests whether $G$ admits a planar rectilinear representation and that constructs one in the positive case.
\end{theorem}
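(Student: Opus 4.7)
The plan is to implement the characterization of Theorem~\ref{th:rect-characterization} as a two-pass algorithm on the SPQ$^*$-tree $T$ of $G$. A preprocessing step handles connectivity and the reference edge: if $G$ is biconnected we pick any edge $e=(s,t)$ on the external face as reference, otherwise we add a dummy edge $e=(s,t)$ on the outer face to biconnect $G$; the SPQ$^*$-tree $T$ with respect to $e$ can be built in $O(n)$ time using standard SPQR-tree construction. In the latter case we remember that $e$ is dummy, so that the root condition of Lemma~\ref{le:root-condition} need not be tested (the dummy edge can absorb arbitrary bends and is removed at the end).

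The first pass is a post-order traversal of $T$ that computes, for every non-root node $\nu$, a boolean that records whether the representability condition holds and, in the positive case, the two integers (or half-integers) $m_\nu,M_\nu$ that encode the representability interval $I_\nu = [m_\nu,M_\nu]$. Table~\ref{ta:representability} shows that each case (Q$^*$, S, P with two children of any type, P with three children) depends only on the endpoints of the children intervals together with local degree/type information available at $\nu$, so the work at $\nu$ is $O(1)$ for Q$^*$- and P-nodes and $O(h)$ for an S-node with $h$ children. Since the sum of the numbers of children over all internal nodes of $T$ is $O(n)$, the whole pass runs in $O(n)$ time. If the representability condition fails at some node, or if (for a non-dummy reference edge) the root condition $I_\eta \cap \Delta_\rho \neq \emptyset$ of Lemma~\ref{le:root-condition} fails at the root, we reject; Theorem~\ref{th:rect-characterization} guarantees correctness of the test.

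The second pass is a pre-order traversal that, assuming the test succeeded, picks a feasible spirality for every node and materialises an orthogonal representation with no bends. At the root child $\eta$ we choose any $\sigma_\eta \in I_\eta \cap \Delta_\rho$ (or any $\sigma_\eta \in I_\eta$ if $e$ is dummy) and set the angles $\alpha_{u'},\alpha_{v'}$ at the alias vertices so that the outer face has $A=4$, as in the proof of Lemma~\ref{le:root-condition}. Descending into a node $\nu$ whose spirality $\sigma_\nu$ has just been fixed, we distribute spiralities to its children using the constructive arguments in the proofs of the representability lemmas: for a P-node with three children Lemma~\ref{le:spirality-P-node-3-children} fixes $\sigma_{\mu_l},\sigma_{\mu_c},\sigma_{\mu_r}$ in $O(1)$; for a P-node with two children the assignment of $\sigma_{\mu_l},\sigma_{\mu_r}$ and of the angle variables $\alpha_{\cdot}^{\cdot}$ is determined in $O(1)$ from Lemmas~\ref{le:P-2-children-representability-type1}--\ref{le:P-2-children-representability-type3}; for an S-node with $h$ children we greedily split $\sigma_\nu$ into $\sigma_{\mu_1},\dots,\sigma_{\mu_h}$ with $\sigma_{\mu_i}\in [m_i,M_i]$ and $\sum \sigma_{\mu_i}=\sigma_\nu$, which is a trivial water-filling done in $O(h)$; and for a Q$^*$-node of length $\ell$ with target spirality $\sigma_\nu \in [-\ell+1,\ell-1]$ we realise the required turns along the path in $O(\ell)$, as in the proof of Lemma~\ref{le:Q*-representability}. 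Summed over $T$ the construction is $O(n)$, and the resulting orthogonal representation is converted to an actual straight-line planar drawing in $O(n)$ time by Tamassia's algorithm~\cite{DBLP:journals/siamcomp/Tamassia87}; finally we remove $e$ if it was dummy.

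The main obstacle is not the running time, which follows almost mechanically from the constant-size encoding of intervals plus the fact that the total work at S-nodes and Q$^*$-nodes telescopes to $O(n)$, but rather the bookkeeping needed to guarantee that the top-down choices remain mutually consistent. In particular, when a P-node $\nu$ with two children is processed, the values $\alpha_u^l,\alpha_u^r,\alpha_v^l,\alpha_v^r$ propagated to the children must be compatible with the angles already fixed at $u$ and $v$ by the parent of $\nu$ through Lemmas~\ref{le:spirality-P-node-2-children} and~\ref{le:spirality-P-node-3-children}; the care taken in the proofs of Lemmas~\ref{le:P-2-children-representability-type1}--\ref{le:P-2-children-representability-type3} already shows that for every spirality in the representability interval such a compatible choice exists, so the top-down phase never gets stuck. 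This establishes both the $O(n)$ bound and the correctness of the produced representation.
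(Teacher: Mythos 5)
Your proposal is correct and follows essentially the same route as the paper's proof: a post-order pass computing the representability intervals of Table~\ref{ta:representability} in $O(1)$ time per Q$^*$-/P-node and $O(h)$ per S-node, the root-condition check of Lemma~\ref{le:root-condition}, and a top-down pass that fixes spiralities (greedy splitting at S-nodes, Lemmas~\ref{le:spirality-P-node-3-children} and~\ref{le:P-2-children-representability-type1}--\ref{le:P-2-children-representability-type3} at P-nodes) and realizes the representation. The only cosmetic difference is that you explicitly invoke Tamassia's algorithm for the final drawing step, which the paper already delegates to the preliminaries.
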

\begin{proof}
	If $G$ is biconnected let $e$ be an edge of $G$ on the external face; otherwise, let $e$ be a dummy edge added on the external face to make $G$ biconnected. Let $T$ be an SPQ$^*$-tree of $G$ with respect to $e$. We first show how to perform the test in linear time. If the test is positive, we show how to efficiently construct a rectilinear planar representation of $G$.
	
	\smallskip\noindent{\sf Testing Algorithm.} Based on Theorem~\ref{th:representability-conditions-intervals}, the algorithm visits $T$ in post-order and, for each non-root node $\nu$ of $T$, it checks the representability condition of $\nu$ and computes interval $I_\nu$ if the condition is positive. If the representability condition is violated for some node, the algorithm halts and returns a negative answer. Otherwise, the algorithm reaches the root $\rho$ of $T$. If $e$ is a dummy edge, the algorithm halts and returns a positive answer (since $e$ will not appear in the representation, the algorithm does not need to check anything else). If $e$ is real, let $\eta$ be the child of $\rho$ other than the child associated with $e$ (see Fig.~\ref{fi:spq-tree}). Based on Theorem~\ref{th:rect-characterization}, to complete the test, the algorithm must check the root condition, i.e., it must check whether $I_\eta \cap \Delta_\rho \neq \emptyset$. 
	
	We now analyze the time complexity of the testing algorithm. $T$ can be computed in $O(n)$ time and it consists of $O(n)$ nodes~\cite{DBLP:books/ph/BattistaETT99}. For a node $\nu$ of $T$ that is not a Q$^*$-node, denote by $n_\nu$ the number of children of $\nu$. In the bottom-up visit, each node of $T$ is visited exactly once. By Theorem~\ref{th:representability-conditions-intervals}, for a non-root node $\nu$ of $T$ we have the following: If 
	$\nu$ is a Q$^*$-node, its representability interval can be computed in $O(1)$ time, assuming that the length $\ell$ of the chain of edges represented by $\nu$ is stored at $\nu$ during the construction of $T$. If $\nu$ is an S-node, its representability interval can be computed in $O(n_\nu)$ time. If $\nu$ is a P-node, its representability interval can be computed in $O(1)$ time. 
	Finally, the root condition is easily checked in $O(1)$ time by Lemma~\ref{le:root-condition}. 
	It follows that the whole test takes $O(n)$ time.
	
	\smallskip\noindent{\sf Construction Algorithm.} Suppose that the test is positive. By Theorem~\ref{th:rect-characterization}, the root condition holds, and by  Lemma~\ref{le:root-condition}, for  $\sigma_\eta \in I_\eta \cap \Delta_\rho$, $G$ admits  
	a rectilinear planar representation $H$ such that its restriction $H_\eta$ to $G_\eta$ has spirality $\sigma_\eta$. Hence, the algorithm starts by arbitrarily choosing a value $\sigma_\eta \in I_\eta \cap \Delta_\rho$ (if $e$ is a dummy edge, it can choose $\sigma_\eta$ as any value in $I_\eta$). Then, to construct a rectilinear planar representation $H$ of $G$, the algorithm visits $T$ top-down and determine the right value of spirality required by the component associated with each node of $T$ distinct from $\eta$. Once the spiralities for all nodes of $T$ are determined, $H$ is easily defined by fixing the vertex angles in each component as described in the proofs of  Lemmas~\ref{le:Q*-representability}--\ref{le:P-3-children-representability},~\ref{le:P-2-children-representability-type1}--\ref{le:P-2-children-representability-type3}. To compute the spiralities for the children of $\eta$ we distinguish the following cases:
	
	\smallskip\noindent{\bf Case~1:} $\eta$ is an S-node, with children $\mu_1, \dots, \mu_h$ $(i \in \{1, \dots, h\})$. Let $I_{\mu_i}=[m_i,M_i]$ be the representability interval of $\mu_i$. We must find a value $\sigma_{\mu_i} \in [m_i,M_i]$ for each $i = 1, \dots, h$ such that $\sum_{i=i}^h \sigma_{\mu_i} = \sigma_\eta$. To this aim, initially set $\sigma_{\mu_i} = M_i$ for each $i = 1, \dots, h$ and consider $s = (\sum_{i=i}^h \sigma_{\mu_i}) - \sigma_\eta$. By Lemma~\ref{le:spirality-S-node}, $s \geq 0$. If $s = 0$ we are done. Otherwise, iterate over all $i=1, \dots, h$ and for each $i$
	decrease both $\sigma_{\mu_i}$ and $s$ by the value $\min\{s, M_i - m_i\}$, until $s=0$.
	
	\smallskip\noindent{\bf Case~2:} $\eta$ is a P-node with three children, $\mu_l$, $\mu_c$, and $\mu_r$. By Lemma~\ref{le:spirality-P-node-3-children}, it suffices to set $\sigma_{\mu_l}=\sigma_{\eta}+2$, $\sigma_{\mu_c}=\sigma_{\eta}$, and $\sigma_{\mu_r}=\sigma_{\eta}-2$.
	
	\smallskip\noindent{\bf Case~3:} $\eta$ is a P-node with two children, $\mu_l$ and $\mu_r$. Let $u$ and $v$ be the poles of $\eta$. By Lemma~\ref{le:spirality-P-node-2-children}, $\sigma_{\mu_l}$ and  $\sigma_{\mu_r}$ must be fixed in such a way that $\sigma_{\mu_l} = \sigma_{\eta}+k_u^l\alpha_u^l+k_v^l\alpha_v^l$ and $\sigma_{\mu_r} = \sigma_{\eta}-k_u^r\alpha_u^r-k_v^r\alpha_v^r$. The values of $k_u^l$, $k_v^l$, $k_u^r$, and $k_v^r$ are fixed by the indegree and outdegree of $u$ and $v$. Hence, it suffices to choose the values of $\alpha_u^l$, $\alpha_v^l$, $\alpha_u^r$, $\alpha_v^r$ such that they are consistent with the type of $\eta$ and they yield $\sigma_{\mu_l} \in I_{\mu_l}$ and $\sigma_{\mu_r} \in I_{\mu_r}$. Since each $\alpha_w^d$ $(w \in \{u,v\}, d \in \{l,r\})$ is either $0$ or $1$ there are at most four possible combinations of values to consider.
	
	Once the spiralities for the children of $\eta$ are computed, the algorithm continues its top-down visit, and for each node $\nu$ for which a spirality $\sigma_\nu$ has been fixed, it computes the spiralities of the children of $\nu$ with same procedure as~for~$\eta$.
	Concerning the time complexity, the procedure in Case~1 takes linear time in the number of children of the S-node, while the procedures in Case~2 and Case~3 take constant time. Therefore the whole visit requires $O(n)$ time.
\end{proof}

Table~\ref{ta:runningexample} shows a running example based on Fig.~\ref{fi:prel}. For each P- and S-component it reports the representability interval computed in the bottom-up visit of the tree and the spirality fixed in the top-down visit (see also Fig.~\ref{fi:spiralities}).

\renewcommand{\arraystretch}{1.5}
\begin{table}[!h]
	\centering
	\caption{Running Example based on Figure~\ref{fi:prel}.}\label{ta:runningexample}
	\begin{tabular}{| c| c| c| c|}
		\hline
		\rowcolor{antiquewhite}\textsc{Node Label} &  \textsc{Node Type}
		& \textsc{Repres. Interval} &  \textsc{Spirality in $H$} \\
		\hline
		$\eta$ & S-node & $[-3,3]$& $3$  \\
		\hline
		$\nu_1$ & P-node (2 children) -- \Pio{3r}{11} &$[-2,2]$ & $2$ \\
		\hline
		$\nu_2$ & S-node &$[$-$4,4]$& $4$ \\
		\hline
		$\nu_3$ & S-node &$[$-$\frac{5}{2},\frac{1}{2}]$&$\frac{1}{2}$   \\
		\hline
		$\nu_4$ & P-node (2 children) -- \Pin{3lr} & $[0,0]$ & $0$  \\
		\hline
		$\nu_5$ & P-node (3 children) & $[$-$1,0]$& $0$  \\
		\hline
		$\nu_6$ & P-node (2 children) -- \Pio{2}{12} &  $[$-$\frac{3}{2},\frac{1}{2}]$&$\frac{1}{2}$  \\
		\hline
		$\nu_7$ & P-node (2 children) -- \Pio{2}{22} & $[0,0]$&$0$ \\
		\hline
		$\nu_8$ & S-node &$[$-$\frac{3}{2}, \frac{3}{2}]$ &$\frac{3}{2}$ \\
		\hline
		$\nu_9$ & S-node & $[$-$\frac{3}{2}, \frac{3}{2}]$ &-$\frac{3}{2}$ \\
		\hline
		$\nu_{10}$ & P-node (2 children) -- \Pio{2}{12} &$[$-$\frac{1}{2}, \frac{1}{2}]$ &$\frac{1}{2}$  \\
		\hline
		$\nu_{11}$ & P-node (2 children) -- \Pio{2}{12} &$[$-$\frac{1}{2}, \frac{1}{2}]$&-$\frac{1}{2}$  \\
		\hline
	\end{tabular}
\end{table}

\section{Conclusions and Open Problems}\label{se:open}

We proved that rectilinear planarity testing can be solved in linear time for series-parallel graphs with two terminals. Several open problems can be studied:

\smallskip\noindent \textsf{OP1.} Can we extend Theorem~\ref{th:testing} to 1-connected plane 4-graphs whose biconnected components are two-terminal series-parallel~graphs~(i.e., partial 2-trees)? The work in~\cite{frati-2020} solves the problem for 1-connected outerplanar graphs. 

\smallskip\noindent \textsf{OP2.} What is the time complexity of rectilinear planarity testing for general plane 4-graphs? The question is interesting even for triconnected plane 4-graphs.
A linear-time solution exists for plane 3-graphs ~\cite{DBLP:journals/jgaa/RahmanNN99,DBLP:conf/wg/RahmanN02}.

\smallskip\noindent \textsf{OP3.} Testing rectilinear planarity is NP-complete in the variable embedding setting but it can be solved in $O(n^3 \log n)$-time for series-parallel graphs~\cite{DBLP:journals/corr/abs-1908-05015}.  It is interesting to determine whether this complexity bound can be improved.



\bibliography{bibliography}
\bibliographystyle{abbrvurl}

\clearpage
\appendix
\makeatletter
\noindent
\rlap{\color[rgb]{0.51,0.50,0.52}\vrule\@width\textwidth\@height1\p@}%
\hspace*{7mm}\fboxsep1.5mm\colorbox[rgb]{1,1,1}{\raisebox{-0.4ex}{%
		\large\selectfont\sffamily\bfseries Appendix}}%
\makeatother

\section{Additional Material for Section~\ref{se:characterization}}

\subsection*{Proof of Lemma~\ref{le:P-2-children-representability-type1}}

We first prove the following result.

\itwotypeonesupport*
\begin{proof}
	We distinguish three cases, based on the values of $\alpha$ and $\beta$.
	
	\noindent{\bf Case 1: $\alpha=\beta=1$.}  In this case $G_\nu$ is of type \Pio{2}{11} and we prove that $G_\nu$ is rectilinear planar if and only if $G_{\mu_l}$ and $G_{\mu_r}$ are rectilinear planar for values of spiralities $\sigma_{\mu_l}$ and $\sigma_{\mu_r}$ such that $\sigma_{\mu_l}-\sigma_{\mu_r}\in[2,4]$. For a \Pio{2}{11} component we have $k_u^l=k_v^l=k_u^r=k_v^r=1$. 
	
	If $G_\nu$ is rectilinear planar, we have $1 \leq \alpha_{u}^l+\alpha_{u}^r \leq 2$ and $1 \leq \alpha_{v}^l+\alpha_{v}^r \leq 2$ in any rectilinear planar representation of $G_\nu$. Hence, by Lemma~\ref{le:spirality-P-node-2-children}, for any value of spirality $\sigma_\nu$  we have $\sigma_{\mu_l}-\sigma_{\mu_r}=\alpha_{u}^l+\alpha_{v}^l+\alpha_{u}^r+\alpha_{v}^r \in [2,4]$. 
	
	Suppose vice versa that $G_{\mu_l}$ and $G_{\mu_r}$ are rectilinear planar for values of spirality $\sigma_{\mu_l}$ and $\sigma_{\mu_r}$ such that $\sigma_{\mu_l}-\sigma_{\mu_r}\in[2,4]$. We show that $G_\nu$ admits a rectilinear planar representation $H_\nu$. To define $H_\nu$, we combine in parallel the two rectilinear planar representations of $G_{\mu_l}$ and $G_{\mu_r}$ and suitably assign the values of $\alpha_u^d$ and $\alpha_v^d$ ($d \in \{l,r\}$), depending on the value of $\sigma_{\mu_l}-\sigma_{\mu_r}$. This assignment is such that for any cycle $C$ of $G_\nu$ through $u$ and $v$, the number of $90^\circ$ angles minus the number of $270^\circ$ angles in the interior of $C$ is equal to four.
	Poles $u$ and $v$ split $C$ into two paths $\pi_l$ and $\pi_r$. The spirality $\sigma_{\mu_l}$ equals the number of right turns minus the number of left turns along $\pi_l$ while going from $u$ to $v$, which in turns corresponds to the number of $90^\circ$ angles minus the number of $270^\circ$ angles in the interior of $C$ at the vertices of $\pi_l$. Similarly, $-\sigma_{\mu_r}$ equals the number of right turns minus the number of left turns along $\pi_r$ while going from $v$ to $u$, which in turns corresponds to the number of $90^\circ$ angles minus the number of $270^\circ$ angles in the interior of $C$ at the vertices of $\pi_r$. By also taking into account the angles at $u$ and $v$ inside $C$, the number of $90^\circ$ angles minus the number of $270^\circ$ angles in the interior of $C$ can be expressed as $a_c = \sigma_{\mu_l}-\sigma_{\mu_r} + 4 - \alpha_u^l - \alpha_u^r - \alpha_v^l - \alpha_v^r$. 
	We distinguish the following three cases:
	%
	$(i)$ If $\sigma_{\mu_l}-\sigma_{\mu_r}=2$, then for every pole $w \in \{u,v\}$ we set  
	$\alpha_w^l$ and $\alpha_w^r$ such that $\alpha_w^l + \alpha_w^r = 1$. 
	$(ii)$ If $\sigma_{\mu_l}-\sigma_{\mu_r}=3$, then for one pole $w \in \{u,v\}$ we set $\alpha_w^l$ and $\alpha_w^r$ such that $\alpha_w^l + \alpha_w^r = 1$, and for the other pole $w' \in \{u,v\}$ we set $\alpha_{w'}^l = \alpha_{w'}^r = 1$.
	$(iii)$ If $\sigma_{\mu_l}-\sigma_{\mu_r}=4$, then for every pole $w \in \{u,v\}$ we set  
	$\alpha_w^l = \alpha_w^r = 1$.  
	In all the cases above, we have that $a_c=4$.
	Also, any other cycle not passing through $u$ and $v$ is an orthogonal polygon because it belongs to a rectilinear planar representation of either $G_{\mu_l}$ (with spirality $\sigma_{\mu_l}$) or $G_{\mu_r}$ (with spirality $\sigma_{\mu_r}$).
	
	\smallskip\noindent{\bf Case 2: $\alpha=1, \beta=2$.} In this case $G_\nu$ is of type \Pio{2}{12} 
	and we prove that $G_\nu$ is rectilinear planar if and only if $G_{\mu_l}$ and $G_{\mu_r}$ are rectilinear planar for values of spiralities $\sigma_{\mu_l}$ and $\sigma_{\mu_r}$ such that $\sigma_{\mu_l}-\sigma_{\mu_r}\in[2,3]$. Suppose, w.l.o.g., that $\outdeg_\nu(u)=1$ and $\outdeg_\nu(v)=2$. We have $k_u^l=k_u^r=1$ and $k_v^l=k_v^r=\frac{1}{2}$.
	
	If $G_\nu$ is rectilinear planar, we have $\alpha_v^l+\alpha_v^r=2$ and $\alpha_u^l+\alpha_u^r\in[1,2]$. By Lemma~\ref{le:spirality-P-node-2-children} we have $\sigma_{\mu_l}-\sigma_{\mu_r}=k_{u}^l \alpha_{u}^l +k_{v}^r \alpha_{v}^l +  k_{u}^l \alpha_{u}^r + k_{v}^r\alpha_{v}^r$, and hence $\sigma_{\mu_l}-\sigma_{\mu_r}= \alpha_{u}^l + \frac{1}{2}  \alpha_{v}^l + \alpha_{u}^r + \frac{1}{2} \alpha_{v}^r\in [2,3]$.
	
	Suppose vice versa that $G_{\mu_l}$ and $G_{\mu_r}$ are rectilinear planar for values of spiralities $\sigma_{\mu_l}$ and $\sigma_{\mu_r}$, such that $\sigma_{\mu_l}-\sigma_{\mu_r}\in [2,3]$. We show that $G_\nu$ admits a rectilinear planar representation $H_\nu$. 
	To define $H_\nu$, we combine in parallel the two rectilinear planar representations of $G_{\mu_l}$ and $G_{\mu_r}$ and suitably set $\alpha_u^d$ and $\alpha_v^d$ ($d \in \{l,r\}$). Namely, we set $\alpha_v^l=\alpha_v^r=1$. The values of $\alpha_u^l$ and $\alpha_u^r$ are set as follows: $(i)$ if $\sigma_{\mu_l}-\sigma_{\mu_r}=2$, we set $\alpha_u^l$ and $\alpha_u^r$ such that $\alpha_u^l + \alpha_u^r = 1$; $(ii)$ if $\sigma_{\mu_l}-\sigma_{\mu_r}=3$, we set $\alpha_u^l=\alpha_u^r=1$. With an argument similar to the previous case, for any cycle $C$ through $u$ and $v$, the number of $90^\circ$ angles minus the number of $270^\circ$ angles in the interior of $C$ can be expressed in this case by $a_c = \sigma_{\mu_l}-\sigma_{\mu_r} + 4 - \alpha_u^l - \alpha_u^r - 1$ (the angle at $v$ inside $C$ is always of 90$^\circ$ degrees). In case $(i)$ we have $a_c = 2 + 4 - 1 - 1 = 4$; in case $(ii)$ we have $a_c = 3 + 4 - 2 - 1 = 4$.
	Also, any other cycle not passing through $u$ and $v$ is an orthogonal polygon because it belongs to a rectilinear planar representation of either $G_{\mu_l}$ or $G_{\mu_r}$.
	
	\smallskip\noindent{\bf Case 3: $\alpha=\beta=2$.} In this case $G_\nu$ is of type \Pio{2}{22} and we prove that $G_\nu$ is rectilinear planar if and only if $G_{\mu_l}$ and $G_{\mu_r}$ are rectilinear planar for values of spiralities $\sigma_{\mu_l}$ and $\sigma_{\mu_r}$ such that $\sigma_{\mu_l}-\sigma_{\mu_r}=2$. We have $k_u^l=k_u^r=\frac{1}{2}$.
	
	If $G_\nu$ is rectilinear planar, we have that $\alpha_{u}^l+\alpha_{u}^r = \alpha_{v}^l+\alpha_{v}^r = 2$. By Lemma~\ref{le:spirality-P-node-2-children}, $\sigma_{\mu_l} = \sigma_{\nu} + 1$ and $\sigma_{\mu_r} = \sigma_{\nu} - 1$; hence $\sigma_{\mu_l}-\sigma_{\mu_r}=2$.  
	
	Suppose vice versa that $\sigma_{\mu_l}-\sigma_{\mu_r}=2$. We show that $G_\nu$ admits a rectilinear planar representation $H_\nu$. Again, we obtain $H_\nu$ by combining in parallel the two rectilinear planar representations of $G_{\mu_l}$ and $G_{\mu_r}$ and by suitably setting $\alpha_u^d$ and $\alpha_v^d$ ($d \in \{l,r\}$). In this case, for any cycle $C$ through $u$ and $v$, the number of $90^\circ$ angles minus the number of $270^\circ$ angles in the interior of $C$ can be expressed by $a_c = \sigma_{\mu_l}-\sigma_{\mu_r} + 1 + 1$ (both the angles at $u$ and $v$ inside $C$ is always of 90$^\circ$ degrees). We then set $\alpha_u^l=\alpha_v^l=\alpha_u^r=\alpha_v^r=1$, which guarantees $a_c = 4$.   
	Also, any other cycle not passing through $u$ and $v$ is an orthogonal polygon because it belongs to a rectilinear planar representation of either $G_{\mu_l}$ or $G_{\mu_r}$.
\end{proof}

\itwotypeone*
\begin{proof}
	We first prove the correctness of the representability condition and then the
	the validity of the representability interval.
	
	\smallskip\noindent\textsf{Representability condition.}
	Suppose that $G_\nu$ is rectilinear planar. By Lemma~\ref{le:P-2-children-support-type1}, there exist rectilinear planar representations for $G_{\mu_l}$ and $G_{\mu_r}$ with spiralities $\sigma_{\mu_l}$ and $\sigma_{\mu_r}$, respectively, such that $\sigma_{\mu_l}-\sigma_{\mu_r} \in [2,4-\gamma]$. Hence, $m_l-M_r \leq \sigma_{\mu_l}-\sigma_{\mu_r} \leq 4-\gamma$ and $M_l-m_r \geq \sigma_{\mu_l}-\sigma_{\mu_r} \geq 2$, i.e.,  $[m_l-M_r,M_l-m_r] \cap [2,4-\gamma] \neq \emptyset$.
	
	Suppose, vice versa that $[m_l-M_r,M_l-m_r] \cap [2,4-\gamma] \neq \emptyset$. By hypothesis $G_{\mu_l}$ (resp. $G_{\mu_r}$) is rectilinear planar for every integer value of spirality in the interval $[m_l,M_l]$ (resp. $[m_r,M_r]$). This implies that for every integer value $k$ in the interval $[m_l-M_r, M_l-m_r]$, there exist rectilinear planar representations for $G_{\mu_l}$ and $G_{\mu_r}$ with spiralities $\sigma_{\mu_l}$ and $\sigma_{\mu_r}$ such that $\sigma_{\mu_l}-\sigma_{\mu_r} = k$. Since by hypothesis there exists a value $k \in [m_l-M_r,M_l-m_r] \cap [2,4-\gamma]$, there must be two values of spiralities $\sigma_{\mu_l}$ and $\sigma_{\mu_r}$ for the representations of $G_{\mu_l}$ and $G_{\mu_r}$ such that $\sigma_{\mu_l}-\sigma_{\mu_r} = k \in [2,4-\gamma]$. Hence, by Lemma~\ref{le:P-2-children-support-type1} $G_\nu$ is rectilinear planar.
	
	\smallskip\noindent\textsf{Representability interval.}
	We analyze three cases, based on the values of $\alpha$ and $\beta$.
	
	\smallskip\noindent{\bf Case 1: $\alpha=\beta=1$.}  In this case $G_\nu$ is of type \Pio{2}{11} and we prove that $I_\nu = [\max\{m_l-2,m_r\}, \min\{M_l, M_r+2\}]$.  
	
	Assume first that $\sigma_\nu$ is the spirality of a rectilinear representation of $G_\nu$. By Lemma~\ref{le:spirality-P-node-2-children}, we have $\sigma_\nu \in [m_l-2, M_r+2]$. Also, since for a \Pio{2}{11} component we have $k_u^l=k_v^l=k_u^r=k_v^r=1$, we have $\sigma_\nu = \sigma_{\mu_r} + \alpha_{u}^r + \alpha_{v}^r$, which implies $\sigma_\nu \geq m_r$. Analogously, $\sigma_\nu = \sigma_{\mu_l} - \alpha_{u}^l - \alpha_{v}^l \leq M_l$. Hence, $\sigma_\nu \in I_\nu = [\max\{m_l-2,m_r\}, \min\{M_l,M_r+2\}]$.
	
	Assume vice versa that $k$ is any integer in the interval $I_\nu = [\max\{m_l-2,m_r\}, M = \min\{M_l,M_r+2\}]$. We show that $G_\nu$ admits a rectilinear planar representation with spirality $\sigma_\nu = k$. By hypothesis $k \leq \min\{M_l, M_r + 2\} \leq M_l$; also, $k \geq \max\{m_l-2,m_r\} \geq m_l-2$, i.e., $k+2 \geq m_l$. Hence $[k,k+2] \cap [m_l, M_l] \neq \emptyset$. Analogously, $k \leq \min\{M_l, M_r + 2\} \leq M_r + 2$, i.e., $k-2 \leq M_r$; also, $k \geq \max\{m_l-2,m_r\} \geq m_r$. Hence $[k-2,k] \cap [m_r,M_r] \neq \emptyset$.
	We now distinguish the following sub-cases:
	
	\begin{itemize}	
		\item{\bf Case 1.1: $k\le M_l-2$.} Consider any two rectilinear planar representations $H_{\mu_l}$ of $G_{\mu_l}$ and $H_{\mu_r}$ of $G_{\mu_r}$ with spirality $\sigma_{\mu_l}=k+2$ and $\sigma_{\mu_r}\in [k-2,k] \cap [m_r, M_r] \neq \emptyset$, respectively. Notice that, as already observed, $k+2 \geq m_l$ and by hypothesis $k+2 \le M_l$; hence $\sigma_{\mu_l} \in [m_l,M_l]$. With this choice we have $2\leq \sigma_{\mu_l}-\sigma_{\mu_r}\leq 4$, and we can combine $H_{\mu_l}$ and $H_{\mu_r}$ in parallel as in the proof of Lemma~\ref{le:P-2-children-support-type1} to obtain a rectilinear planar representation $H_\nu$ of $G_\nu$. By Lemma~\ref{le:spirality-P-node-2-children} the spirality of $H_\nu$ equals $\sigma_{\mu_l} - \alpha_{l}^u - \alpha_{l}^v=k+2 - \alpha_{l}^u - \alpha_{l}^v$ and it suffices to set $\alpha_{l}^u=\alpha_{l}^v=1$ (which is always possible, as these two values correspond to $90^\circ$ angles) to get $\sigma_\nu = k$.
		
		\item{\bf Case 1.2: $k= M_l-1$.} Consider any rectilinear planar representation $H_{\mu_l}$ of $G_{\mu_l}$ with spirality $\sigma_{\mu_l} = k+1 = M_l$. To suitably choose the spirality of a rectilinear planar representation $H_{\mu_r}$ of $G_{\mu_r}$, observe that by the representability condition $M_l-2 \geq m_r$ and, as already proved, $M_r \geq k-2$, i.e., $M_r \geq M_l-3$. It follows that $[M_l-3,M_l-2] \cap [m_r,M_r] \neq \emptyset$. Hence, either $M_l-3 \in [m_r,M_r]$ (possibly $m_r=M_r=M_l-3$) or $M_l-2 \in [m_r,M_r]$ (possibly $m_r=M_r=M_l-2$). In the first case, choose any representation $H_{\mu_r}$ with spirality $\sigma_r = M_l-3$, which implies $\sigma_{\mu_l}-\sigma_{\mu_r}=3 \in [2,4]$. In the second case, choose $H_{\mu_r}$ with spirality $\sigma_r = M_l-2$, which implies $\sigma_{\mu_l}-\sigma_{\mu_r}=2 \in [2,4]$. The two representations $H_{\mu_l}$ and $H_{\mu_r}$ can be combined in parallel to get a representation of $G_\nu$ with spirality $\sigma_{\nu}=k$. Namely, by Lemma~\ref{le:spirality-P-node-2-children} we can set $\alpha_u^l=0$ and $\alpha_v^l=1$ (or vice versa); also, if $\sigma_{\mu_r}=M_l-2$ we set $\alpha_u^r=0$ and $\alpha_v^l=1$ (or vice versa), while if $\sigma_{\mu_r}=M_l-3$ we set $\alpha_u^r=\alpha_v^l=1$.
		
		\item{\bf Case 1.3: $k= M_l$.} In this case, we can combine in parallel a representation $H_{\mu_l}$ of $G_{\mu_l}$ with spirality $\sigma_{\mu_l}=k=M_l$ and a representation  
		$H_{\mu_r}$ of $G_{\mu_r}$ with spirality $\sigma_{\mu_r}=k-2=M_l-2$, which implies that $ \sigma_{\mu_l}-\sigma_{\mu_r}=2$. By the representability condition we have $M_l-2 \geq m_r$, i.e., $\sigma_{\mu_r} \geq m_r$; also, $k \leq \min\{M_l,M_r+2\}\leq M_r+2$, i.e., $\sigma_{\mu_r} \leq M_r$. Hence, $\sigma_{\mu_r} \in [m_r,M_r]$. By Lemma~\ref{le:spirality-P-node-2-children} we can set $\alpha_u^l=\alpha_v^l=0$ and $\alpha_u^r=\alpha_v^l=1$ to get a representation of $G_\nu$ with spirality $\sigma_\nu = k$. 
	\end{itemize}
	
	\smallskip\noindent{\bf Case 2: $\alpha=1, \beta=2$.}  In this case $G_\nu$ is of type \Pio{2}{12} and we prove that $I_\nu = [\max\{m_l-2,m_r\}+\frac{1}{2}, \min\{M_l, M_r+2\}-\frac{1}{2}]$.
	
	Assume first that $G_\nu$ is rectilinear planar and let $H_\nu$ be a rectilinear planar representation of $G_\nu$ with spirality $\sigma_\nu$. Let $H_{\mu_l}$ and $H_{\mu_r}$ be the rectilinear planar representations of $G_{\mu_l}$ and $G_{\mu_r}$ contained in $H_\nu$, and let $\sigma_{\mu_l}$ and $\sigma_{\mu_r}$ be their corresponding spiralities. 
	By Lemma~\ref{le:P-2-children-support-type1}, $\sigma_{\mu_l} - \sigma_{\mu_r} \in [2,3]$, i.e., $\sigma_{\mu_l} \in [2+\sigma_{\mu_r}, 3+\sigma_{\mu_r}]$. Since $\sigma_{\mu_l} \in [m_l,M_l]$ and $\sigma_{\mu_r} \in [m_r,M_r]$, we have $\sigma_{\mu_l}\ge \max\{m_l,m_r+2\}$.
	
	Suppose, w.l.o.g, that $\outdeg_\nu(v)=2$ and $\outdeg_\nu(u)=1$. We have $k_{u}^r=k_{u}^l=1$,  $k_{v}^r=k_{v}^l=\frac{1}{2}$, $\alpha_{u}^l \in [0,1]$, $\alpha_{u}^r \in [0,1]$, and $\alpha_{v}^l=\alpha_{v}^r = 1$. 
	By Lemma~\ref{le:spirality-P-node-2-children} we have $\sigma_\nu = \sigma_{\mu_l} - \alpha_{u}^l - \frac{1}{2} \alpha_{v}^l$. Since $-\alpha_{u}^l - \frac{1}{2} \alpha_{v}^l\ge -\frac{3}{2}$, we have $\sigma_\nu\ge \max\{m_l,m_r+2\}-\frac{3}{2}$. It follows that $\sigma_\nu\ge \max\{m_l-2,m_r\}+\frac{1}{2}$. Analogously, since $\sigma_{\mu_r} \in [\sigma_{\mu_l}-3,\sigma_{\mu_l}-2]$, we have $\sigma_{\mu_r}\le \min\{M_l-2,M_r\}$. By Lemma~\ref{le:spirality-P-node-2-children} we have $\sigma_\nu = \sigma_{\mu_r} + \alpha_{u}^r + \frac{1}{2} \alpha_{v}^r$. Since $\alpha_{u}^r + \frac{1}{2} \alpha_{v}^r \le \frac{3}{2}$,  we have $\sigma_\nu\le \min\{M_l-2,M_r\}+\frac{3}{2}$. It follows that  $\sigma_\nu\le \min\{M_l,M_r+2\}-\frac{1}{2}$. Therefore, $\sigma_\nu \in I_\nu$.
	
	Assume vice versa that $k$ is a semi-integer in the interval $I_\nu = [\max\{m_l-2,m_r\}+\frac{1}{2},\min\{M_l,M_r+2\}-\frac{1}{2}]$. We show that $G_\nu$ has a rectilinear planar representation with spirality $\sigma_\nu = k$.
	Since $k \in [m_l-\frac{3}{2}, M_l-\frac{1}{2}]$ we have $k + \frac{1}{2} \leq M_l$ and $k + \frac{3}{2} \geq m_l$, i.e., $[k+\frac{1}{2},k+\frac{3}{2}]\cap [m_l,M_l] \neq \emptyset$. Also, since $m_l$ and $M_l$ are both integer numbers while $k$ is semi-integer, it is impossible to have $k + 1 = m_l = M_l$. It follows that $k+\frac{1}{2} \in [m_l,M_l]$ or $k+\frac{3}{2}\in [m_l,M_l]$. 
	With the same reasoning, we have $k \in [m_r+\frac{1}{2}, M_r+\frac{3}{2}]$ and $[k-\frac{3}{2},k-\frac{1}{2}] \cap [m_r,M_r] \neq \emptyset$. Hence, $k-\frac{3}{2}\in [m_r,M_r]$ or $k-\frac{1}{2}\in [m_r,M_r]$. We now prove that $k+\frac{3}{2} \in [m_l,M_l]$ or $k-\frac{3}{2} \in [m_r,M_r]$. Suppose by contradiction that $k+\frac{3}{2}\not \in [m_l,M_l]$ and $k-\frac{3}{2}\not \in [m_r,M_r]$. In that case $k+\frac{1}{2}\in [m_l,M_l]$ and $k-\frac{1}{2} \in [m_r,M_r]$. Consequently, $k+\frac{1}{2}=M_l$ and $k-\frac{1}{2}=m_r$. Hence, $M_l-m_r=1$ and, by the representability condition, $G_\nu$ is not rectilinear planar, a contradiction. 
	As in the previous case, a rectilinear representation of $G_\nu$ with spirality $k$ is obtained by combining in parallel a representations $H_{\mu_l}$ of $G_{\mu_l}$ with spirality $\sigma_{\mu_l}$ and a representation $H_{\mu_r}$ of $G_{\mu_r}$ with spirality $\sigma_{\mu_r}$, for two suitable values  $\sigma_{\mu_l}$ and $\sigma_{\mu_r}$. Based on the aforementioned analysis, we distinguish the following sub-cases:      
	
	\begin{itemize}
		\item \textbf{Case 2.1: $k+\frac{3}{2}\not \in [m_l,M_l]$.} 
		This implies that $k+\frac{1}{2}\in [m_l,M_l]$ and $k-\frac{3}{2}\in [m_r,M_r]$, and therefore we set $\sigma_{\mu_l}=k+\frac{1}{2}$ and $\sigma_{\mu_r}=k-\frac{3}{2}$.
		
		\item \textbf{Case 2.2: $k-\frac{3}{2}\not \in [m_r,M_r]$.} 
		This implies that $k+\frac{3}{2}\in [m_l,M_l]$ and $k-\frac{1}{2}\in [m_r,M_r]$, and therefore we set $\sigma_{\mu_l}=k+\frac{3}{2}$ and $\sigma_{\mu_r}=k-\frac{1}{2}$.
		
		\item \textbf{Case 2.3: $k+\frac{3}{2}\in [m_l,M_l]$ and $k-\frac{3}{2}\in [m_r,M_r]$.} We set $\sigma_{\mu_l}=k+\frac{3}{2}$ and $\sigma_{\mu_r}=k-\frac{3}{2}$.
	\end{itemize} 
	
	Notice that in all the three sub-cases we have $\sigma_{\mu_l}-\sigma_{\mu_r}\in [2,3]$, hence by Lemma~\ref{le:P-2-children-support-type1} there exists a rectilinear planar representation $H_\nu$ of $G_\nu$ that contains $H_{\mu_l}$ and $H_{\mu_r}$. It remains to prove that the spirality $\sigma_\nu$ of $H_\nu$ is equal to $k$. Suppose, w.l.o.g, that $\outdeg_\nu(u)=1$ and $\outdeg_\nu(v)=2$.  We have  $k_{u}^r=k_{u}^l=1$ and $k_{v}^r=k_{v}^l=\frac{1}{2}$. Since $G_\nu$ is rectilinear planar, we have $\alpha_{u}^l\in[0,1]$ and  $\alpha_{v}^l=1$. By Lemma~\ref{le:spirality-P-node-2-children} we have $\sigma_\nu = \sigma_{\mu_l} - \alpha_{u}^l - \frac{1}{2} \alpha_{v}^l$, where $\sigma_\nu$ is the spirality of the representation $H_\nu$ of $G_\nu$. In Case~2.1 we have $\sigma_\nu = k+\frac{1}{2} - \alpha_{u}^l - \frac{1}{2} \alpha_{v}^l$. By choosing $\alpha_{u}^l=0$ and $\alpha_{v}^l=1$ we have $\sigma_\nu = k$. In Case~2.2 and in Case~2.3  we have $\sigma_\nu = k+\frac{3}{2} - \alpha_{u}^l - \frac{1}{2} \alpha_{v}^l$. By choosing $\alpha_{u}^l=1$ and  $\alpha_{v}^l=1$ we have  $\sigma_\nu = k$.
	
	\smallskip\noindent{\bf Case 3: $\alpha=\beta=2$.}  In this case $G_\nu$ is of type \Pio{2}{22} and we prove that $I_\nu = [\max\{m_l-2,m_r\}+1, \min\{M_l, M_r+2\}-1]$. 
	
	Assume first that $G_\nu$ is rectilinear planar and let $H_\nu$ be a rectilinear planar representation of $G_\nu$ with spirality $\sigma_\nu$. Let $H_{\mu_l}$ and $H_{\mu_r}$ be the rectilinear planar representations of $G_{\mu_l}$ and $G_{\mu_r}$ contained in $H_\nu$, and let  $\sigma_{\mu_l}$ and $\sigma_{\mu_r}$ their spiralities. Since both $u$ and $v$ have outdegree two in $G_\nu$ we have that $\alpha_{u}^l+\alpha_{u}^r = \alpha_{v}^l+\alpha_{v}^r = 2$. By Lemma~\ref{le:spirality-P-node-2-children},  $\sigma_{\mu_l} = \sigma_{\nu} + 1$ and $\sigma_{\mu_r} = \sigma_{\nu} - 1$. By the representability condition $\sigma_{\mu_r}=\sigma_{\mu_l}-2$. Hence $\sigma_{\mu_r}\ge m_l-2$ and $\sigma_{\mu_r}\ge \max\{m_l-2,m_r\}$. 
	Also by $\sigma_{\nu}=\sigma_{\mu_r}+1$, $\sigma_\nu\ge \max\{m_l-2,m_r\}+1$. Similarly, by the representability condition $\sigma_{\mu_l}=\sigma_{\mu_r}+2$. Hence $\sigma_{\mu_l}\le M_r+2$ and $\sigma_{\mu_l}\le \max\{M_l,M_r+2\}$. Since $\sigma_{\mu_l}=\sigma_\nu+1$ we have $\sigma_\nu\le \max\{M_l,M_r+2\}-1$.
	
	Assume vice versa that $k$ is an integer in the interval $I_\nu = [\max\{m_l-2,m_r\}+1,\min\{M_l,M_r+2\}-1]$. We show that there exists a rectilinear planar representation of $G_\nu$ with spirality $\sigma_\nu=k$. We have  $k +1 \in [\max\{m_l,m_r+2\},\min\{M_l,M_r+2\}]$ and therefore $k +1 \in [m_l,M_l]$. Hence there exists a rectilinear planar representation $H_{\mu_l}$ of $G_{\mu_l}$ with spirality $\sigma_{\mu_l}=k+1$. Similarly, we have  $k -1 \in [\max\{m_l-2,m_r\},\min\{M_l-2,M_r\}]$ and therefore $k-1 \in [m_r,M_r]$. Hence there exists a rectilinear planar representation $H_{\mu_r}$ of $G_{\mu_r}$ with spirality $\sigma_{\mu_r} = k-1$. By the representability condition $G_\nu$ has a rectilinear planar representation $H_\nu$; also, following the same construction as in the proof of Lemma~\ref{le:P-2-children-support-type1}, the spirality of $H_\nu$ is $\sigma_\nu=k$.	
\end{proof}

\subsection*{Proof of Lemma~\ref{le:P-2-children-representability-type2}}

We first prove the following result.

\begin{restatable}{lemma}{itwotypetwosupport}\label{le:P-2-children-support-type2}
	Let $G_\nu$ be a P-node of type \Pio{3d}{\alpha\beta} and let $\mu_l$ and $\mu_r$ be its two children. $G_\nu$ is rectilinear planar if and only if $G_{\mu_l}$ and $G_{\mu_r}$ are rectilinear planar for values of spiralities $\sigma_{\mu_l}$ and $\sigma_{\mu_r}$, respectively, such that $\sigma_{\mu_l}-\sigma_{\mu_r}\in[\frac{5}{2},\frac{7}{2}-\gamma]$, where $\gamma = \alpha + \beta - 2$.
\end{restatable}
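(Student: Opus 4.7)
The plan is to closely mirror the structure of the proof of Lemma~\ref{le:P-2-children-support-type1}. The key identity, obtained by equating the two expressions for $\sigma_\nu$ in Lemma~\ref{le:spirality-P-node-2-children}, is
\begin{equation*}
\sigma_{\mu_l}-\sigma_{\mu_r} \;=\; k_u^l \alpha_u^l + k_v^l \alpha_v^l + k_u^r \alpha_u^r + k_v^r \alpha_v^r.
\end{equation*}
My strategy is to enumerate the four subtypes \Pio{3l}{11}, \Pio{3r}{11}, \Pio{3l}{12}, \Pio{3r}{12}, to read off the values of $k_w^d$ from the in/out-degrees of the poles in $G_\nu$ and in the children, and to read off the admissible values of $\alpha_w^d$ from the degrees of the poles in $G$.

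For a \Pio{3d}{\alpha\beta} component one pole, call it $w_3$, has $\indeg_\nu(w_3)=3$ and $\indeg_{\mu_d}(w_3)=2$, while the other pole $w_2$ has $\indeg_\nu(w_2)=2$ and $\indeg_{\mu_l}(w_2)=\indeg_{\mu_r}(w_2)=1$. The prohibition of $\alpha=\beta=2$, together with the degree-$4$ bound, forces $\outdeg_\nu(w_3)=1$; hence $\deg(w_3)=4$ and therefore $\alpha_{w_3}^l=\alpha_{w_3}^r=1$, while the definition of $k_w^d$ gives $k_{w_3}^d=\frac{1}{2}$ and $k_{w_3}^{\bar{d}}=1$. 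For $w_2$, either $\outdeg_\nu(w_2)=1$ (when $\gamma=0$), yielding $k_{w_2}^l=k_{w_2}^r=1$, $\deg(w_2)=3$, and the constraint $\alpha_{w_2}^l+\alpha_{w_2}^r\in\{1,2\}$ (an all-zero choice would make the third angle at $w_2$ degenerate), or $\outdeg_\nu(w_2)=2$ (when $\gamma=1$), yielding $k_{w_2}^l=k_{w_2}^r=\frac{1}{2}$, $\deg(w_2)=4$, and $\alpha_{w_2}^l=\alpha_{w_2}^r=1$.

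For necessity, I substitute these quantities into the identity. Assuming $d=l$ without loss of generality, I obtain $\sigma_{\mu_l}-\sigma_{\mu_r}=\frac{3}{2}+k_{w_2}^l\alpha_{w_2}^l+k_{w_2}^r\alpha_{w_2}^r$; in the subcase $\gamma=0$ the admissible pairs produce the two values $\frac{5}{2}$ and $\frac{7}{2}$, while in the subcase $\gamma=1$ the forced $(1,1)$-assignment produces the single value $\frac{5}{2}$. In both subcases the difference lies in $[\frac{5}{2},\frac{7}{2}-\gamma]$, interpreted under the parity convention recalled at the beginning of Section~\ref{sse:Q-S-P3-representability}; the case $d=r$ is fully symmetric and leads to the same range.

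For sufficiency, I start from any $H_{\mu_l},H_{\mu_r}$ with $\sigma_{\mu_l}-\sigma_{\mu_r}\in[\frac{5}{2},\frac{7}{2}-\gamma]$ and reassemble $H_\nu$ by parallel composition, fixing the four angle variables as dictated by the identity: when $\gamma=1$ all four $\alpha$'s are forced to $1$ and the identity automatically balances, whereas when $\gamma=0$ I pick the pair $(\alpha_{w_2}^l,\alpha_{w_2}^r)$ matching the value of $\sigma_{\mu_l}-\sigma_{\mu_r}$, an assignment always compatible with the degree-$3$ constraint at $w_2$. I then verify, exactly as in the proof of Lemma~\ref{le:P-2-children-support-type1}, that for every cycle $C$ of $H_\nu$ through the poles the quantity $a_C$ counting interior $90^\circ$ minus $270^\circ$ angles equals $4$, while cycles not through the poles remain orthogonal polygons inherited from $H_{\mu_l}$ or $H_{\mu_r}$. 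The main obstacle I anticipate is bookkeeping: the four subtypes differ only in which pole plays the role of $w_3$ and in which child carries the double incidence, so the computation is essentially a single analysis up to relabelling, but I must ensure in every subtype that the degree-$3$ pole's freedom suffices to realise every target value of the difference without clashing with the forced angles at the degree-$4$ pole.
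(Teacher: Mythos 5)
Your proposal is correct and follows essentially the same route as the paper's proof: necessity via Lemma~\ref{le:spirality-P-node-2-children} with the values of $k_w^d$ and the admissible $\alpha_w^d$ read off from the pole degrees, and sufficiency by parallel composition plus the cycle-angle count of Lemma~\ref{le:P-2-children-support-type1}; your only streamlining is treating the four subtypes uniformly via $w_3$, $w_2$ and $\gamma$, where the paper writes out the cases (two of them by symmetry). The one point to be careful about, which you already anticipated as ``bookkeeping,'' is that the cycle count is not literally identical to the type-1 computation: at the indegree-3 pole the relevant alias vertex lies on an edge of the other child, so the turn count along the corresponding portion of the cycle is $\sigma_{\mu_d}\pm\tfrac{1}{2}$ rather than $\sigma_{\mu_d}$, which is exactly the half-unit correction the paper carries through to obtain $a_C=4$.
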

\begin{proof}
	We distinguish four cases, based on the values of $\alpha$, $\beta$, and $d$.
	
	\noindent{\bf Case 1: $\alpha=\beta=1$, $d=l$.}  In this case $G_\nu$ is of type \Pio{3l}{11} and we prove that $G_\nu$ is rectilinear planar if and only if $G_{\mu_l}$ and $G_{\mu_r}$ are rectilinear planar for values of spiralities $\sigma_{\mu_l}$ and $\sigma_{\mu_r}$ such that $\sigma_{\mu_l}-\sigma_{\mu_r}\in[\frac{5}{2},\frac{7}{2}]$. For a \Pio{3l}{11} component we have $k_u^l=k_u^r=k_v^r=1$ and $k_v^l=\frac{1}{2}$. 
	
	If $G_\nu$ is rectilinear planar, we have $1 \leq \alpha_{u}^l+\alpha_{u}^r \leq 2$ and $\alpha_{v}^l=\alpha_{v}^r=1$ in any rectilinear planar representation of $G_\nu$. Hence, by Lemma~\ref{le:spirality-P-node-2-children}, for any value of spirality $\sigma_\nu$  we have $\sigma_{\mu_l}-\sigma_{\mu_r}=\alpha_{u}^l+\frac{1}{2}\alpha_{v}^l+\alpha_{u}^r+\alpha_{v}^r \in [\frac{5}{2},\frac{7}{2}]$. 
	
	Suppose vice versa that $G_{\mu_l}$ and $G_{\mu_r}$ are rectilinear planar for values of spirality $\sigma_{\mu_l}$ and $\sigma_{\mu_r}$ such that $\sigma_{\mu_l}-\sigma_{\mu_r}\in[\frac{5}{2},\frac{7}{2}]$. We show that $G_\nu$ admits a rectilinear planar representation $H_\nu$. To define $H_\nu$, we combine in parallel the two rectilinear planar representations of $G_{\mu_l}$ and $G_{\mu_r}$ and suitably assign the values of $\alpha_u^l$ and $\alpha_u^r$, depending on the value of $\sigma_{\mu_l}-\sigma_{\mu_r}$. 
	
	Let $v'$ be the alias vertex of $G_{\mu_l}$ that is in $G_\nu$. Any cycle $C$ that goes through $u$ and $v$ also passes through $v'$. We show that the number of $90^\circ$ angles minus the number of $270^\circ$ angles in the interior of $C$ is equal to four.

	Vertices $u$ and $v'$ split $C$ into two paths $\pi_l$ and $\pi_r$. Suppose to visit $C$ clockwise. The number of right turns minus left turns along $\pi_l$ while going from $u$ to $v'$ equals $\sigma_{\mu_l}+\frac{1}{2}$.  The number of right turns minus left turns along $\pi_r$ while going from $v'$ to $u$ equals $-\sigma_{\mu_r}$. Hence, the sum $\sigma_{\mu_l}+\frac{1}{2}-\sigma_{\mu_r}+2-\alpha_u^r-\alpha_u^l$ corresponds to the number of $90^\circ$ angles minus the number of $270^\circ$ angles in the interior of $C$ at the vertices of $\pi_l$. Notice that $\alpha_u^r+\alpha_u^l\in \{1,2\}$ since $u$ is a vertex of degree 3. If $\sigma_{\mu_l}-\sigma_{\mu_r}=\frac{5}{2}$ we set $\alpha_u^r+\alpha_u^l=1$ and we have $\sigma_{\mu_l}+\frac{1}{2}-\sigma_{\mu_r}+2-\alpha_u^r-\alpha_u^l=\frac{5}{2}+\frac{1}{2}+2-1=4$. Else, if $\sigma_{\mu_l}-\sigma_{\mu_r}=\frac{7}{2}$ we set $\alpha_u^r+\alpha_u^l=2$ and we have $\sigma_{\mu_l}+\frac{1}{2}-\sigma_{\mu_r}+2-\alpha_u^r-\alpha_u^l=\frac{7}{2}+\frac{1}{2}+2-2=4$.

	Also, any other cycle not passing through $u$ and $v$ is an orthogonal polygon because it belongs to a rectilinear planar representation of either $G_{\mu_l}$ (with spirality $\sigma_{\mu_l}$) or $G_{\mu_r}$ (with spirality $\sigma_{\mu_r}$). 
	
	\noindent{\bf Case 2: $\alpha=1$, $\beta=2$, $d=l$.} In this case $G_\nu$ is of type \Pio{3l}{12} and we prove that $G_\nu$ is rectilinear planar if and only if $G_{\mu_l}$ and $G_{\mu_r}$ are rectilinear planar for values of spiralities $\sigma_{\mu_l}$ and $\sigma_{\mu_r}$ such that $\sigma_{\mu_l}-\sigma_{\mu_r}=\frac{5}{2}$ (note that this corresponds to the interval $[\frac{5}{2},\frac{7}{2}-\gamma]$ claimed in the lemma). For a \Pio{3l}{12} component we have $k_u^l=k_u^r=k_v^l=\frac{1}{2}$ and $k_v^r=1$.

	If $G_\nu$ is rectilinear planar, we have $ \alpha_{u}^l=\alpha_{u}^r=\alpha_{v}^l=\alpha_{v}^r=1$ in any rectilinear planar representation of $G_\nu$. Hence, by Lemma~\ref{le:spirality-P-node-2-children}, for any value of spirality $\sigma_\nu$  we have $\sigma_{\mu_l}-\sigma_{\mu_r}=\frac{1}{2}\alpha_{u}^l+\frac{1}{2}\alpha_{v}^l+\frac{1}{2}\alpha_{u}^r+\alpha_{v}^r =\frac{5}{2}$. 
	
	Suppose vice versa that $G_{\mu_l}$ and $G_{\mu_r}$ are rectilinear planar for values of spirality $\sigma_{\mu_l}$ and $\sigma_{\mu_r}$ such that $\sigma_{\mu_l}-\sigma_{\mu_r}=\frac{5}{2}$. We show that $G_\nu$ admits a rectilinear planar representation $H_\nu$. To define $H_\nu$, we combine in parallel the two rectilinear planar representations of $G_{\mu_l}$ and $G_{\mu_r}$ and assign values  $\alpha_u^l=\alpha_v^l=\alpha_u^r=\alpha_v^r=1$. 
	Let $v'$ be the alias vertex of $G_{\mu_l}$ that is in $G_\nu$. Any cycle $C$ that goes through $u$ and $v$ also passes through $v'$. We show that the number of $90^\circ$ angles minus the number of $270^\circ$ angles in the interior of $C$ is equal to four.
	
	Vertices $u$ and $v'$ split $C$ into two paths $\pi_l$ and $\pi_r$. Suppose to visit $C$ clockwise. The number of right turns minus left turns along $\pi_l$ while going from $u$ to $v'$ equals $\sigma_{\mu_l}+\frac{1}{2}$.  The number of right turns minus left turns along $\pi_r$ while going from $v'$ to $u$ equals $-\sigma_{\mu_r}$. Also, pole $u$ forms a $90^\circ$ angle inside $C$. Hence, the sum $\sigma_{\mu_l}+\frac{1}{2}-\sigma_{\mu_r}+1$ corresponds to the number of $90^\circ$ angles minus the number of $270^\circ$ angles in the interior of $C$ at the vertices of $\pi_l$. Since $\sigma_{\mu_l}-\sigma_{\mu_r}=\frac{5}{2}$ we have $\sigma_{\mu_l}+\frac{1}{2}-\sigma_{\mu_r}+1=\frac{5}{2}+\frac{1}{2}+1=4$.
	
	Also, any other cycle not passing through $u$ and $v$ is an orthogonal polygon because it belongs to a rectilinear planar representation of either $G_{\mu_l}$ (with spirality $\sigma_{\mu_l}$) or $G_{\mu_r}$ (with spirality $\sigma_{\mu_r}$).

	\noindent{\bf Case 3: $\alpha=\beta=1$, $d=r$.} This case is symmetric to Case 1.
	
	\noindent{\bf Case 4: $\alpha=1$, $\beta=2$, $d=r$.} This case is symmetric to Case 2. 
\end{proof}

\itwotypetwo*
\begin{proof}
	We first prove the correctness of the representability condition and then the
	the validity of the representability interval.
	
	\smallskip\noindent\textsf{Representability condition.}
	Suppose that $G_\nu$ is rectilinear planar. By Lemma~\ref{le:P-2-children-support-type2}, there exist rectilinear planar representations for $G_{\mu_l}$ and $G_{\mu_r}$ with spiralities $\sigma_{\mu_l}$ and $\sigma_{\mu_r}$, respectively, such that $\sigma_{\mu_l}-\sigma_{\mu_r} \in [\frac{5}{2},\frac{7}{2}-\gamma]$, where $\gamma = \alpha + \beta -2$. Hence, $m_l-M_r \leq \sigma_{\mu_l}-\sigma_{\mu_r} \leq \frac{7}{2}-\gamma$ and $M_l-m_r \geq \sigma_{\mu_l}-\sigma_{\mu_r} \geq \frac{5}{2}$, i.e.,  $[m_l-M_r,M_l-m_r] \cap [\frac{5}{2},\frac{7}{2}-\gamma] \neq \emptyset$.
	
	Suppose, vice versa that $[m_l-M_r,M_l-m_r] \cap [\frac{5}{2},\frac{7}{2}-\gamma] \neq \emptyset$. By hypothesis $G_{\mu_l}$ (resp. $G_{\mu_r}$) is rectilinear planar for every value of spirality in the interval $[m_l,M_l]$ (resp. $[m_r,M_r]$). This implies that for every semi-integer value $k$ in the interval $[m_l-M_r, M_l-m_r]$, there exist rectilinear planar representations for $G_{\mu_l}$ and $G_{\mu_r}$ with spiralities $\sigma_{\mu_l}$ and $\sigma_{\mu_r}$ such that $\sigma_{\mu_l}-\sigma_{\mu_r} = k$. Since by hypothesis there exists a value $k \in [m_l-M_r,M_l-m_r] \cap [\frac{5}{2},\frac{7}{2}-\gamma]$, there must be two values of spiralities $\sigma_{\mu_l}$ and $\sigma_{\mu_r}$ for the representations of $G_{\mu_l}$ and $G_{\mu_r}$ such that $\sigma_{\mu_l}-\sigma_{\mu_r} = k \in [\frac{5}{2},\frac{7}{2}-\gamma]$. Hence, by Lemma~\ref{le:P-2-children-support-type2} $G_\nu$ is rectilinear planar.
	
	\smallskip\noindent\textsf{Representability interval.} 
	We analyze four cases, based on the values of $\alpha$, $\beta$, and $d$ and we assume, w.l.o.g., that $v$ is the pole of degree four.
	
	\noindent{\bf Case 1: $\alpha=\beta=1$, $d=l$.}  In this case $G_\nu$ is of type \Pio{3l}{11} and we prove that $I_\nu = [\max\{m_l-\frac{3}{2},m_r+1\},\min\{M_l-\frac{1}{2}, M_r+2\}]$.
	
	Assume first that $\sigma_\nu$ is the spirality of a rectilinear planar representation of $G_\nu$.  Since for a \Pio{3l}{11} component we have $k_u^l=k_u^r=k_v^r=1$ and $k_v^l=\frac{1}{2}$, by Lemma~\ref{le:spirality-P-node-2-children} we 
	have $\sigma_\nu = \sigma_{\mu_r} + \alpha_{u}^r + \alpha_{v}^r$ and $\sigma_\nu = \sigma_{\mu_l} - \alpha_{u}^l - \frac{1}{2}\alpha_{v}^l$. Since $\alpha_u^l + \alpha_u^r \in \{1,2\}$ and $\alpha_v^l=\alpha_v^r=1$, we have:
	$\sigma_\nu \geq m_r + 1$, $\sigma_\nu \leq M_r+2$, $\sigma_\nu \geq m_l-\frac{3}{2}$, and $\sigma_\nu \leq M_l-\frac{1}{2}$.

	We now show that, if $\sigma_\nu \in  [\max\{m_l-\frac{3}{2},m_r+1\},\min\{M_l-\frac{1}{2},M_r+2\}]$, there exists a rectilinear planar representation of $G_\nu$ with spirality $\sigma_\nu$. We have $\sigma_\nu \in [m_l-\frac{3}{2}, M_l-\frac{1}{2}]$.
	Hence, $\sigma_\nu + \frac{1}{2} \leq M_l$ and $\sigma_\nu + \frac{3}{2} \geq m_l$, i.e., $[\sigma_\nu+\frac{1}{2},\sigma_{\nu}+\frac{3}{2}]\cap [m_l,M_l]\not = \emptyset$. Also, since $m_l$ and $M_l$ are both semi-integer numbers while $\sigma_\nu$ is integer, it is impossible to have $\sigma_\nu + 1 = m_l = M_l$. It follows that $\sigma_\nu+\frac{1}{2}\in [m_l,M_l]$ or $\sigma_\nu+\frac{3}{2}\in [m_l,M_l]$. With the same reasoning, we have $\sigma_\nu \in [m_r+1, M_r+2]$ and $[\sigma_\nu-2,\sigma_{\nu}-1]\cap [m_r,M_r]\not = \emptyset$. Hence, $\sigma_\nu-2\in [m_r,M_r]$ or $\sigma_\nu-1\in [m_r,M_r]$. 
	We now prove the following.
	
	\begin{claim}\label{cl:either3/2and2-L}
		Either $\sigma_\nu+\frac{3}{2}\in [m_l,M_l]$ or $\sigma_\nu-2 \in [m_r,M_r]$.
	\end{claim}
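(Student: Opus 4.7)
\smallskip\noindent\textit{Proof proposal.} My plan is to argue by contradiction, leveraging the two dichotomies already established immediately before the claim together with the representability condition of a \Pio{3l}{11} node. Specifically, suppose that $\sigma_\nu + \frac{3}{2} \notin [m_l, M_l]$ and simultaneously $\sigma_\nu - 2 \notin [m_r, M_r]$. The aim is to derive a numerical constraint on $M_l - m_r$ that contradicts the representability condition $[m_l-M_r, M_l-m_r] \cap [\frac{5}{2},\frac{7}{2}] \neq \emptyset$ (here $\gamma = 0$).

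From the observations established just above the claim, the contradiction assumption forces $\sigma_\nu + \frac{1}{2} \in [m_l, M_l]$ and $\sigma_\nu - 1 \in [m_r, M_r]$. I would then use parity/granularity to pin down $M_l$ and $m_r$ exactly. Since $\sigma_\nu$ is an integer while $m_l, M_l$ are semi-integers, the fact that $\sigma_\nu + \frac{1}{2} \in [m_l, M_l]$ but $\sigma_\nu + \frac{3}{2} \notin [m_l, M_l]$ (together with the lower bound $m_l \leq \sigma_\nu + \frac{1}{2}$) forces $M_l = \sigma_\nu + \frac{1}{2}$. Symmetrically, since $m_r, M_r$ are integers, $\sigma_\nu - 1 \in [m_r, M_r]$ combined with $\sigma_\nu - 2 < m_r$ yields $m_r = \sigma_\nu - 1$.

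Combining these two equalities gives $M_l - m_r = (\sigma_\nu + \frac{1}{2}) - (\sigma_\nu - 1) = \frac{3}{2}$. This is strictly less than $\frac{5}{2}$, so $[m_l-M_r, M_l-m_r] \subseteq (-\infty,\frac{3}{2}]$ cannot meet the interval $[\frac{5}{2},\frac{7}{2}]$, contradicting the representability condition for a node of type \Pio{3l}{11} that we are, by hypothesis of the lemma, assuming to hold. This contradiction closes the proof.

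I do not expect any serious obstacle here: the argument is a short case-chase. The only delicate point is to remember the granularity of the intervals involved (integer vs.\ semi-integer endpoints), which is what turns ``$\sigma_\nu + \frac{3}{2} \notin [m_l,M_l]$ and $\sigma_\nu + \frac{1}{2} \in [m_l, M_l]$'' into the sharp equality $M_l = \sigma_\nu + \frac{1}{2}$ rather than the weaker inequality $M_l < \sigma_\nu + \frac{3}{2}$. Once that step is done carefully on both sides, the bound $M_l - m_r = \frac{3}{2}$ falls out and directly collides with the representability condition.
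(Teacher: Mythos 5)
Your proposal is correct and follows essentially the same route as the paper's own proof: assume both memberships fail, invoke the two dichotomies to get $\sigma_\nu+\frac{1}{2}\in[m_l,M_l]$ and $\sigma_\nu-1\in[m_r,M_r]$, pin down $M_l=\sigma_\nu+\frac{1}{2}$ and $m_r=\sigma_\nu-1$ via the integer/semi-integer granularity, and contradict the representability condition through $M_l-m_r=\frac{3}{2}<\frac{5}{2}$. Your explicit handling of the granularity step is in fact slightly more detailed than the paper's ``Consequently''.
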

	\begin{proof}
		Suppose by contradiction that $\sigma_\nu+\frac{3}{2}\not \in [m_l,M_l]$ and $\sigma_\nu-2\not \in [m_r,M_r]$. In that case $\sigma_\nu+\frac{1}{2}\in [m_l,M_l]$ and $\sigma_\nu-1 \in [m_r,M_r]$. Consequently, $\sigma_\nu+\frac{1}{2}=M_l$ and $\sigma_\nu-1=m_r$. Hence, $M_l-m_r=\frac{3}{2}$ and,  by the representability condition, $G_\nu$ is not rectilinear planar. This is a contradiction. Hence, either $\sigma_\nu+\frac{3}{2}\in [m_l,M_l]$ or $\sigma_\nu-2 \in [m_r,M_r]$. 
	\end{proof}
	
	We can construct a rectilinear planar representation $H_{\mu_l}$ of $G_{\mu_l}$ with spirality $\sigma_{\mu_l}$ and a rectilinear planar representation $H_{\mu_r}$ of $G_{\mu_r}$ with spirality $\sigma_{\mu_r}$, based on the following cases:
	\begin{itemize}
		\item \textbf{Case (a):} $\sigma_\nu+\frac{3}{2}\not \in [m_l,M_l]$. 
		This implies that $\sigma_\nu+\frac{1}{2}\in [m_l,M_l]$ and $\sigma_\nu-2\in [m_r,M_r]$, and therefore we set 
		$\sigma_{\mu_l}=\sigma_\nu+\frac{1}{2}$ and $\sigma_{\mu_r}=\sigma_\nu-2$.
		
		\item \textbf{Case (b):} $\sigma_\nu-2\not \in [m_r,M_r]$. 
		This implies that $\sigma_\nu+\frac{3}{2}\in [m_l,M_l]$ and $\sigma_\nu-1\in [m_r,M_r]$, and therefore we set 
		$\sigma_{\mu_l}=\sigma_\nu+\frac{3}{2}$ and $\sigma_{\mu_r}=\sigma_\nu-1$.
		
		\item \textbf{Case (c):} $\sigma_\nu+\frac{3}{2}\in [m_l,M_l]$ and $\sigma_\nu-2\in [m_r,M_r]$.
		We set $\sigma_{\mu_l}=\sigma_\nu+\frac{3}{2}$ and $\sigma_{\mu_r}=\sigma_\nu-2$.
	\end{itemize} 
	
	By the claim proved above either the condition of Case~(a), Case~(b), or Case~(c) is verified. Notice that in all the three cases we have $\sigma_{\mu_l}-\sigma_{\mu_r}\in [\frac{5}{2},\frac{7}{2}]$, hence, there exists a rectilinear planar representation $H_\nu$ of $G_\nu$ given the values of $\sigma_{\mu_l}$ and $\sigma_{\mu_r}$ described in the three cases. We have to prove that in the three cases the spirality of $H_\nu$ is $\sigma_\nu$. 
	By Lemma~\ref{le:spirality-P-node-2-children} we have $\sigma_\nu' = \sigma_{\mu_l} - \alpha_{u}^l - \frac{1}{2} \alpha_{v}^l$, where $\sigma_\nu'$ is the spirality of the representation $H_\nu$ of $G_\nu$ given a choice of $\sigma_{\mu_l}$, $\alpha_{v}^l$, and $\alpha_{v}^r$. In Case~(a) we have $\sigma_\nu' = \sigma_\nu+\frac{1}{2} - \alpha_{u}^l - \frac{1}{2} \alpha_{v}^l$. By choosing $\alpha_{u}^l=0$ and  $\alpha_{v}^l=1$ we have  $\sigma_\nu' = \sigma_\nu$. In Case~(b) and in Case~(c)  we have $\sigma_\nu' = \sigma_\nu+\frac{3}{2} - \alpha_{u}^l - \frac{1}{2} \alpha_{v}^l$. By choosing $\alpha_{u}^l=1$ and  $\alpha_{v}^l=1$ we have  $\sigma_\nu' = \sigma_\nu$.
	
	\smallskip
	\noindent{\bf Case 2: $\alpha=1$, $\beta=2$, $d=l$.}  In this case $G_\nu$ is of type \Pio{3l}{12} and we prove that $I_\nu = [\max\{m_l-\frac{3}{2},m_r+1\}+\frac{1}{2},\min\{M_l-\frac{1}{2}, M_r+2\}-\frac{1}{2}] = [\max\{m_l-1,m_r+\frac{3}{2}\},\min\{M_l-1, M_r+\frac{3}{2}\}$.
	
	Assume first that $\sigma_\nu$ is the spirality of a rectilinear planar representation of $G_\nu$.  Since for a \Pio{3l}{12} component we have $k_u^l=k_u^r=k_v^l=\frac{1}{2}$ and $k_v^r=1$, by Lemma~\ref{le:spirality-P-node-2-children} we 
	have $\sigma_\nu = \sigma_{\mu_r} + \frac{1}{2}\alpha_{u}^r + \alpha_{v}^r$ and $\sigma_\nu = \sigma_{\mu_l} - \frac{1}{2}\alpha_{u}^l - \frac{1}{2}\alpha_{v}^l$. Since $\alpha_u^l=\alpha_v^l=\alpha_u^r=\alpha_v^r=1$, we have:
	$\sigma_\nu \geq m_r + \frac{3}{2}$, $\sigma_\nu \leq M_r+\frac{3}{2}$, $\sigma_\nu \geq m_l-1$, and $\sigma_\nu \leq M_l-1$.

	We now show that, if $\sigma_\nu \in [\max\{m_l-1,m_r+\frac{3}{2}\},\min\{M_l-1,M_r+\frac{3}{2}\}]$, there exists a rectilinear planar representation of $G_\nu$ with spirality $\sigma_\nu$. 	We have $\sigma_\nu \in [m_l-1, M_l-1]$ and $\sigma_\nu \in [m_r+\frac{3}{2}, M_r+\frac{3}{2}]$. Hence, $\sigma_\nu+1 \in [m_l, M_l]$ and $\sigma_\nu-\frac{3}{2} \in [m_r, M_r]$.  We can construct a rectilinear planar representation $H_{\mu_l}$ of $G_{\mu_l}$ with spirality $\sigma_{\mu_l}=\sigma_\nu+1$ and a rectilinear planar representation $H_{\mu_r}$ of $G_{\mu_r}$ with spirality $\sigma_{\mu_r}=\sigma_\nu-\frac{3}{2}$. Notice that, for this choice, we have $\sigma_{\mu_l}-\sigma_{\mu_r}=\frac{5}{2}$, hence, there exists a rectilinear planar representation $H_\nu$ of $G_\nu$ given the values of $\sigma_{\mu_l}$ and $\sigma_{\mu_r}$. We have to prove that the spirality of $H_\nu$ is $\sigma_\nu$.
	By Lemma~\ref{le:spirality-P-node-2-children} we have $\sigma_\nu' = \sigma_{\mu_l} - \frac{1}{2}\alpha_{u}^l - \frac{1}{2}\alpha_{v}^l$, where $\sigma_\nu'$ is the spirality of the representation $H_\nu$ of $G_\nu$ given a choice of $\sigma_{\mu_l}$, $\alpha_{v}^l$, and $\alpha_{v}^r$. Since $\sigma_\nu=\sigma_{\mu_l}-1$, $\alpha_{v}^l=1$, and $\alpha_{v}^r=1$, we have  $\sigma_\nu' = \sigma_\nu+1-1=\sigma_\nu$.
	
	\smallskip\noindent{\bf Case 3: $\alpha=\beta=1$, $d=r$}. This case is symmetric to Case~1.
	
	\smallskip\noindent{\bf Case 4: $\alpha=1$, $\beta=2$, $d=r$}. This case is symmetric to Case~2.  
	
\end{proof}

\subsection*{Proof of Lemma~\ref{le:P-2-children-representability-type3}}

We first prove the following result.

\begin{restatable}{lemma}{itwotypethreesupport}\label{le:P-2-children-support-type3}
	Let $G_\nu$ be a P-node of type \Pin{3dd'} and let $\mu_l$ and $\mu_r$ be its two children. 
	$G_\nu$ is rectilinear planar if and only if $G_{\mu_l}$ and $G_{\mu_r}$ are rectilinear planar for values of spiralities $\sigma_{\mu_l}$ and $\sigma_{\mu_r}$, respectively, such that $\sigma_{\mu_l}-\sigma_{\mu_r}=3$.
\end{restatable}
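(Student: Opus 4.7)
The plan is to prove both directions by mirroring the structure of Lemmas~\ref{le:P-2-children-support-type1} and~\ref{le:P-2-children-support-type2}, splitting into three subcases according to the subtype $dd' \in \{ll, lr, rr\}$.

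For the forward direction, I would first observe that in any rectilinear representation of $G_\nu$ both poles $u$ and $v$ have degree four (indegree $3$ inside $G_\nu$ plus outdegree $1$), so all four angles around each pole must be $90^\circ$. This forces $\alpha_u^l = \alpha_u^r = \alpha_v^l = \alpha_v^r = 1$. Applying Lemma~\ref{le:spirality-P-node-2-children} with the $k$-values determined by the subtype then yields $\sigma_{\mu_l} - \sigma_{\mu_r} = k_u^l + k_v^l + k_u^r + k_v^r$, and a direct substitution gives exactly $3$ in each subtype: for \Pin{3ll} it is $\tfrac{1}{2} + \tfrac{1}{2} + 1 + 1$, for \Pin{3lr} it is $\tfrac{1}{2} + 1 + 1 + \tfrac{1}{2}$, and for \Pin{3rr} it is $1 + 1 + \tfrac{1}{2} + \tfrac{1}{2}$.

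For the backward direction, given representations $H_{\mu_l}$ of $G_{\mu_l}$ and $H_{\mu_r}$ of $G_{\mu_r}$ whose spiralities differ by $3$, I would combine them in parallel, necessarily assigning $\alpha_u^l = \alpha_u^r = \alpha_v^l = \alpha_v^r = 1$ (the only admissible choice at a degree-four pole), and then verify that every face of the resulting representation $H_\nu$ is an orthogonal polygon. Cycles avoiding both poles lie entirely inside $H_{\mu_l}$ or $H_{\mu_r}$ and are thus already orthogonal by assumption. For a cycle $C$ through $u$ and $v$, split by the poles into subpaths $\pi_l \subseteq G_{\mu_l}$ and $\pi_r \subseteq G_{\mu_r}$, I would traverse $C$ clockwise and express the interior count \#$90^\circ$ minus \#$270^\circ$ as the sum of three quantities: the net right-minus-left turns along $\pi_l$ (equal to $\sigma_{\mu_l}$ up to a correction of $\tfrac{1}{2}$ for each pole at which the $\mu_l$-side alias vertex falls on $\pi_l$), the analogous quantity along $\pi_r$ (related to $-\sigma_{\mu_r}$), and the two interior $90^\circ$ angles at $u$ and $v$. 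A per-subtype summation shows that the half-integer corrections total exactly $1$ in each case, giving $(\sigma_{\mu_l} - \sigma_{\mu_r}) + 1 = 4$ as required.

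The main obstacle is the bookkeeping of these half-integer contributions from the alias vertices on the indegree-two sides. The placement of the two degree-two attachments, dictated by $d$ and $d'$, determines which pole contributes its $\tfrac{1}{2}$ on which side, so each subtype requires its own careful tally; the argument must be precise enough that the half-integer pieces combine with the integer turn counts to total exactly $4$, yet uniform enough to handle \Pin{3ll}, \Pin{3lr}, and \Pin{3rr} without excessive case repetition.
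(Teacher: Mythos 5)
Your proposal is correct and follows essentially the same route as the paper: the forward direction applies Lemma~\ref{le:spirality-P-node-2-children} with all pole angles forced to $90^\circ$ at the degree-four poles, and the backward direction combines the two representations in parallel and checks that every cycle through the poles has interior turn count $(\sigma_{\mu_l}-\sigma_{\mu_r})+1=4$, with the per-subtype half-integer tally you describe. The only cosmetic difference is that the paper splits such a cycle at the alias vertices rather than at the poles, which packages the same $\tfrac{1}{2}$-corrections into the spine counts (e.g.\ $\sigma_{\mu_l}+1$ and $-\sigma_{\mu_r}$ for \Pin{3ll}, $\sigma_{\mu_l}+\tfrac{1}{2}$ and $-\sigma_{\mu_r}+\tfrac{1}{2}$ for \Pin{3lr}).
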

\begin{proof}
	We distinguish three cases, based on the values of $d$ and $d'$. Note that the proof for \Pin{3rl} is symmetric to the proof for \Pin{3lr}.
	
	\noindent{\bf Case 1: $d=d'=l$.}  In this case $G_\nu$ is of type \Pin{3ll} and we prove that $G_\nu$ is rectilinear planar if and only if $G_{\mu_l}$ and $G_{\mu_r}$ are rectilinear planar for values of spiralities $\sigma_{\mu_l}$ and $\sigma_{\mu_r}$ such that $\sigma_{\mu_l}-\sigma_{\mu_r}=3$. For a \Pin{3ll} component we have $k_u^l=k_v^l=\frac{1}{2}$ and $k_u^r=k_v^r=1$. 
	
	If $G_\nu$ is rectilinear planar, we have $\alpha_{u}^l=\alpha_{u}^r =\alpha_{v}^l=\alpha_{v}^r=1$ in any rectilinear planar representation of $G_\nu$. Hence, by Lemma~\ref{le:spirality-P-node-2-children}, for any value of spirality $\sigma_\nu$  we have $\sigma_{\mu_l}-\sigma_{\mu_r}=\frac{1}{2}\alpha_{u}^l+\frac{1}{2}\alpha_{v}^l+\alpha_{u}^r+\alpha_{v}^r=3$. 
	
	Suppose vice versa that $G_{\mu_l}$ and $G_{\mu_r}$ are rectilinear planar for values of spirality $\sigma_{\mu_l}$ and $\sigma_{\mu_r}$ such that $\sigma_{\mu_l}-\sigma_{\mu_r}=3$. We show that $G_\nu$ admits a rectilinear planar representation $H_\nu$. To define $H_\nu$, we combine in parallel the two rectilinear planar representations of $G_{\mu_l}$ and $G_{\mu_r}$ and assign values  $\alpha_u^l=\alpha_v^l=\alpha_u^r=\alpha_v^r=1$. 
	Let $u'$ and $v'$ be the alias vertices of $G_{\mu_l}$ that are in $G_\nu$. Any cycle $C$ that goes through $u$ and $v$ also passes through $u'$ and $v'$. We show that the number of $90^\circ$ angles minus the number of $270^\circ$ angles in the interior of $C$ is equal to four.
	
	Vertices $u'$ and $v'$ split $C$ into two paths $\pi_l$ and $\pi_r$. Suppose to visit $C$ clockwise. The number of right turns minus left turns along $\pi_l$ while going from $u'$ to $v'$ equals $\sigma_{\mu_l}+1$.  The number of right turns minus left turns along $\pi_r$ while going from $v'$ to $u'$ equals $-\sigma_{\mu_r}$. The sum of these two values corresponds to the number of $90^\circ$ angles minus the number of $270^\circ$ angles in the interior of $C$ at the vertices of $\pi_l$. Hence, $\sigma_{\mu_l}+1-\sigma_{\mu_r}=3+1=4$.
	
	Also, any other cycle not passing through $u$ and $v$ is an orthogonal polygon because it belongs to a rectilinear planar representation of either $G_{\mu_l}$ (with spirality $\sigma_{\mu_l}$) or $G_{\mu_r}$ (with spirality $\sigma_{\mu_r}$). 
	
	\noindent{\bf Case 2: $d=d'=r$.} This case is symmetric to Case 1, observing that $k_u^r=k_v^r=\frac{1}{2}$ and $k_u^l=k_v^l=1$.
	
	\noindent{\bf Case 3: $d=l$, $d'=r$.} In this case $G_\nu$ is of type \Pin{3lr} and we prove that $G_\nu$ is rectilinear planar if and only if $G_{\mu_l}$ and $G_{\mu_r}$ are rectilinear planar for values of spiralities $\sigma_{\mu_l}$ and $\sigma_{\mu_r}$ such that $\sigma_{\mu_l}-\sigma_{\mu_r}=3$. For a \Pin{3lr} component we have $k_u^r=k_v^l=\frac{1}{2}$ and $k_u^l=k_v^r=1$. 
	
	If $G_\nu$ is rectilinear planar, we have $\alpha_{u}^l=\alpha_{u}^r =\alpha_{v}^l=\alpha_{v}^r=1$ in any rectilinear planar representation of $G_\nu$. Hence, by Lemma~\ref{le:spirality-P-node-2-children}, for any value of spirality $\sigma_\nu$  we have $\sigma_{\mu_l}-\sigma_{\mu_r}=\alpha_{u}^l+\frac{1}{2}\alpha_{v}^l+\frac{1}{2}\alpha_{u}^r+\alpha_{v}^r=3$.
	
	Suppose vice versa that $G_{\mu_l}$ and $G_{\mu_r}$ are rectilinear planar for values of spirality $\sigma_{\mu_l}$ and $\sigma_{\mu_r}$ such that $\sigma_{\mu_l}-\sigma_{\mu_r}=3$. We show that $G_\nu$ admits a rectilinear planar representation $H_\nu$. To define $H_\nu$, we combine in parallel the two rectilinear planar representations of $G_{\mu_l}$ and $G_{\mu_r}$ and assign values  $\alpha_u^l=\alpha_v^l=\alpha_u^r=\alpha_v^r=1$. 
	%
	Let $v'$ be the alias vertex of the pole $v$ of $G_{\mu_l}$ such that $v'$ is along an edge of $G_\nu$. Similarly, let $u'$ be the alias vertex of the pole $u$ of $G_{\mu_r}$ such that $u'$ is along an edge of $G_\nu$. Any cycle $C$ that goes through $u$ and $v$ also passes through $u'$ and $v'$. We show that the number of $90^\circ$ angles minus the number of $270^\circ$ angles in the interior of $C$ is equal to four.
	
	Vertices $u'$ and $v'$ split $C$ into two paths $\pi_l$ and $\pi_r$. Suppose to visit $C$ clockwise. The number of right turns minus left turns along $\pi_l$ while going from $u'$ to $v'$ equals $\sigma_{\mu_l}+\frac{1}{2}$.  The number of right turns minus left turns along $\pi_r$ while going from $v'$ to $u'$ equals $-\sigma_{\mu_r}+\frac{1}{2}$. The sum of these two values corresponds to the number of $90^\circ$ angles minus the number of $270^\circ$ angles in the interior of $C$ at the vertices of $\pi_l$. Hence, $\sigma_{\mu_l}+\frac{1}{2}-\sigma_{\mu_r}+\frac{1}{2}=3+\frac{1}{2}+\frac{1}{2}=4$.
	
	Also, any other cycle not passing through $u$ and $v$ is an orthogonal polygon because it belongs to a rectilinear planar representation of either $G_{\mu_l}$ (with spirality $\sigma_{\mu_l}$) or $G_{\mu_r}$ (with spirality $\sigma_{\mu_r}$). 
\end{proof}

\itwotypethree*
\begin{proof}
	We first prove the correctness of the representability condition and then the
	the validity of the representability interval.
	
	\smallskip\noindent\textsf{Representability condition.}
	Suppose that $G_\nu$ is rectilinear planar. By Lemma~\ref{le:P-2-children-support-type3}, there exist rectilinear planar representations for $G_{\mu_l}$ and $G_{\mu_r}$ with spiralities $\sigma_{\mu_l}$ and $\sigma_{\mu_r}$, respectively, such that $\sigma_{\mu_l}-\sigma_{\mu_r} = 3$. Hence, $m_l-M_r \leq \sigma_{\mu_l}-\sigma_{\mu_r} \leq 3$ and $M_l-m_r \geq \sigma_{\mu_l}-\sigma_{\mu_r} \geq 3$, i.e.,  $3 \in [m_l-M_r,M_l-m_r]$.
	
	Suppose, vice versa that $3 \in [m_l-M_r,M_l-m_r]$. By hypothesis $G_{\mu_l}$ (resp. $G_{\mu_r}$) is rectilinear planar for every value of spirality in the interval $[m_l,M_l]$ (resp. $[m_r,M_r]$). This implies that there exist rectilinear planar representations for $G_{\mu_l}$ and $G_{\mu_r}$ with spiralities $\sigma_{\mu_l} \in [m_l,M_l]$ and $\sigma_{\mu_r} \in [m_r,M_r]$ such that $\sigma_{\mu_l}-\sigma_{\mu_r} = 3$. Hence, by Lemma~\ref{le:P-2-children-support-type3} $G_\nu$ is rectilinear planar.
	
	\smallskip\noindent\textsf{Representability interval.} We distinguish three cases, based on the values of $d$ and $d'$. Note that a possible forth case for \Pin{3rl} is symmetric to the case \Pin{3lr}. 
	\noindent{\bf Case 1: $d=d'=l$.} In this case $G_\nu$ is of type \Pin{3ll} and we prove that $I_\nu = [\max\{m_l-1,m_r+2\},\min\{M_l-1, M_r+2\}]$.
	
	Assume first that $\sigma_\nu$ is the spirality of a rectilinear planar representation of $G_\nu$.  Since for a \Pin{3ll} component we have $k_u^l=k_v^l=\frac{1}{2}$ and $k_u^r=k_v^r=1$, by Lemma~\ref{le:spirality-P-node-2-children} we 
	have $\sigma_\nu = \sigma_{\mu_r} + \alpha_{u}^r + \alpha_{v}^r$ and $\sigma_\nu =\sigma_{\mu_l} - \frac{1}{2}\alpha_{u}^l - \frac{1}{2}\alpha_{v}^l$. Since $\alpha_u^l=\alpha_v^l=\alpha_u^r=\alpha_v^r=1$, we have:
	$\sigma_\nu \geq m_r + 2$, $\sigma_\nu \leq M_r+2$, $\sigma_\nu \geq m_l-1$, and $\sigma_\nu \leq M_l-1$. 
	
	We now show that, if $\sigma_\nu \in [\max\{m_l-1,m_r+2\},\min\{M_l-1,M_r+2\}]$, there exists a rectilinear planar representation of $G_\nu$ with spirality $\sigma_\nu$. 	We have $\sigma_\nu \in [m_l-1, M_l-1]$ and $\sigma_\nu \in [m_r+2, M_r+2]$. Hence, $\sigma_\nu+1 \in [m_l, M_l]$ and $\sigma_\nu-2 \in [m_r, M_r]$. We can construct a rectilinear planar representation $H_{\mu_l}$ of $G_{\mu_l}$ with spirality $\sigma_{\mu_l}=\sigma_\nu+1$ and a rectilinear planar representation $H_{\mu_r}$ of $G_{\mu_r}$ with spirality $\sigma_{\mu_r}=\sigma_\nu-2$. Notice that, for this choice, we have $\sigma_{\mu_l}-\sigma_{\mu_r}=3$, hence, there exists a rectilinear planar representation $H_\nu$ of $G_\nu$ given the values of $\sigma_{\mu_l}$ and $\sigma_{\mu_r}$. We have to prove that the spirality of $H_\nu$ is $\sigma_\nu$.
	By Lemma~\ref{le:spirality-P-node-2-children} we have $\sigma_\nu' = \sigma_{\mu_l} - \frac{1}{2} \alpha_{u}^l - \frac{1}{2}\alpha_{v}^l$, where $\sigma_\nu'$ is the spirality of the representation $H_\nu$ of $G_\nu$ given a choice of $\sigma_{\mu_l}$, $\alpha_{v}^l$, and $\alpha_{u}^l$. Since $\sigma_\nu=\sigma_{\mu_l}-1$, $\alpha_{u}^l=1$, and $\alpha_{v}^l=1$, we have  $\sigma_\nu' = \sigma_\nu+1-\frac{1}{2}-\frac{1}{2}=\sigma_\nu$.
	
	\smallskip\noindent{\bf Case 2: $d=d'=r$.} This case is symmetric to Case 1.
	
	\smallskip\noindent{\bf Case 3: $d=l$, $d'=r$.}  In this case $G_\nu$ is of type \Pin{3lr} and we prove that $I_\nu = [\max\{m_l-1,m_r+2\}-\frac{1}{2},\min\{M_l-1, M_r+2\}-\frac{1}{2}]=[\max\{m_l-\frac{3}{2},m_r+\frac{3}{2}\},\min\{M_l-\frac{3}{2}, M_r+\frac{3}{2}\}]$.
	
	Assume first that $\sigma_\nu$ is the spirality of a rectilinear planar representation of $G_\nu$.  Since for a \Pin{3lr} component we have $k_u^r=k_v^l=\frac{1}{2}$ and $k_u^l=k_v^r=1$, by Lemma~\ref{le:spirality-P-node-2-children} we 
	have $\sigma_\nu = \sigma_{\mu_r} + \frac{1}{2}\alpha_{u}^r + \alpha_{v}^r$ and $\sigma_\nu =\sigma_{\mu_l} - \alpha_{u}^l - \frac{1}{2}\alpha_{v}^l$. Since $\alpha_u^l=\alpha_v^l=\alpha_u^r=\alpha_v^r=1$, we have:
	$\sigma_\nu \geq m_r + \frac{3}{2}$, $\sigma_\nu \leq M_r+\frac{3}{2}$, $\sigma_\nu \geq m_l-\frac{3}{2}$ and $\sigma_\nu \leq M_l-\frac{3}{2}$.
	
	We now show that, if $\sigma_\nu \in [\max\{m_l-\frac{3}{2},m_r+\frac{3}{2}\},\min\{M_l-\frac{3}{2},M_r+\frac{3}{2}\}]$, there exists a rectilinear planar representation of $G_\nu$ with spirality $\sigma_\nu$. 	We have $\sigma_\nu \in [m_l-\frac{3}{2}, M_l-\frac{3}{2}]$ and $\sigma_\nu \in [m_r+\frac{3}{2}, M_r+\frac{3}{2}]$. Hence, $\sigma_\nu+\frac{3}{2} \in [m_l, M_l]$ and $\sigma_\nu-\frac{3}{2} \in [m_r, M_r]$. We can construct a rectilinear planar representation $H_{\mu_l}$ of $G_{\mu_l}$ with spirality $\sigma_{\mu_l}=\sigma_\nu+\frac{3}{2}$ and a rectilinear planar representation $H_{\mu_r}$ of $G_{\mu_r}$ with spirality $\sigma_{\mu_r}=\sigma_\nu-\frac{3}{2}$. Notice that, for this choice, we have $\sigma_{\mu_l}-\sigma_{\mu_r}=3$, hence, there exists a rectilinear planar representation $H_\nu$ of $G_\nu$ given the values of $\sigma_{\mu_l}$ and $\sigma_{\mu_r}$. We have to prove that the spirality of $H_\nu$ is $\sigma_\nu$.
	By Lemma~\ref{le:spirality-P-node-2-children} we have $\sigma_\nu' = \sigma_{\mu_l} - \alpha_{u}^l - \frac{1}{2}\alpha_{v}^l$, where $\sigma_\nu'$ is the spirality of the representation $H_\nu$ of $G_\nu$ given a choice of $\sigma_{\mu_l}$, $\alpha_{v}^l$, and $\alpha_{u}^l$. Since $\sigma_\nu=\sigma_{\mu_l}-\frac{3}{2}$, $\alpha_{u}^l=1$, and $\alpha_{v}^l=1$, we have $\sigma_\nu' = \sigma_\nu+\frac{3}{2}-1-\frac{1}{2}=\sigma_\nu$.
\end{proof}


\end{document}